\documentclass[10pt,journal,compsoc]{IEEEtran}

% ******************** Packages **************************
\usepackage{balance}
\usepackage{amsmath,amssymb,bm,bbm}
\usepackage[ruled,vlined,linesnumbered]{algorithm2e}
\DontPrintSemicolon

\SetKwProg{fnDef}{def}{}{}
\usepackage{txfonts}
\SetKwComment{tcp}{$\rhd$\, }{}

\usepackage{etoolbox}
\makeatletter
\patchcmd{\@makecaption}
  {\scshape}
  {}
  {}
  {}
  \makeatother

%%%  Footnotes / Endnotes
\interfootnotelinepenalty=10000  % Split footnotes are annoying
  
\usepackage{color}
\usepackage{array}
\usepackage{graphicx}
\usepackage{epstopdf}
\usepackage{xspace}
\usepackage{booktabs}
\usepackage{url}
\usepackage[amssymb]{ktmath}
\usepackage{multirow}
%\usepackage[bf]{caption}
%\usepackage[font=footnotesize]{subcaption}

%%% Overpic
\usepackage[percent]{overpic}

%%%  Lists
%\let\labelindent\relax
\usepackage{enumitem}
%\setlist{itemsep=0pt,parsep=0pt}             % more compact lists

% *** CITATION PACKAGES ***
\usepackage[nocompress]{cite}

% *** SUBFIGURE PACKAGES ***
\usepackage[caption=false,font=footnotesize,labelfont=sf,textfont=sf]{subfig}

% ********************* Commands ***********************
\newcommand*{\eps}{\vareps}
\newtheorem{observation}{Observation}
\newtheorem{problem}{Problem}

\newcommand{\remove}[1]{}
\newcommand{\edist}[2]{\Arrowvert {#1} - {#2} \Arrowvert}

\newcommand{\stream}{{\cal S}}

\newcommand{\coreset}{\texttt{coreset}}
\newcommand{\km}{\texttt{$k$-means}\xspace}

\newcommand{\kmpp}{\texttt{$k$-means++}\xspace}
\newcommand{\skmpp}{\texttt{streamkm++}\xspace}

\newcommand{\major}{\texttt{major}\xspace}
\newcommand{\minor}{\texttt{minor}\xspace}
\newcommand{\prefixsum}{\texttt{prefixsum}\xspace}

\newcommand{\clusterupdate}{\texttt{StreamCluster-Update}\xspace}
\newcommand{\clusterquery}{\texttt{StreamCluster-Query}\xspace}
\newcommand{\object}{\mathcal{H}}

\newcommand{\cstree}{\texttt{CT}\xspace}
\newcommand{\ct}{\texttt{CT}\xspace}
\newcommand{\ctinit}{\texttt{CT-Init}\xspace}
\newcommand{\ctupdate}{\texttt{CT-Update}\xspace}
\newcommand{\ctcoreset}{\texttt{CT-Coreset}\xspace}

\newcommand{\cc}{\texttt{CC}\xspace}

\newcommand{\cctree}{\texttt{CC}\xspace}
\newcommand{\ccupdate}{\texttt{CC-Update}\xspace}
\newcommand{\ccinit}{\texttt{CC-Init}\xspace}
\newcommand{\cccoreset}{\texttt{CC-Coreset}\xspace}

\newcommand{\rcc}{\texttt{RCC}\xspace}
\newcommand{\rccinit}{\texttt{RCC-Init}\xspace}
\newcommand{\rccupdate}{\texttt{RCC-Update}\xspace}
\newcommand{\rcccoreset}{\texttt{RCC-Coreset}\xspace}

\newcommand{\seqkm}{\texttt{Sequential $k$-means}\xspace}
\newcommand{\hybrid}{\texttt{OnlineCC}\xspace}
\newcommand{\hybridinit}{\texttt{\hybrid-Init}\xspace}
\newcommand{\hybridupdate}{\texttt{\hybrid-Update}\xspace}
\newcommand{\hybridquery}{\texttt{\hybrid-Query}\xspace}

%%%
%%%  Spacing hacks
%%%
%% could be 4pt, currently using 6pt base
%\setlength{\textfloatsep}{6pt plus 2.0pt minus 2.0pt}
%\setlength{\intextsep}{6pt plus 1.0pt minus 2.0pt}
%\setlength{\floatsep}{6pt plus 1.0pt minus 2.0pt}
%\setlength{\floatsep}{4pt plus 2.0pt}
%\setlength{\dbltextfloatsep}{6pt plus 1.0pt minus 2.0pt}
%\setlength{\dblfloatsep}{6pt plus 1.0pt minus 2.0pt}
%\setlength{\dblfloatsep}{4pt plus 1.0pt}

%\setlength{\parskip}{0.5pt} 
%\setlength{\parskip}{0.4em}  %% revert back to original IEEE skip
%\setlength{\parskip}{0.5em}  

%% microtype has to be the last thing before begin doc
\usepackage{microtype} %% for less ugly linebreaking

% **************** Document Begins ***************
\begin{document}

% **************** Title *****************
\title{Streaming $k$-Means Clustering with Fast Queries}

% **************** Authors ***************
\author{Yu Zhang,
        Kanat Tangwongsan,
        Srikanta Tirthapura
\IEEEcompsocitemizethanks{
\IEEEcompsocthanksitem Yu Zhang is with the Department of Electrical and Computer Engineering, Iowa State University, Ames, IA, 50011.\protect\\
E-mail: yuz1988@iastate.edu
\IEEEcompsocthanksitem Kanat Tangwongsan is with the Compute Science Program, Mahidol University International College, Thailand.\protect\\
Email: kanat.tan@mahidol.edu
\IEEEcompsocthanksitem Srikanta Tirthapura is with the Department of Electrical and Computer Engineering, Iowa State University, Ames, IA, 50011.\protect\\
E-mail: snt@iastate.edu
}
%\thanks{Manuscript received ... ; revised ....}
}

% **************** Abstract *******************
\IEEEtitleabstractindextext{%
\begin{abstract}
  We present methods for $k$-means clustering on a stream with a focus on
  providing fast responses to clustering queries. Compared to the current
  state-of-the-art, our methods provide substantial improvement in the query
  time for cluster centers while retaining the desirable properties of provably
  small approximation error and low space usage.  Our algorithms rely on a novel
  idea of ``coreset caching'' that systematically reuses coresets (summaries of
  data) computed for recent queries in answering the current clustering
  query. We present both theoretical analysis and detailed experiments
  demonstrating their correctness and efficiency.
\end{abstract}

\begin{IEEEkeywords}
clustering, data stream, coreset, caching.
\end{IEEEkeywords}}
%-----------------------

\maketitle

\IEEEdisplaynontitleabstractindextext
\IEEEpeerreviewmaketitle

%------------------------------------------
\IEEEraisesectionheading{\section{Introduction}\label{sec:intro}}
%------------------------------------------
\IEEEPARstart{C}{lustering} is a fundamental method for understanding and interpreting data that
seeks to partition input objects into groups, known as \emph{clusters}, such
that objects within a cluster are similar to each other, and objects in
different clusters are not.  A clustering formulation called $k$-means is
simple, intuitive, and widely used in practice.  Given a set of points $S$ in a
Euclidean space and a parameter $k$, the objective of $k$-means is to partition
$S$ into $k$ clusters in a way that minimizes the sum of the squared distance
from each point to its cluster center.

In many cases, the input data is not available all at once but arrives as a
continuous, possibly unending, sequence.  This variant, known as \emph{streaming
$k$-means clustering}, requires an algorithm to maintain enough state to be
able to incrementally update the clusters as more data arrive.  Furthermore,
when a query is posed, the algorithm is required to return $k$ cluster centers,
one for each cluster, for the data observed so far.

Prior work on streaming $k$-means (e.g.~\cite{AMR+12,AJM09,GMM+03,SWM11}) has
mainly focused on optimizing the memory requirements, leading to algorithms with
provable approximation quality that only use space polylogarithmic in the input
stream's size~\cite{AMR+12,AJM09}.  However, these algorithms require an
expensive computation to answer a query for cluster centers.  This can be a
serious problem for applications that need answers in (near) real-time, such as
in network monitoring and sensor data analysis.  Our work aims at improving the
clustering query-time while keeping other desirable properties of current
algorithms.

To understand why current solutions to streaming $k$-means clustering have a high query runtime, let us review the
framework they use. At a high level, an incoming data stream $\stream$ is divided into smaller ``chunks''
$\stream_1,\stream_2,\ldots,$. Each chunk is summarized into a compact
representation, known as a ``coreset'' (for example, see~\cite{HM04}). The
resulting coresets may still not all fit into the memory of the processor.
Hence, multiple coresets are further merged recursively into higher-level
coresets, forming a hierarchy of coresets, or a ``coreset tree''.  When a query
arrives, all active coresets in the coreset tree are merged together, and a
clustering algorithm such as $\kmpp$~\cite{AV07} is applied on the result,
outputting $k$ cluster centers. The query time is consequently proportional to
the number of coresets that need to be merged together. In prior algorithms, the
total size of all these coresets could be as large as the whole memory itself,
which causes the query time to often be prohibitively high.

\subsection{Our Contributions}
We present three algorithms (\cc{}, \rcc{}, and \hybrid{}) for
streaming $k$-means clustering that asymptotically and practically improve upon
prior algorithms' response time of a query while retaining guarantees on memory
efficiency and solution quality of the current state-of-the-art.  We provide
theoretical analysis, as well as extensive empirical evaluation, of the proposed
algorithms.

At the heart of our algorithms is the idea of ``coreset caching'' that to our
knowledge, has not been used before in streaming clustering. It works by reusing
coresets that have been computed during previous (recent) queries to speedup the
computation of a coreset for the current query. In this way, when a query
arrives, the algorithm has no need to combine all coresets currently in memory;
it only needs to merge a coreset from a recent query (stored in the coreset
cache) along with coresets of points that arrived after this
query.

%------------------------------------------
%--------------------------------------------------------------------------------------------------
\begin{table*}[ht]
{
\small
\centering
\begin{tabular}{|c| c| c| c| c|}
\hline
Name & Query Cost & Update Cost & Memory Used & Accuracy: Coreset level  \\
           & (per point)   & (per point)    & & returned at query  \\
\hline
Coreset Tree ($\ct$) & $O\left( \frac{kdm}{q} \cdot \frac{r \log N}{\log r}\right)$ &  $O(dm)$ & $O\left(dm \cdot \frac{r \log N}{\log r}\right)$ & $\log N / \log r$ \\
\hline
Cached Coreset Tree ($\cc$)&  $O\left(\frac{kdm}{q} \cdot r \right)$ & $O(dm)$ & $O\left(dm \cdot \frac{r\log N}{\log r} \right)$ & $2  \log N / \log r$\\
\hline
Recursive Cached Coreset Tree ($\rcc$) & $O\left(\frac{kdm}{q} \cdot \log \log N \right)$ & $O(dm \log \log N)$ & $O(dm N^{1/8})$ & $O(1)$ \\
\hline
Online Coreset Cache ($\hybrid$)  & usually $O(1)$ worst case $O\left(\frac{kdm}{q} \cdot r \right)$ & $O(dm)$ & $O\left(dm \cdot \frac{r\log N}{\log r} \right)$  & $2 \log N / \log r$  \\
\hline
\end{tabular}
\smallskip
}
\caption{The time and accuracy of different clustering algorithms. For algorithms other than $\ct$, we assume answering queries by using the coreset cache.}
% $k$ is the
%   number of centers desired, $d$ is the dimension of a data point, $m$ is the
%   size of a coreset (in practice, this is a constant factor times $k$), $r$ is a
%   parameter used for $\cc$ and $\ct$, showing the ``merge degree" of the coreset
%   tree, $q$ is the number of points between two queries. The ``level" of a
%   coreset is indicative of the number of recursive merges of prior coresets to
%   arrive at this coreset. Smaller the level, more accurate is the coreset. For
%   example, a batch algorithm that sees the entire input can return a coreset at
%   level $0$.}
\label{table:summary}
\end{table*}
%--------------------------------------------------------------------------------------------------

%------------------------------------------

Our theoretical results are summarized in Table~\ref{table:summary}.
Throughout, let $n$ denote the number of points observed in the stream so far.
We measure an algorithm's performance in terms of both running time (further
separated into query and update) and memory consumption.  The query cost, stated
in terms of $q$, represents the expected amortized cost per input point assuming
that the total number of queries does not exceed $n/q$---or that the average
interval between queries is $\Omega(q)$. The update cost is the average (i.e.,
amortized) per-point processing cost, taken over the entire stream.  The memory
cost is expressed in terms of words assuming that each point is $d$-dimensional
and can be stored in $O(d)$ words.  Furthermore, let $m$ denote a user-defined
parameter that determines the coreset size ($m$ is set independent of $n$ and is
often $O(k)$ in practice); $r$ denote a user-defined parameter that sets the
merge degree of the coreset tree; and $N = n/m$ be the number of ``base
buckets.''

In terms of accuracy, each of our algorithms provides a provable
$O(\log k)$-approximation to the optimal $k$-means solution---that is, the
quality is comparable, up to constants, to what we would obtain if we simply
stored all the points so far in memory and ran an (expensive) batch $\kmpp$
algorithm at query time.  Further scrutiny reveals that for the same target
accuracy (holding constants in the big-$O$ fixed), our simplest algorithm
``Cached Coreset Tree'' ($\cc$) improves the query runtime of a current
state-of-the-art, \ct{}\footnote{$\ct$ is essentially the \skmpp
  algorithm~\cite{AMR+12} and~\cite{AJM09} except it has a more flexible rule
  for merging coresets.}, by a factor of $O(\log N)$---at the expense of a
(small) constant factor increase in memory usage.  If more flexibility in
tradeoffs among the parameters is desired, coreset caching can be applied
recursively.  Using this scheme, the ``Recursive Cached Coreset Tree'' ($\rcc$)
algorithm can be configured so that it has a query runtime that is a factor of
$O(r)$ times faster than \ct, and yields better solution quality
than \ct{}, at the cost of a small polynomial factor increase in memory.

In practice, a simple sequential streaming clustering algorithm, due to
MacQueen~\cite{MacQueen67}, is known to be fast but lacks good theoretical
properties.  We derive an algorithm, called \hybrid{}, that combines ideas from
\cc and sequential streaming to further enhance clustering query runtime while
keeping the provable clustering quality of \rcc and \cc.

We also present an extensive empirical study of the proposed algorithms, in
comparison to the state-of-the-art $k$-means algorithms, both batch and
streaming.  The results show that our algorithms yield substantial speedups
(5--100x) in query runtime and total time, and match the accuracy of \skmpp for
a broad range of datasets and query arrival frequencies.

%------------------------------------------
\subsection{Related Work}
\label{sec:related}
% ------------------------------------------

When all input is available at the start of computation (batch setting), Lloyd's
algorithm~\cite{Lloyd82}, also known as the \km algorithm, is a simple,
widely-used algorithm.  Although it has no quality guarantees, heuristics such
as \kmpp~\cite{AV07} can generate a starting configuration such that Llyod's
algorithm will produce provably-good clusters.

In the streaming setting, \cite{MacQueen67} is the earliest streaming
$k$-means method, which maintains the current cluster centers and applies one
iteration of Lloyd's algorithm for every new point received.  Because it is fast
and simple, this sequential algorithm is commonly used in practice (e.g., Apache Spark
mllib~\cite{MBY+15}).  However, it cannot provide any guarantees on the
quality~\cite{KMN+04}.  BIRCH~\cite{ZRL96} is a streaming clustering method
based on a data structure called the CF Tree; it produces cluster centers
through agglomerative hierarchical clustering on the leaf nodes of the
tree. CluStream\cite{AHW+03} constructs ``microclusters'' that summarize subsets
of the stream, and further applies a weighted \km algorithm on the
microclusters. STREAMLS~\cite{GMM+03} is a divide-and-conquer method based on
repeated application of a bicriteria approximation algorithm for clustering. A
similar divide-and-conquer algorithm based on \kmpp is presented
in~\cite{AJM09}. Invariably, these methods have high query-processing cost and
are not suitable for applications that require fast query response.  In
particular, at the time of query, they require merging multiple data structures,
followed by an extraction of cluster centers, which is expensive.

Har-Peled and Mazumdar~\cite{HM04} present coresets of size
$O(k \eps^{-d} \log n)$ for summarizing $n$ points \km, and also show how to
use the merge-and-reduce technique based on the Bentley-Saxe
decomposition~\cite{BS80} to derive a small-space streaming algorithm using
coresets. Further work~\cite{HK07, FL11, FSS13} reduced the size of a \km coreset
to $O(k/\eps^2)$. A close cousin to ours, \skmpp~\cite{AMR+12}
(essentially the \ct{} scheme) is a streaming $k$-means clustering algorithm
that uses the merge-and-reduce technique along with \kmpp to generate a
coreset. Our work improves on \skmpp w.r.t. query runtime.

\noindent\textbf{Roadmap:} We present preliminaries in Section~\ref{sec:prelim},
background for streaming clustering in Section~\ref{sec:background} and then the
algorithms $\cc$, $\rcc$, and $\hybrid$ in Section~\ref{sec:algorithm}, along
with their proofs of correctness and quality guarantees. We then present
experimental results in Section~\ref{sec:expts}.

%------------------------
\section{Preliminaries and Notation}
\label{sec:prelim}
%------------------------
We work with points from the $d$-dimensional Euclidean space $\R^d$ for integer
$d > 0$. Each point is associated with a positive integral weight ($1$ if
unspecified).  For points $x,y \in \R^d$, let $D(x,y) =\edist{x}{y}$ denote the
Euclidean distance between $x$ and $y$.  Extending this notation, the distance
from a point $x \in \R^d$ to a point set $\Psi \subset \R^d$ is
$D(x,\Psi) = \min_{\psi \in \Psi}\edist{x}{\psi}$.  In this notation, the
\emph{$k$-means clustering problem} is as follows:

%-----------------------
\begin{problem}[$k$-means Clustering] 
  Given a set $P \subseteq \R^d$ with $n$ points and a weight function
  $w\!: P \to \Z^+$, find a point set $\Psi \subseteq \R^d$, $|\Psi| = k$, that
  minimizes the objective function
\[
\phi_{\Psi}(P) = \sum_{x \in P} w(x) \cdot D^2(x,\Psi) = \sum_{x \in
  P}\min_{\psi \in \Psi} \left (w(x) \cdot {\edist{x}{\psi}}^2 \right).
\]
\end{problem}
%-----------------------

\noindent\textbf{Streams:} A stream $\stream = e_1, e_2, \ldots$ is an ordered
sequence of points, where $e_i$ is the $i$-th point observed by the algorithm.
For $t > 0$, let $\stream(t)$ denote the first $t$ entries of the stream:
$e_1, e_2, \ldots, e_t$. For $0 < i \le j$, let $\stream(i,j)$ denote the
substream $e_i, e_{i+1}, \ldots, e_j$.
Define $S = \stream(1,n)$ be the whole stream observed until $e_n$, where $n$
is, as before, the total number of points observed so far.

% We have written our analysis as if a query for cluster centers arrives every $q$ points. 
% {\color{blue} This parameter ($q$) captures the query rate in the most basic terms, we note that our results on the amortized query processing time still hold as long as the average number of points between two queries is $q$. The reason is that the cost of answering each query does not relate to when the query arrives, and the total query cost is simply the number of queries times the cost of each query.  Suppose that the queries arrived according to a different probability distribution, such that the expected interval between two queries is $q$ points. Then, the same results will hold in expectation.}
%-------------------------------------------------- 

%------------------------------------------------------
\noindent\textbf{\kmpp Algorithm:}
%------------------------------------------------------
Our algorithms rely on a batch algorithm as a subroutine: \kmpp algorithm~\cite{AV07}, 
which has the following properties:
%------------------------------------------------------
\begin{theorem}[Theorem 3.1 in \cite{AV07}]
\label{theo:kmeans++}
On an input set of $n$ points $P \subseteq \R^d$, the $k$-means++ algorithm returns a set
$\Psi$ of $k$ centers such that $\expct{\phi_{\Psi}(P)} \le 8(\ln k + 2) \cdot \phi_{OPT}(P)$ where $\phi_{OPT}(P)$ is
the optimal $k$-means clustering cost for $P$. The time complexity of the
algorithm is $O(kdn)$.
\end{theorem}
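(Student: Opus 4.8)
The plan is to follow the original analysis of Arthur and Vassilvitskii, which isolates the behaviour of the $D^2$-sampling rule one optimal cluster at a time and then stitches the pieces together by induction. Recall that \kmpp picks its first center uniformly at random from $P$ and each subsequent center $x$ with probability proportional to $D^2(x,\Psi)$, where $\Psi$ is the set of centers chosen so far. I would fix an optimal clustering, writing $A$ for a generic optimal cluster, $\mu_A$ for its centroid, and $\phi_{OPT}(A) = \sum_{x\in A}\edist{x}{\mu_A}^2$ for its contribution to the optimal cost, so that $\phi_{OPT}(P)=\sum_A \phi_{OPT}(A)$. Throughout I abbreviate $D(a)$ for $D(a,\Psi)$.

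The first two lemmas handle a single cluster. First I would establish the centroid identity $\sum_{x\in A}\edist{x}{z}^2 = \phi_{OPT}(A) + |A|\cdot\edist{z}{\mu_A}^2$ for every $z\in\R^d$; averaging over a uniformly random $z\in A$ immediately gives that seeding $A$ with a uniform point costs at most $2\phi_{OPT}(A)$ in expectation. Second, and this is the crux of the single-cluster analysis, I would bound the expected cost contributed by $A$ when a new center is drawn from $A$ by $D^2$-weighting against an arbitrary current center set. The argument uses the triangle inequality $D(a_0)\le D(a)+\edist{a}{a_0}$, squares it via the power-mean bound $D(a_0)^2 \le 2D(a)^2 + 2\edist{a}{a_0}^2$, averages the chosen point $a_0$ over $A$, and then splits the resulting double sum into two pieces, each controlled by the pairwise-distance identity $\sum_{a_0\in A}\sum_{a\in A}\edist{a}{a_0}^2 = 2|A|\,\phi_{OPT}(A)$; this yields a per-cluster bound of $8\phi_{OPT}(A)$.

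The hard part is the global induction that converts these per-cluster bounds into the $8(\ln k + 2)$ factor. Here I would classify optimal clusters as \emph{covered} (already containing a chosen center) or \emph{uncovered}, and prove by induction on the number of centers still to be placed that the expected total potential is bounded by a weighted combination of the covered potential, the uncovered potential, and a harmonic-number term $H_t$. The subtle point is the case analysis at each step: a newly sampled center may land in an uncovered cluster (reducing its cost to within the factor-$8$ bound) or in an already-covered one (wasting the pick), and the probabilities of these events must be weighed against the potentials via a convexity and rearrangement argument so that the worst case over the distribution of how many uncovered clusters get hit still telescopes. Summing $H_k \le \ln k + 1$ over the induction and adding the factor-$2$ cost of the very first (uniform) seed produces the claimed $8(\ln k+2)\,\phi_{OPT}(P)$ bound.

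Finally, the running time is immediate from the structure of the algorithm: there are $k$ rounds, and each round recomputes the squared distance of every point to the newly added center and updates the running minimum $D^2(\cdot,\Psi)$, at cost $O(nd)$ per round, for a total of $O(kdn)$; maintaining the cumulative weights for sampling adds only $O(n)$ per round and does not change the bound.
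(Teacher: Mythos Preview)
Your proposal faithfully reproduces the Arthur--Vassilvitskii argument from~\cite{AV07}, and the outline is correct. However, note that the present paper does not prove this theorem at all: it is stated purely as a cited result (Theorem~3.1 in~\cite{AV07}) and used as a black box in later lemmas. There is therefore no ``paper's own proof'' to compare against; your write-up simply supplies a proof where the paper intentionally omits one by citation.
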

%------------------------------------------------------

\noindent\textbf{Coresets and Their Properties:}
Our clustering method builds on the concept of a \emph{coreset}, a small-space
representation of a weighted point set that (approximately) preserves desirable
properties of the original point set.  A variant suitable for $k$-means
clustering is as follows:
%-----------------------
\begin{definition}[$k$-means Coreset]
\label{defn:coreset}
For a weighted point set $P \subseteq \R^d$, integer $k > 0$, and parameter
$0 < \eps < 1$, a weighted set $C \subseteq \R^d$ is said to be a
$(k, \eps)$-coreset of $P$ for the $k$-means metric, if for any set $\Psi$
of $k$ points in $\R^d$, we have
\[(1-\eps) \cdot \phi_{\Psi}(P) \le \phi_{\Psi}(C) \le (1+\eps) \cdot \phi_{\Psi}(P)\]
\end{definition}
%-----------------------

Throughout, we use the term ``coreset'' to always refer to a $k$-means coreset.
When $k$ is clear from the context, we simply say an $\eps$-coreset.  For
integer $k > 0$, parameter $0 < \eps < 1$, and weighted point set
$P \subseteq \R^d$, we use the notation $\coreset(k, \eps, P)$ to mean a
$(k,\eps)$-coreset of $P$. We use the following observations from~\cite{HM04}.

%-----------------------
\begin{observation}[\cite{HM04}]
\label{obs:coreset1}
If $C_1$ and $C_2$ are each $(k,\eps)$-coresets for disjoint multi-sets $P_1$
and $P_2$ respectively, then $C_1 \cup C_2$ is a $(k, \eps)$-coreset for
$P_1 \cup P_2$.
\end{observation}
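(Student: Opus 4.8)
The plan is to exploit the fact that the $k$-means cost function $\phi_\Psi$ is \emph{additive} over disjoint (multi-)sets of weighted points, and then to combine the two coreset guarantees term by term. First I would fix an arbitrary set $\Psi \subseteq \R^d$ with $|\Psi| = k$. Since Definition~\ref{defn:coreset} quantifies over all such $\Psi$, establishing the required two-sided bound for a single arbitrary $\Psi$ is enough to conclude the claim for $C_1 \cup C_2$.

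The key structural observation is additivity of the objective. Because $\phi_\Psi(P) = \sum_{x\in P} w(x)\, D^2(x,\Psi)$ is a sum of per-point contributions and $P_1, P_2$ are disjoint, we have $\phi_\Psi(P_1 \cup P_2) = \phi_\Psi(P_1) + \phi_\Psi(P_2)$. The same reasoning applies on the coreset side: treating $C_1 \cup C_2$ as a weighted multi-set, so that any point appearing in both $C_1$ and $C_2$ simply carries the sum of its two weights, we likewise obtain $\phi_\Psi(C_1 \cup C_2) = \phi_\Psi(C_1) + \phi_\Psi(C_2)$.

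Next I would invoke the two coreset hypotheses. By Definition~\ref{defn:coreset}, for each $i \in \{1,2\}$ we have $(1-\eps)\,\phi_\Psi(P_i) \le \phi_\Psi(C_i) \le (1+\eps)\,\phi_\Psi(P_i)$. Adding these two chains of inequalities and substituting the additive identities from the previous step gives
\[
(1-\eps)\,\phi_\Psi(P_1 \cup P_2) \;\le\; \phi_\Psi(C_1 \cup C_2) \;\le\; (1+\eps)\,\phi_\Psi(P_1 \cup P_2).
\]
As $\Psi$ was an arbitrary set of $k$ centers, this is precisely the coreset condition for $C_1 \cup C_2$ with respect to $P_1 \cup P_2$.

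The argument reduces to a single line of algebra once additivity is in hand, so I do not expect a genuine obstacle. The only point that warrants care is the multi-set bookkeeping: one must confirm that the weight function on the union $C_1 \cup C_2$ is exactly the pointwise sum of the two underlying weight functions, so that the additive decomposition of $\phi_\Psi$ holds with equality rather than merely approximately. The stated disjointness of $P_1$ and $P_2$ supplies the analogous exact decomposition on the input side, closing the gap.
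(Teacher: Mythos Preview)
Your proof is correct and is the standard additivity argument; the paper does not give its own proof of this observation but merely cites it from \cite{HM04}, so there is nothing further to compare against.
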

%-----------------------

%-----------------------
\begin{observation}[\cite{HM04}]
  \label{obs:coreset2}
  Let $k$ be fixed.  If $C_1$ is $\eps_1$-coreset for $C_2$, and $C_2$ is a
  $\eps_2$-coreset for $P$, then $C_1$ is a $((1+\eps_1)(1+\eps_2)-1)$-coreset
  for $P$.
\end{observation}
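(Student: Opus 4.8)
The plan is to unfold the coreset definition for each hypothesis and then chain the resulting inequalities. Fix an arbitrary candidate center set $\Psi \subseteq \R^d$ with $|\Psi| = k$; since the coreset property must hold for all such $\Psi$, it suffices to establish the two-sided bound for this one fixed $\Psi$. By Definition~\ref{defn:coreset}, the hypothesis that $C_1$ is an $\eps_1$-coreset for $C_2$ gives $(1-\eps_1)\phi_{\Psi}(C_2) \le \phi_{\Psi}(C_1) \le (1+\eps_1)\phi_{\Psi}(C_2)$, and the hypothesis that $C_2$ is an $\eps_2$-coreset for $P$ gives $(1-\eps_2)\phi_{\Psi}(P) \le \phi_{\Psi}(C_2) \le (1+\eps_2)\phi_{\Psi}(P)$. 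Everything below is a matter of composing these four inequalities.

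For the upper bound, I would substitute the upper bound on $\phi_{\Psi}(C_2)$ into the upper bound on $\phi_{\Psi}(C_1)$, obtaining $\phi_{\Psi}(C_1) \le (1+\eps_1)\phi_{\Psi}(C_2) \le (1+\eps_1)(1+\eps_2)\phi_{\Psi}(P)$, which is exactly $(1+\eps)\phi_{\Psi}(P)$ for $\eps = (1+\eps_1)(1+\eps_2)-1$. This direction is immediate, and the target constant matches on the nose. The symmetric chaining of the two lower bounds yields $\phi_{\Psi}(C_1) \ge (1-\eps_1)\phi_{\Psi}(C_2) \ge (1-\eps_1)(1-\eps_2)\phi_{\Psi}(P)$.

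The one point that needs care---and the only real (if minor) obstacle---is that the composed lower factor $(1-\eps_1)(1-\eps_2)$ is not literally the required $1-\eps$. I would close this gap with a short algebraic comparison: expanding gives $(1-\eps_1)(1-\eps_2) = 1 - \eps_1 - \eps_2 + \eps_1\eps_2$, whereas $1-\eps = 1 - \eps_1 - \eps_2 - \eps_1\eps_2$, so their difference is $2\eps_1\eps_2 \ge 0$. Hence $(1-\eps_1)(1-\eps_2) \ge 1-\eps$, and the chained lower bound implies $\phi_{\Psi}(C_1) \ge (1-\eps)\phi_{\Psi}(P)$ as needed---in fact with a little slack. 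Combining the two bounds shows that $C_1$ is an $\eps$-coreset for $P$; since $\Psi$ was an arbitrary $k$-point set, the claim holds for all of them, which is exactly the coreset condition. I note finally that holding $k$ fixed throughout is precisely what lets us quantify over the same family of $k$-point center sets $\Psi$ in all three coreset relations, so the composition is legitimate.
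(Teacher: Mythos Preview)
Your proof is correct. The paper itself does not supply a proof of this observation---it is quoted directly from~\cite{HM04} and used as a black box---so there is nothing to compare against; your argument is the standard one and would serve perfectly well as a self-contained justification.
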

%-----------------------

\remove{
\begin{proof}
  We prove this by induction using the proposition: \emph{For a point set $P$,
    if $C = \coreset(k, \eps, \ell, P)$, then
    $C = \coreset(k, \eps', P)$ where
    $\eps' = \left(1+\eps \right)^{\ell} - 1$.}  

  To prove the base case of $\ell = 0$, consider that, by definition,
  $\coreset(k,\eps, 0, P) = P$, and $\coreset(k, 0, P) = P$.

  Now consider integer $L > 0$. Suppose that for each positive integer
  $\ell < L$, ${\cal P}(\ell)$ was true. The task is to prove ${\cal P}(L)$.
  Suppose $C = \coreset(k, \eps, L, P)$. Then there must be an arbitrary
  partition of $P$ into sets $P_1, P_2, \ldots, P_q$ such that
  $\cup_{i=1}^q P_i = P$. For $i=1\ldots q$, let
  $C_i = \coreset(k, \eps, \ell_i, P_i)$ for $\ell_i < L$. Then $C$ must be
  of the form $\coreset(k, \eps, \cup_{i=1}^q C_i)$.

  By the inductive hypothesis, we know that $C_i=\coreset(k, \eps_i, P_i)$
  where $\eps_i = \left(1+\eps \right)^{\ell_i} - 1$. By the definition
  of a coreset and using $\ell_i \le (L-1)$, it is also true that
  $C_i = \coreset(k, \eps'', P_i)$ where
  $\eps'' = \left(1+\eps \right)^{(L-1)} - 1$.  Let
  $C' = \cup_{i=1}^q C_i$.  From Observation~\ref{obs:coreset1}, and using
  $P = \cup_{i=1}^q P_i$, it must be true that $C'= \coreset(k, \eps'', P)$.
  Since $C=\coreset(k,\eps,C')$ and using Observation~\ref{obs:coreset2}, we
  get $C=\coreset(k, \gamma, P)$ where
  $\gamma=(1+\eps)(1+\eps'')-1$. Simplifying, we get
  $\gamma=(1+\eps)(1+ \left(1+\eps \right)^{(L-1)} - 1)-1 =
  \left(1+\eps \right )^L - 1$.
  This proves the inductive case for ${\cal P}(L)$, which completes the proof.
\end{proof}
%----------------
}

While our algorithms can work with any method for constructing coresets, 
an algorithm due to ~\cite{FSS13} by Feldman, Schimidt and Sohler provides the following
guarantees, which is best coreset construction algorithm from our knowledge:
%-------------------------
\begin{theorem}[\cite{FSS13} Corollary $4.5$]
\label{theo:coreset-build}
Given a point set $P$ with $n$ points, there exists an algorithm to compute $\coreset(k, \epsilon, P)$ 
with size $O(k / \epsilon^2)$. 
Let the coreset size be denoted by $m$, then the time complexity of constructing the coreset is $O(dnm)$. 
\end{theorem}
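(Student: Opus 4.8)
Since this is a result of Feldman, Schmidt, and Sohler~\cite{FSS13}, I only outline the strategy one would follow to establish it. The plan rests on two pillars: \emph{importance (sensitivity) sampling}, which yields a coreset whose size is governed by a ``total sensitivity'' and the combinatorial complexity of the cost functions, and \emph{dimensionality reduction}, which removes the dependence of that size on the ambient dimension $d$ and on $n$.

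First I would set up the sampling framework. For a point $x \in P$ define its sensitivity $\sigma(x) = \sup_{\Psi} D^2(x,\Psi)/\phi_{\Psi}(P)$, where the supremum ranges over all $k$-point sets $\Psi \subseteq \R^d$; this measures the largest fractional contribution $x$ can make to the objective. The idea is to first compute a constant-factor (or, via \kmpp, an $O(\log k)$-) approximate solution in $O(dnk)$ time, and use it to derive in a single pass cheap upper bounds $\hat\sigma(x) \ge \sigma(x)$ together with the total sensitivity $\mathfrak{S} = \sum_x \hat\sigma(x)$. The key structural fact to verify here is that for $k$-means $\mathfrak{S} = O(k)$: each cluster of the approximate solution contributes $O(1)$ to the sum, which follows from a relaxed-triangle-inequality argument relating $D^2(x,\Psi)$ to the distance from $x$ to its assigned approximate center. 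One then samples $m$ points i.i.d.\ with probability proportional to $\hat\sigma(x)$ and reweights each sampled point by $\mathfrak{S}/(m\,\hat\sigma(x))$. By the Feldman--Langberg generalized-coreset theorem, taking $m = O(\mathfrak{S}\,(\rho + \log(1/\delta))/\epsilon^2)$ makes the reweighted sample a $(1\pm\epsilon)$-approximation of $\phi_{\Psi}(P)$ simultaneously for every $\Psi$ with probability $1-\delta$, where $\rho$ is the pseudo-dimension of the range space induced by the functions $x \mapsto D^2(x,\Psi)$.

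The crux --- and the step I expect to be the main obstacle --- is controlling $\rho$. A naive bound gives $\rho = O(kd)$, which would leave an unwanted dependence on $d$ in the coreset size. This is where I would invoke the dimensionality-reduction step: using the singular value decomposition of the centered data matrix, project $P$ onto the subspace spanned by its top $O(k/\epsilon)$ singular directions, and argue via a Pythagorean decomposition of the $k$-means cost into an ``in-subspace'' term and a fixed ``orthogonal residual'' term that the projected instance preserves $\phi_{\Psi}(\cdot)$ up to a $(1\pm\epsilon)$ factor for \emph{every} candidate $\Psi$. Working in this reduced space collapses $\rho$ to a quantity independent of $d$ and $n$, which, combined with $\mathfrak{S}=O(k)$, yields the stated size $O(k/\epsilon^2)$. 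Verifying that the top-singular-subspace projection preserves the objective uniformly over all $\Psi$, and then converting that geometric statement into the pseudo-dimension bound, is the technically delicate part. For the running time, computing the initial approximation and the sensitivity bounds costs $O(dnk)$, and the approximate SVD and single sampling pass are absorbed into $O(dnm)$ since $m = \Omega(k/\epsilon^2) \ge k$; together these give the claimed $O(dnm)$ bound.
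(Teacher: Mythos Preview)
The paper does not prove this statement at all: it is quoted verbatim as Corollary~4.5 of \cite{FSS13} and used as a black box, with no argument given. There is therefore nothing in the paper to compare your proposal against; your outline is a reasonable high-level sketch of the Feldman--Schmidt--Sohler construction, but from the point of view of this paper the correct ``proof'' is simply a citation.
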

%-------------------------

\section{Streaming Clustering and Coreset Trees}
\label{sec:background}
To provide context for how algorithms in this paper will be used, we describe a
generic ``driver'' algorithm for streaming clustering.  We also discuss the
coreset tree (\ct) algorithm.  This is both an example of how the driver works
with a specific implementation and a quick review of an algorithm from prior
work that our algorithms build upon.

%----------------------------
\begin{algorithm}[tb]
\caption{Stream Clustering Driver}\label{algo:cluster-init}
\label{algo:cluster-update}
\label{algo:cluster-query}
\fnDef{$\clusterupdate(\object, p)$}{
  \tcp{Insert new point $p$ from the stream into $\object$}
  Add $p$ to $\object.C$\;
  
  \If{$(|\object.C| = m)$}
  {
    $\object.\mathcal{D}$.Update($\object.C$)\;
    $\object.C \gets \emptyset$\;
  }
}
\fnDef{$\clusterquery()$}{
  $C_1 \gets \object.\mathcal{D}.Coreset()$\;
  \Return $\kmpp(k, C_1 \cup \object.C)$\;
}
\end{algorithm}
%------------------------------

\subsection{Driver Algorithm}
The ``driver'' algorithm (presented in Algorithm~\ref{algo:cluster-update}) internally keeps state as an object $\object$. The state $\object$ involves a specific implementation of a clustering data structure $\mathcal{D}$ and an auxiliary point set $C$. The point set $C$ receives every new point from the stream and with maximum capacity $m$. The size $m$ is the coreset size where the value is determined by the coreset construction algorithm. Once the size of $C$ increases to $m$, the $m$ points will be inserted into the clustering data structure $\mathcal{D}$, as well $C$ will be emptied.  So $C$ groups arriving points into batches at the granularity of size $m$ and stores the current batch.  Subsequent algorithms in this paper, including \ct, are implementations for the clustering data structure $\mathcal{D}$.

%------------------------------------------------------------------------------
\subsection{\ct: $r$-way Merging Coreset Tree}
\label{sec:cstree}
%------------------------------------------------------------------------------

\newcommand*{\qtree}[0]{\ensuremath{Q}\xspace}
%------------------------------------------------------------------------------
The \emph{$r$-way coreset tree} (\ct{}) turns a traditional batch algorithm for
coreset construction into a streaming algorithm that works in limited space.
Although the basic ideas are the same, our description of \ct{} generalizes the
coreset tree of Ackermann et~al.~\cite{AMR+12}, which is the special case when
$r = 2$.

\remove{
%--------------------
\begin{figure*}[tb]
  \centering
  \includegraphics[width=.99\textwidth]{figs/coreset-tree-smaller-examples}
%  \centering
%  \includegraphics[width=.96\columnwidth]{figs/coreset-tree-illus}
  \caption{Illustration of a coreset tree with $r = 3$.}
\label{fig:ct-example}
\end{figure*}
%--------------------
}

\noindent\textbf{The Coreset Tree:}
A coreset tree $\qtree$ maintains \emph{buckets} at multiple levels.  The
buckets at level $0$ are called \emph{base buckets}, which contain the original
input points.  The size of each base bucket is specified by a parameter $m$.
Each bucket above that is a coreset summarizing a segment of the stream observed
so far.  In an $r$-way \ct, level $\ell$ has between $0$ and $r-1$ (inclusive)
buckets, each is a summary of $r^\ell$ base buckets.

Initially, the coreset tree is empty. After observing $n$ points in the stream,
there will be $N = \lfloor n/m \rfloor$ base buckets (level 0). Some of these
base buckets may have been merged into higher-level buckets.  The distribution
of buckets across levels obeys the following invariant:
\begin{quote}
  If $N$ is written in base $r$ as $N = (s_q, s_{q-1}, \dots, s_1, s_0)_r$, with
  $s_q$ being the most significant digit (i.e., $N = \sum_{i=0}^q s_ir^i$), then
  \emph{there are exactly $s_i$ buckets in level $i$}.
\end{quote}

%An example is given in Figure~\ref{fig:ct-example}A: Having observed $N = 8$
%base buckets worth of data, the coreset contains $2$ buckets at level $0$, $2$
%buckets at level $1$, and none above this level, corresponding to
%$N = 8 = (2,2)_3$ in base $3$.

%--------------------
\begin{algorithm}[t]
\caption{Coreset Tree Algorithm}\label{algo:ctfunctions}
\label{algo:ctinit}
\label{algo:ctupdate}
\label{algo:ctquery}
% \fnDef{$\ctinit(r, k, \eps)$}{
%   $Q \gets \emptyset $\;
%   Remember parameters $r$, $k$, and $\eps$.
% }
\tcp{\emph{Input:} bucket $b$}
\fnDef{$\ctupdate(b)$}{
  Append $b$ to $Q_0$\;
  $j \gets 0$\;
  \While{$|Q_j| \ge r$} {
    $U \gets \coreset(k, \eps, \cup_{B \in Q_j} B)$\;
    Append $U$ to $Q_{j+1}$\;
    $Q_j \gets \emptyset $\;
    $j \gets j+1$ \;
  }
}
\tcp{For query method $\clusterquery()$}
\fnDef{\ctcoreset()}{
  \Return{$\bigcup_{j} \{\bigcup_{B \in Q_j} B \}$}\;
}
\end{algorithm}
%--------------------

\noindent{}\emph{\textbf{How is a base bucket added?}}
The process to add a base bucket is reminiscent of incrementing a base-$r$
counter by one, where merging is the equivalent of transferring the carry from
one column to the next.  More specifically, \ct maintains a sequence of
sequences $\{Q_j\}$, where $Q_j$ is the buckets at level $j$. To incorporate a
new bucket into the coreset tree, \ctupdate{}, presented in
Algorithm~\ref{algo:ctupdate}, first adds it at level $0$. When the number of
buckets at any level $i$ of the tree reaches $r$, these buckets are merged,
using the coreset algorithm, to form a single bucket at level $(i+1)$, and the
process is repeated until there are fewer than $r$ buckets at all levels of the
tree. An example of how the coreset tree evolves after the addition of base buckets 
is shown in Figure~\ref{fig:coreset-tree}.

%---------------
\begin{figure}
  \includegraphics[width=0.48\textwidth, height=6cm]{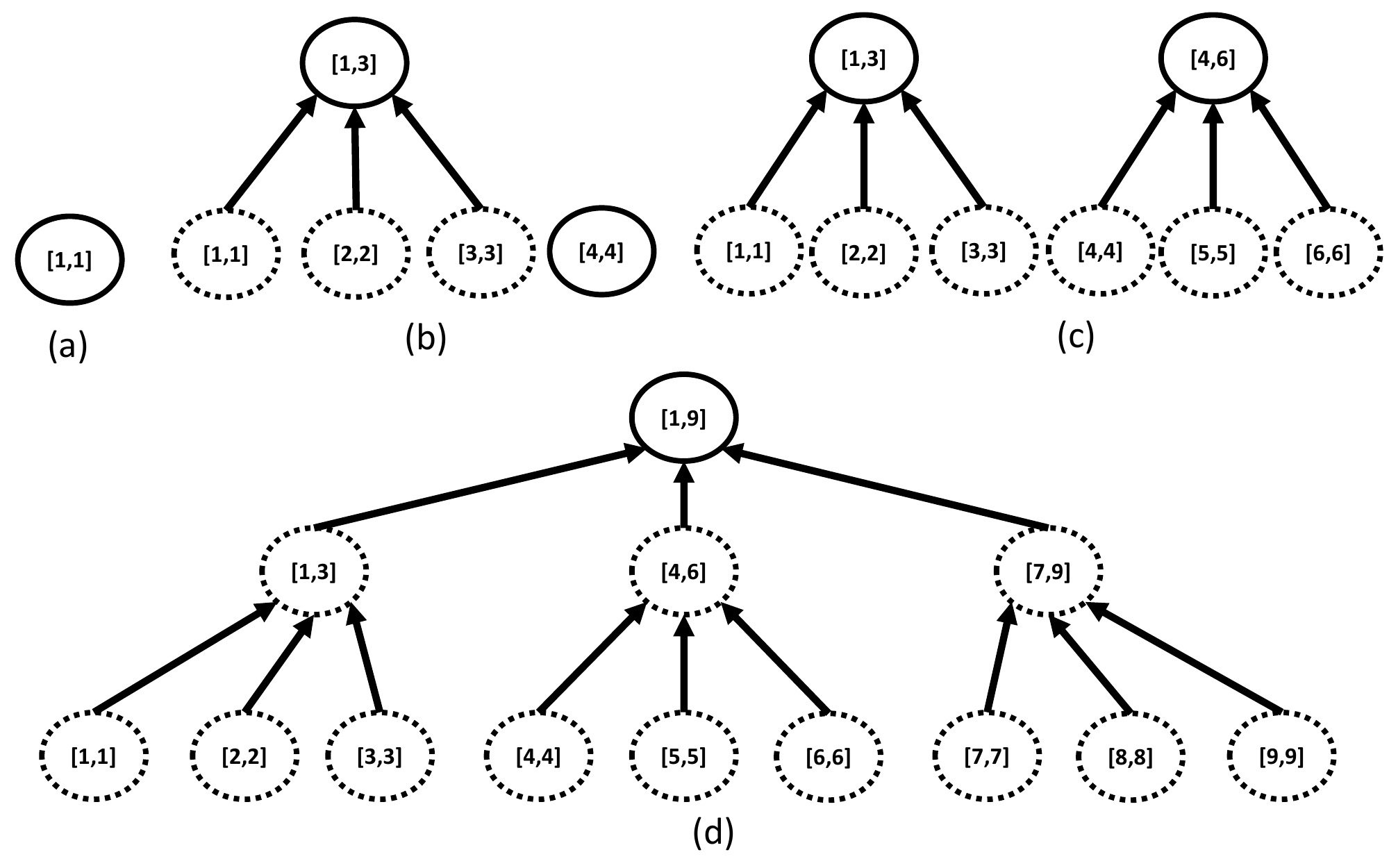}
  \caption{Illustration of a $3$-way coreset tree (\ct), showing the states after receiving number of base buckets 
  (a) $1$, (b) $4$, (c) $6$, (d) $9$. The notation $[l,r]$ denotes a coreset of all points in buckets number from $l$ to $r$, both endpoints inclusive. The coreset tree consists of coresets represented by ellipses, each is a base bucket (level $0$) or has been formed by merging multiple coresets from lower levels. A coreset becomes inactive when it is merged and represented by a dotted bordered ellipse.}
\label{fig:coreset-tree}
\end{figure}

\noindent{}\emph{\textbf{How to answer a query?}}  The algorithm simply unions all the active buckets together, specifically
$\bigcup_{j} \{\bigcup_{B \in Q_j} B \}$.  Notice that the driver will combine
this with a partial base bucket before deriving the $k$-means centers.

Following lemmas stating the properties of the \ct algorithm. We use the following definition in proving clustering guarantees.
%---------------------
\begin{definition}[Level-$\ell$ Coreset]
  For $\ell \in \Z_{\geq 0}$, \emph{a $(k,\eps, \ell)$-coreset} of a point set
  $P \subseteq \R^d$, denoted by $\coreset(k, \eps, \ell, P)$, is as follows:
The level-$0$ coreset of $P$ is $P$. % (i.e., $\coreset(k,\eps, 0, P) = P$).
For $\ell > 0$, a level-$\ell$ coreset of $P$ is a coreset of the union
of $C_i$'s (i.e., $\coreset(k, \eps, \cup_{i=1}^t C_i)$), where each
$C_i$ is a level-$\ell_i$ coreset, $\ell_i < \ell$, of $P_i$ such that
$\{P_j\}_{j=1}^t$ forms a partition of $P$.
\end{definition}
%---------------------

%---------------------
\begin{lemma}
\label{lemma:cstree-level2}
For a point set $P$, parameter $\eps > 0$ and integer $\ell \ge 0$, 
if $C = \coreset(k,\eps,\ell,P)$ is a level $\ell$-coreset of $P$, 
then $C = \coreset(k, \eps', P)$ where $\eps' = \left(1+\eps \right)^{\ell} - 1$.
\end{lemma}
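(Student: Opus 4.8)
The plan is to prove this by induction on the level $\ell$, establishing that every $(k,\eps,\ell)$-coreset of $P$ is a $(k,\eps')$-coreset of $P$ with $\eps' = (1+\eps)^\ell - 1$. The base case $\ell = 0$ is immediate: by definition $\coreset(k,\eps,0,P) = P$, and since $(1+\eps)^0 - 1 = 0$, I only need $P$ to be a $0$-coreset of itself, which holds because the defining inequalities become equalities when the parameter is $0$.

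For the inductive step, I would fix $L > 0$ and assume the claim holds for every level $\ell < L$. Let $C = \coreset(k,\eps,L,P)$. Unfolding the definition of a level-$L$ coreset, there is a partition $\{P_i\}_{i=1}^t$ of $P$ together with level-$\ell_i$ coresets $C_i = \coreset(k,\eps,\ell_i,P_i)$, each with $\ell_i < L$, such that $C = \coreset(k,\eps,\bigcup_{i=1}^t C_i)$. By the inductive hypothesis, each $C_i$ is a $(k,\eps_i)$-coreset of $P_i$ with $\eps_i = (1+\eps)^{\ell_i} - 1$. The argument then proceeds in three moves: first uniformize the per-branch error, then merge, then compose. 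Define $\eps'' = (1+\eps)^{L-1} - 1$. Since $\ell_i \le L-1$ we have $\eps_i \le \eps''$, and because the coreset inequalities only loosen as the parameter increases, each $C_i$ is also a $(k,\eps'')$-coreset of $P_i$. Now all branches share the common parameter $\eps''$, so a repeated application of Observation~\ref{obs:coreset1} to the disjoint pieces shows that $C' = \bigcup_{i=1}^t C_i$ is a $(k,\eps'')$-coreset of $P$. Finally, $C$ is a $(k,\eps)$-coreset of $C'$, so Observation~\ref{obs:coreset2} gives that $C$ is a $(k,\gamma)$-coreset of $P$ with $\gamma = (1+\eps)(1+\eps'') - 1$; substituting $\eps''$ and simplifying yields $\gamma = (1+\eps)^L - 1$, which is exactly $\eps'$ at level $L$ and closes the induction.

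The step I expect to require the most care is the uniformization of error parameters before merging. The definition of a level-$\ell$ coreset allows the sub-coresets $C_i$ to sit at \emph{different} levels $\ell_i$, so the inductive hypothesis hands back a \emph{different} accuracy $\eps_i$ for each branch, whereas Observation~\ref{obs:coreset1} is stated for a single shared parameter across the merged sets. The crux is therefore the monotonicity remark---that a $(k,\eps_i)$-coreset is automatically a $(k,\eps'')$-coreset whenever $\eps'' \ge \eps_i$---which lets me replace every branch's accuracy by the worst case $(1+\eps)^{L-1}-1$ before invoking the union observation. Everything after that is the routine algebraic identity $(1+\eps)\bigl((1+\eps)^{L-1}\bigr) = (1+\eps)^L$.
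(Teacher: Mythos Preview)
Your proposal is correct and follows essentially the same inductive route as the paper: base case $\ell=0$, then for level $L$ decompose into sub-coresets, apply the inductive hypothesis to each piece, merge via Observation~\ref{obs:coreset1}, and compose with the outer $\eps$-coreset via Observation~\ref{obs:coreset2} to obtain $(1+\eps)^L - 1$. In fact your treatment is slightly more careful than the paper's displayed proof: the paper's inductive step tacitly assumes all $r$ sub-coresets sit at level exactly $L-1$ (as happens in the actual coreset tree), whereas you handle the general definition where the $\ell_i$ may differ, explicitly invoking monotonicity of the coreset parameter to uniformize to $\eps'' = (1+\eps)^{L-1}-1$ before merging.
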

\begin{proof}
	We prove this by induction, denote our lemma as proposition $\mathcal{P}$. 
	Consider the base case $\ell = 0$, by definition,
	the level $0$ coreset is in the base buckets where original input points inside, 
	and $\left(1+\epsilon \right)^{0} - 1 = 0$.
	
	Now consider level $\ell > 0$. Suppose that $\mathcal{P}(L - 1)$ is true, 
	the task is to prove ${\cal P}(L)$.
	Suppose $C$ is a level $L$ coreset, then $C$ is merged by $r$ level $(L - 1)$ coresets. 
	For $i=1 \ldots r$, let $C_i$ denote the level $(L - 1)$ coreset, 
	each summarizing a point set $P_i$. 
	Then $C$ must be of the form $C = \coreset(k, \epsilon, \cup_{i=1}^r C_i)$.

	By the inductive hypothesis, we know that $C_i = \coreset(k, \epsilon_i, P_i)$
	where $\epsilon_i = \left(1 + \epsilon \right)^{L-1} - 1$.  
	Let $C' = \cup_{i=1}^r C_i$. From Observation~\ref{obs:coreset1} and using
	$P = \cup_{i=1}^r P_i$, it must be true that $C'= \coreset(k, \epsilon_i, P_i)$.
	Since $C = \coreset(k, \epsilon, C')$ and using Observation~\ref{obs:coreset2}, we
	get $C=\coreset(k, \gamma, P)$ where
	$\gamma = (1 + \epsilon)(1 + \epsilon_i) - 1$. Simplifying, we get
	$\gamma = (1 + \epsilon)(1 + \left(1 + \epsilon \right)^{(L-1)} - 1)-1 =
	\left(1 + \epsilon \right )^L - 1$.
	This proves the inductive case, which completes the proof.
\end{proof}
%--------------------------

\begin{fact}
\label{fact:cstree-fact}
After observing $N$ base buckets, the number of levels in the coreset tree \ct satisfies
$\ell = \max \{ j \mid Q_j \neq \emptyset\}$, is $\ell \leq \log_r N$.
%  \item If $C$ is a level-$\lambda$ bucket, then $C$ is a $(k, \gamma)$-coreset
%    of the underlying data points with $\gamma = (1+\eps)^\lambda - 1$.
%  \end{itemize}
\end{fact}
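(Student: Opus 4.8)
The plan is to reduce the claim to a simple statement about the base-$r$ representation of $N$ by invoking the bucket-distribution invariant stated above. Recall that after $N$ base buckets have been observed, if we write $N$ in base $r$ as $N = (s_q, s_{q-1}, \ldots, s_1, s_0)_r = \sum_{i=0}^q s_i r^i$, then level $i$ of the tree holds exactly $s_i$ buckets. Consequently, a level $j$ is nonempty precisely when the digit $s_j$ is nonzero, and the highest nonempty level is $\ell = \max\{j : s_j \neq 0\}$, i.e., the position of the most significant nonzero digit of $N$.

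Given this, the bound follows from a one-line estimate. Since $s_\ell \geq 1$ (it is a nonzero base-$r$ digit) and every term of $\sum_{i=0}^q s_i r^i$ is nonnegative, we have $N = \sum_{i=0}^q s_i r^i \geq s_\ell\, r^\ell \geq r^\ell$. Taking $\log_r$ of both sides yields $\ell \leq \log_r N$, as desired. The edge case $\ell = 0$ (when $1 \leq N < r$, so that only base buckets are present) is covered automatically, since then $\log_r N \geq 0 = \ell$; throughout I assume $N \geq 1$, as the quantity $\log_r N$ is otherwise undefined.

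The only real content lies in justifying the invariant itself, which is where I would be most careful. I would establish it by induction on the number of $\ctupdate$ calls, observing that appending a base bucket and cascading the merges exactly mirrors incrementing a base-$r$ counter: adding one bucket at level $0$ increments $s_0$, and whenever a level accumulates $r$ buckets, those $r$ buckets are merged into a single bucket one level up, which is precisely the ``carry'' operation that resets a digit to $0$ and increments the next. Since $\ctupdate$ performs exactly these carries until every level holds fewer than $r$ buckets, the bucket counts coincide with the digits of $N$ in base $r$ at the end of each update. With the invariant in hand, the remainder is the elementary bound above, so I do not expect a genuine obstacle here.
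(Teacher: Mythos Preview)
Your argument is correct. In the paper this statement is recorded as a Fact with no proof given; the (excised) proof in the source takes an essentially equivalent one-line route, observing that a bucket at level $\ell$ summarizes $r^{\ell}$ base buckets, so $r^{\ell} \le N$ --- which is exactly the inequality you extract from the base-$r$ digit invariant.
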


\remove{
We first determine the number of levels in the coreset tree after observing $N$ base buckets.  Let the maximum level of the tree be denoted by $\ell(Q) = \max\{j \,|\, Q_j \neq \emptyset \}$.
%---------------------
\begin{lemma}
\label{lemma:cstree-level}
After observing $N$ base buckets, $\ell(Q) \le \log_r N$.
\end{lemma}
%---------------------
\begin{proof}
  As was pointed out earlier, for each level $j \geq 0$, a bucket in $Q_j$ is a
  summary of $r^j$ base buckets.  Let $\ell^* = \ell(Q)$. After observing $N$
  base buckets, the coverage of a bucket at level $\ell^*$ cannot exceed $N$, so
  $r^{\ell^*} \leq N$, which means $\ell(Q) = \ell^* = \log_r N$.
\end{proof}
%---------------------
}

The accuracy of a coreset is given by the following lemma, since it is
clear that a level-$\ell$ bucket is a level-$\ell$ coreset of
its responsible range of base buckets.

\remove{
\begin{proof}
  We prove this by induction using the proposition: \emph{For a point set $P$,
    if $C = \coreset(k, \eps, \ell, P)$, then
    $C = \coreset(k, \eps', P)$ where
    $\eps' = \left(1+\eps \right)^{\ell} - 1$.}  

  To prove the base case of $\ell = 0$, consider that, by definition,
  $\coreset(k,\eps, 0, P) = P$, and $\coreset(k, 0, P) = P$.

  Now consider integer $L > 0$. Suppose that for each positive integer
  $\ell < L$, ${\cal P}(\ell)$ was true. The task is to prove ${\cal P}(L)$.
  Suppose $C = \coreset(k, \eps, L, P)$. Then there must be an arbitrary
  partition of $P$ into sets $P_1, P_2, \ldots, P_q$ such that
  $\cup_{i=1}^q P_i = P$. For $i=1\ldots q$, let
  $C_i = \coreset(k, \eps, \ell_i, P_i)$ for $\ell_i < L$. Then $C$ must be
  of the form $\coreset(k, \eps, \cup_{i=1}^q C_i)$.

  By the inductive hypothesis, we know that $C_i=\coreset(k, \eps_i, P_i)$
  where $\eps_i = \left(1+\eps \right)^{\ell_i} - 1$. By the definition
  of a coreset and using $\ell_i \le (L-1)$, it is also true that
  $C_i = \coreset(k, \eps'', P_i)$ where
  $\eps'' = \left(1+\eps \right)^{(L-1)} - 1$.  Let
  $C' = \cup_{i=1}^q C_i$.  From Observation~\ref{obs:coreset1}, and using
  $P = \cup_{i=1}^q P_i$, it must be true that $C'= \coreset(k, \eps'', P)$.
  Since $C=\coreset(k,\eps,C')$ and using Observation~\ref{obs:coreset2}, we
  get $C=\coreset(k, \gamma, P)$ where
  $\gamma=(1+\eps)(1+\eps'')-1$. Simplifying, we get
  $\gamma=(1+\eps)(1+ \left(1+\eps \right)^{(L-1)} - 1)-1 =
  \left(1+\eps \right )^L - 1$.
  This proves the inductive case for ${\cal P}(L)$, which completes the proof.
\end{proof}
%----------------
}

%----------------
\begin{lemma}
\label{lemma:cstree-accuracy}
Let $\eps=(c \log r) /\log N$ where $c$ is a small enough constant. 
After observing $N$ base buckets from the stream, a clustering query \clusterquery{} returns a set of $k$ centers $\Psi$ of $S$ whose clustering
cost is a $O(\log k)$-approximation to the optimal clustering for $S$.
\end{lemma}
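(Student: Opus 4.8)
The plan is to split the argument into two self-contained pieces: first, that the point set on which \kmpp is invoked at query time, namely $C_1 \cup \object.C$, is a $(k,\eps')$-coreset of the whole stream $S$ for some \emph{constant} $\eps' < 1$; and second, that composing this coreset guarantee with the \kmpp approximation bound of Theorem~\ref{theo:kmeans++} yields the claimed $O(\log k)$-approximation. The whole burden of the choice $\eps = (c\log r)/\log N$ falls on the first piece.

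For the coreset claim, I would observe that each active bucket making up $C_1$ is, by construction, a level-$\ell_j$ coreset of the contiguous range of base buckets it summarizes, with $\ell_j \le \log_r N$ by Fact~\ref{fact:cstree-fact}. By Lemma~\ref{lemma:cstree-level2} such a bucket is a $(k,(1+\eps)^{\ell_j}-1)$-coreset of its range, and since $(1+\eps)^{\ell_j}-1 \le (1+\eps)^{\log_r N}-1 =: \eps'$, every active bucket is a $(k,\eps')$-coreset of the points it covers. The raw points still sitting in the current batch $\object.C$ are an exact (hence trivially $(k,\eps')$-) coreset of themselves. Because the ranges of the active buckets together with $\object.C$ partition $S$, repeated application of Observation~\ref{obs:coreset1} shows that $C_1 \cup \object.C$ is a $(k,\eps')$-coreset of $S$. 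It then remains to bound $\eps'$: writing the level count as $x = \log_r N = \log N/\log r$, the choice $\eps = (c\log r)/\log N = c/x$ gives $1+\eps' = (1+\eps)^{x} = (1+c/x)^{x} \le e^{c}$, so $\eps' \le e^{c}-1$, a constant strictly below $1$ once $c$ is chosen small enough (e.g.\ $c < \ln 2$).

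For the approximation claim, set $C = C_1 \cup \object.C$ and let $\Psi$ be the output of $\kmpp(k,C)$. Letting $\Psi^*$ be an optimal $k$-means solution for $S$, the upper coreset inequality gives $\phi_{OPT}(C) \le \phi_{\Psi^*}(C) \le (1+\eps')\,\phi_{OPT}(S)$, while the lower coreset inequality gives $\phi_{\Psi}(S) \le \phi_{\Psi}(C)/(1-\eps')$. Applying Theorem~\ref{theo:kmeans++} to $C$ and combining,
\[
\expct{\phi_{\Psi}(S)} \le \frac{1}{1-\eps'}\,\expct{\phi_{\Psi}(C)} \le \frac{8(\ln k + 2)(1+\eps')}{1-\eps'}\,\phi_{OPT}(S).
\]
Since $\eps'$ is a constant bounded away from $1$, the factor $(1+\eps')/(1-\eps')$ is $O(1)$, and the bound is $O(\log k)\cdot\phi_{OPT}(S)$, as required.

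The main obstacle is the coreset step, specifically controlling how coreset error compounds across the $O(\log_r N)$ levels of the tree: a careless bound would let the multiplicative $(1+\eps)$ factors accumulate and blow up. The crux is recognizing that the level-dependent parameterization $\eps = \Theta((\log r)/\log N)$ is tuned precisely to cancel the level count through the estimate $(1+c/x)^x \le e^{c}$, pinning $\eps'$ to a constant. Once that is in place, everything else is routine propagation of the two coreset inequalities through the \kmpp guarantee.
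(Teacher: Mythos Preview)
Your proposal is correct and follows essentially the same approach as the paper: bound the level of every active bucket by $\log_r N$ via Fact~\ref{fact:cstree-fact}, apply Lemma~\ref{lemma:cstree-level2} to get that each is an $\eps'$-coreset with $\eps' = (1+\eps)^{\log_r N}-1$ bounded by a constant through the choice of $\eps$, union them via Observation~\ref{obs:coreset1}, and then sandwich the \kmpp guarantee between the two coreset inequalities. The paper picks $c$ so that $\eps' < 0.1$ and writes out explicit numerical constants, whereas you leave the final factor as $(1+\eps')/(1-\eps') = O(1)$, but the argument is the same.
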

%----------------
\begin{proof}
	After observing $N$ base buckets, Fact~\ref{fact:cstree-fact} indicates
	that all coresets in the coreset tree are at level no greater than $\log_r N$. Using Lemma~\ref{lemma:cstree-level2}, the maximum level coreset is an $\epsilon'$-coreset where
	\[
	\epsilon' = 
	\left[ \left(1 + \frac{c \log r}{\log N}\right)^{\frac{\log N}{\log r}} - 1 \right] 
	\le 
	\left[ e^{\left(\frac{{c}\log r}{\log N}\right) \cdot \frac{\log N}{\log r}} - 1 \right] 
	< 0.1
	\]
	
	Consider that \clusterquery computes \kmpp on the union of two sets, 
	one of the result is \ctcoreset and the other is the partially-filled base bucket
	$\object.C$.  Hence, $\Theta = (\cup_{j} \cup_{B \in Q_j} B) \cup \object.C$
	is the coreset union that is given to \kmpp.  
	Using Observation~\ref{obs:coreset1}, the union set $\Theta$ is a $\epsilon'$-coreset of $S$. 
	Let $\Psi$ be the final $k$ centers generated by running $\kmpp$ on $\Theta$, 
	and let $\Psi_{OPT}$ be the set of $k$ centers which achieves optimal $\km$ clustering
	cost for $S$. From the definition of coreset, when $\epsilon'<0.1$, we have
	\begin{equation}
	\label{eq:accuracy eq1}
	0.9\phi_{\Psi}(S)  \le \phi_{\Psi}(\Theta) \le 1.1\phi_{\Psi}(S)
	\end{equation}
	\begin{equation}
	\label{eq:accuracy eq2}
	0.9\phi_{\Psi_{OPT}}(S)  \le \phi_{\Psi_{OPT}}(\Theta) \le 1.1\phi_{\Psi_{OPT}}(S)
	\end{equation}
	
	Let $\Psi_1$ denote the set of $k$ centers which achieves optimal $\km$
	clustering cost for the union coreset set $\Theta$.  
	Using Theorem~\ref{theo:kmeans++}, we have
	\begin{equation}
	\label{eq:accuracy eq3}
	\expct{ \phi_{\Psi}(\Theta)} \le 8(\ln k+2) \cdot \phi_{\Psi_1}(\Theta)
	\end{equation}
	
	Since $\Psi_1$ is the optimal $k$ centers for $\Theta$, we have 
	\begin{equation}
	\label{eq:accuracy eq4}
	\phi_{\Psi_1}(\Theta) \le \phi_{\Psi_{OPT}}(\Theta)
	\end{equation}
	
	Using Equations~\ref{eq:accuracy eq2}, \ref{eq:accuracy eq3} and \ref{eq:accuracy eq4} we get
	\begin{equation}
	\label{eq:accuracy eq5}
	\expct{\phi_{\Psi}(\Theta)} \le 9(\ln k+2)\cdot\phi_{\Psi_{OPT}}(S) 
	\end{equation}
	
	Using Equations~\ref{eq:accuracy eq1} and \ref{eq:accuracy eq5},
	\begin{equation}
	\label{eq:accuracy eq6}
	\expct{\phi_{\Psi}(S)} \le 10(\ln k+2)\cdot\phi_{\Psi_{OPT}}(S) 
	\end{equation}
	We conclude that $\Psi$ is a factor $O(\log k)$ clustering centers of $S$
	compared to the optimal.
\end{proof}
%--------------------

The following lemma quantifies the memory and time cost of \ct.
%--------------------
\begin{lemma}
\label{thm:cstree-time}
Let $N$ be the number of base buckets observed so far. 
Algorithm $\ct$, including the driver, takes amortized $O(dm)$ time per point, 
using $O\left(dm \cdot \frac{r \log N}{\log r} \right)$ memory. 
The amortized cost of answering a query is 
$O\left( \frac{k d m}{q} \cdot \frac{r \log N}{\log r} \right)$ per point.
\end{lemma}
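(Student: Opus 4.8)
The plan is to establish the three quantities---amortized update time, memory, and amortized query time---separately, reusing the level/bucket counting for both the memory and the query bounds. For the \textbf{memory}, I would combine Fact~\ref{fact:cstree-fact}, which caps the number of nonempty levels at $\log_r N = \tfrac{\log N}{\log r}$, with the bucket invariant, which allows at most $r-1$ buckets per level. Every bucket is a coreset of $m$ weighted points occupying $O(dm)$ words, and the driver's partial bucket $\object.C$ adds one more block of $O(dm)$ words; multiplying the per-bucket cost by the number of buckets yields the stated $O\!\left(dm\cdot\tfrac{r\log N}{\log r}\right)$.

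For the \textbf{update time}, the key is an amortized analysis modeled on incrementing a base-$r$ counter, where merging $r$ buckets into one acts as a carry. First I would note that over $N$ base-bucket insertions a merge of $r$ level-$j$ buckets into a level-$(j+1)$ bucket is triggered exactly $\lfloor N/r^{j+1}\rfloor$ times. Each such merge feeds $rm$ points into the coreset routine to produce a coreset of size $m$, costing $O(d\cdot rm\cdot m)=O(drm^2)$ by Theorem~\ref{theo:coreset-build}. Summing over all levels, the total merge work is
\[
\sum_{j\ge 0}\Big\lfloor \tfrac{N}{r^{j+1}}\Big\rfloor\cdot O(drm^2)
\;\le\; O(drm^2)\cdot\tfrac{N}{r-1}
\;=\; O(dm^2 N),
\]
because the geometric series $\sum_{j\ge0}r^{-(j+1)}=\tfrac{1}{r-1}$ absorbs the factor $r$ from the per-merge cost (for $r\ge 2$ the ratio $\tfrac{r}{r-1}$ is at most a constant). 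Adding the $O(dn)$ cost of funneling points into $\object.C$ and using $n=Nm$, the grand total is $O(dmn)$, i.e., amortized $O(dm)$ per point. I expect this cancellation to be the main obstacle: the per-merge cost grows with $r$ while the merge frequency decays geometrically in $r$, and the bound is clean only because these two effects offset to leave something independent of $r$.

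For the \textbf{query time}, I would bound how much work \clusterquery does per invocation. By the same counting as in the memory argument, at most $O\!\left(\tfrac{r\log N}{\log r}\right)$ active buckets are merged, each of size $m$, so the set handed to \kmpp (together with $\object.C$) has $M=O\!\left(m\cdot\tfrac{r\log N}{\log r}\right)$ points. By Theorem~\ref{theo:kmeans++}, running \kmpp on it costs $O(kdM)=O\!\left(kdm\cdot\tfrac{r\log N}{\log r}\right)$ per query. Finally, invoking the stated assumption of at most $n/q$ queries over the stream, I would multiply this per-query cost by $n/q$ and divide by the $n$ points processed, giving the amortized per-point query cost $O\!\left(\tfrac{kdm}{q}\cdot\tfrac{r\log N}{\log r}\right)$ and completing the proof.
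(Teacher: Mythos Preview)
Your proposal is correct and follows essentially the same approach as the paper's proof: the paper likewise sums the number of level-$j$ bucket creations as a geometric series $\sum_{j\ge 1} N/r^{j}=O(N/r)$, multiplies by the $O(drm^2)$ merge cost to get $O(dnm)$ total (i.e., $O(dm)$ amortized per point), and uses the same ``at most $r-1$ buckets on each of $\log_r N$ levels'' count for both the memory and the size of the input to \kmpp{} at query time. Your write-up is in fact slightly more explicit about the $r/(r-1)$ cancellation, but the argument is otherwise identical.
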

%--------------------
\begin{proof}
  First, the cost of arranging $n$ points into level-$0$ buckets is trivially
  $O(n)$, resulting in $N = n/m$ buckets.  For $j \geq 1$, a level-$j$ bucket is
  created for every $r^j$ buckets, so the number of level-$j$ buckets ever
  created is $N/r^j$.  Hence, across all levels, the total number of buckets
  created is $\sum_{j=1}^\ell \tfrac{N}{r^j} = O(N/r)$. Furthermore, when a
  bucket is created, \ct merges $rm$ points into $m$ points.  By
  Theorem~\ref{theo:coreset-build}, the total cost of creating these buckets is
  $O(\tfrac{N}r \cdot d m^2 r) = O(dnm)$, hence $O(dm)$ amortized time per
  point. 
  In terms of space, each level must have fewer than $r$ buckets, each
  with $m$ points. Therefore, across $\ell \leq \log_r N$ levels,
  the space required is $O(dm \cdot \frac{r \log N}{\log r})$.  
  Finally, when answering a query, the union of all the buckets has at most
  $O(m \cdot \frac{r \log N}{\log r})$ points, computable in the same time as
  the size.  Therefore, \kmpp run on these points plus one base bucket,
  takes $O( \frac{k d m}{q} \cdot \frac{r \log N}{\log r})$.  The amortized bound immediately
  follows.  This proves the theorem.
\end{proof}
%----------------------

As evident from the above lemma, answering a query using \ct is expensive 
compared to the cost of adding a point.  More precisely, when queries
are made rather frequently---every $q$ points,
$q < O(k \cdot \frac{r \log N}{\log r})$---the cost of query
processing is asymptotically greater than the cost of handling point arrivals.
We address this issue in the next section.

%--------------------------------------------------------------------------------------------------
\section{Clustering Algorithms with Fast Queries}
\label{sec:algorithm}

\newcommand{\leftep}{{\tt left}}
\newcommand{\rightep}{{\tt right}}
\newcommand{\level}{{\tt level}}
\newcommand{\cache}{\texttt{cache}\xspace}
\newcommand{\bspan}{{\tt span}}
\newcommand{\lookup}{{\tt lookup}}
\newcommand{\bunion}{{\tt union}}
\newcommand{\bweight}{{\tt weight}}
\newcommand{\epoch}{{\tt epoch}}

%--------------------------------------------------------------------------------------------------

% In this section we present methods for streaming clustering with a focus on query time. We begin with explaining an algorithm $\ct$ from prior work, and present our ideas while building on this. We suppose that there is one query for every $q$ data points.

This section describes algorithms for streaming clustering with an emphasis on
query time. 
%The first algorithm \cctree improves upon \ct from prior work by
%.... \textbf{XXX - write a quick summary of the algo}

%\input{algo-driver}
%\input{cstree}
\newcommand*{\keyset}{\texttt{keySet}\xspace}
%-----------------------------------------------------------
\subsection{Algorithm $\cctree$: Coreset Tree with  Caching}
\label{sec:cctree}
%-----------------------------------------------------------
The $\cctree$ algorithm uses the idea of ``coreset caching'' to speed up
query processing by reusing coresets that were constructed during prior
queries. In this way, it can avoid merging a large number of coresets at query
time. When compared with $\cstree$, the $\cctree$ algorithm 
is with the same update process ($\ctupdate$), but apply caching
coreset during the query.

%--------------------
\begin{figure*}[tb]
  \includegraphics[width=0.98\textwidth]{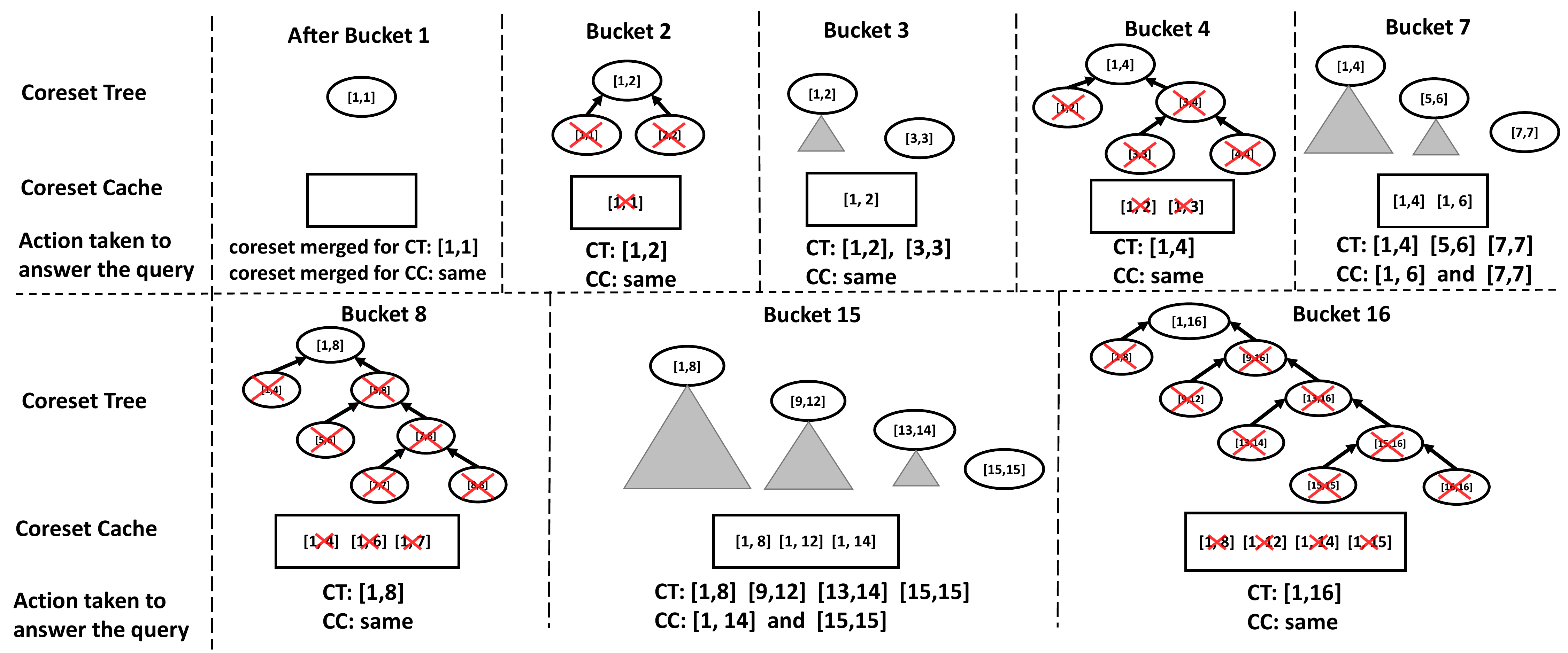}
  \caption{Illustration of Algorithm \cc, showing the states of coreset tree and cache after batch $1$, $2$, $3$, $4$, $7$, $8$, $15$ and $16$. The notation $[l,r]$ denotes a coreset of all points in buckets $l$ to $r$, both endpoints inclusive. The coreset tree consists of a set of coresets, each of which is a base bucket or has been formed by merging multiple coresets. Whenever a coreset is merged into a another coreset (in the tree) or discarded (in the cache), the coreset is marked with an ``X''. We suppose that a clustering query arrives after seeing each batch, and describe the actions taken to answer this query (1)~if only \ct was used, or (2)~if \cc was used along with \ct.}
\label{fig:algo-cc}
\end{figure*}
%--------------------

In addition to the coreset tree $\cstree$, the $\cctree$ algorithm
also has an additional \emph{coreset cache} denoted by $\cache$, that stores a
subset of coresets that were previously computed. When a new query has
to be answered, $\cctree$ avoids the cost of merging coresets from
multiple levels in the coreset tree. Instead, it reuses previously
cached coresets and retrieves a small number of additional coresets
which are the same level of the coreset tree, thus leading to less 
computation at query time.

However, the level of the resulting coreset increases linearly with
the number of merges a coreset is involved in. For instance, suppose 
we recursively merge the current coreset with the next arriving base bucket
of coreset to get a new coreset, and so on, for $N$ batches. The resulting 
coreset will have a level of $\Theta(N)$, which can lead to a poor clustering
accuracy. Additional care is needed to ensure that the level of a coreset
is controlled while caching is used.

\noindent\textbf{Details:} Each cached coreset is a summary of base
buckets $1$ through some number $u$. We call this number $u$ as the
\emph{right endpoint} of the coreset and use it as the key/index into
the cache. We call the interval $[1,u]$ as the ``span'' of the
bucket. To explain which coresets can be reused by the algorithm, 
we introduce the following definitions.

%---------------------------------------------
For integers $n > 0$ and $r > 0$, consider the unique decomposition of $n$
according to powers of $r$ as $n = \sum_{i=0}^j \beta_i r^{\alpha_i}$, where
$0 \leq \alpha_0 < \alpha_1 \ldots < \alpha_j$ and $0 < \beta_i < r$ for each
$i$. The $\beta_i$s can be viewed as the non-zero digits in the representation
of $n$ as a number in base $r$. Let $\minor(n,r) = \beta_0 r^{\alpha_0}$, the
smallest term in the decomposition, and $\major(n,r) = n - \minor(n,r)$. Note
that when $n$ is in the form of single term $\beta r^{\alpha}$ 
where $0 < \beta < r$ and $\alpha \geq 0$, $\major(n) = 0$.

For $\kappa =1 \ldots j$, let $n_\kappa = \sum_{i=\kappa}^{j} \beta_i r^{\alpha_i}$. 
$n_\kappa$ can be viewed as the number obtained by dropping the $\kappa$ smallest 
non-zero digits in the representation of $n$ as a number in base $r$. 
The set $\prefixsum(n, r)$ is defined as $\{n_{\kappa} \mid \kappa = 1 \ldots j \}$. 
When $n$ is of the form $\beta r^{\alpha}$ where $0 < \beta < r$, $\prefixsum(n, r)=\emptyset$. 
%---------------------------------------------

For instance, suppose $n=47$ and $r=3$. Since $47 = 1\cdot 3^3 + 2 \cdot 3^2 + 2 \cdot 3^0$, 
we have $\minor(47,3) = 2, \major(47,3) = 45$, and $\prefixsum(47,3) = \{27, 45\}$.

\cc caches every coreset whose right endpoint is in $\prefixsum(N,r)$.
When a query arrives when $N$ buckets received, the task is to compute a coreset
whose span is $[1,N]$. \cc partitions $[1,N]$ as $[1,N_1] \cup [N_1+1,N]$ 
where $N_1 = \major(N,r)$. 
Out of these two intervals, suppose the query comes after every new base bucket received, 
this guarantees that $[1,N_1]$ should be available in the cache. 
$[N_1+1,N]$ is retrieved from the coreset tree, through the union of no more than $(r-1)$ coresets. 
This needs a merge of no more than $r$ coresets. This is in contrast with \ct, 
which may need to merge as many as $(r-1)$ coresets at each level of the tree, 
resulting in a merge of up to $(r-1) \cdot \frac{\log N}{\log r}$ coresets for all levels at query time.

The algorithm for maintaining the cache and answering clustering queries 
is shown in Algorithm~\ref{algo:cctree-functions}. 
The caching process works along with the query process ($\cccoreset$), in a way that
making our algorithm be flexible with the queries by users. When the queries are frequent, 
our algorithm utilizes the cache to provide a faster query speed and a guarantee on the accuracy of 
clustering result. Otherwise in case of the queries are infrequent, we will show that the time complexity 
of updating the cache is at the same level of the query process without caching (algorithm \ct). 
This caching design helps the clustering system to adapt in the faces of both burst queries and occasional queries. 
Figure~\ref{fig:algo-cc} shows an example of how the \cc algorithms updates the cache and answers queries
using cached coresets.

Note that to keep the size of the cache small, as new base buckets arrive,
\ccupdate{} will ensure that ``stale'' or unnecessary coresets are removed. 
The following fact relates to what the cache should store.

%-------------------
\begin{algorithm}
  \caption{Coreset Tree with Caching}
  \label{algo:cctree-functions}
  \label{algo:cctree-init}
  \label{algo:cctree-update}
  \label{algo:cctree-coreset}
\fnDef{$\ccinit(r, k, \eps)$}{
  % Remember the parameters $r$, $k$, and $\eps$.\;
  \tcp{The coreset tree}
  $Q \gets \ctinit{}(r, k, \eps)$ \;
  $\cache \gets \emptyset$
}
\fnDef{$\ccupdate(b,N)$}{
\tcp{$b$ is a new bucket and $N$ is the number of buckets received so far.}
  % Remember $N$\;
  $Q.\ctupdate(b, N)$\;
}
\fnDef{$\cccoreset()$}{
\tcp{Return a coreset of points in buckets $1$ till $N$}
	\uIf{$N$ exists in \cache}
	{
		\Return coreset for buckets $[1, N]$ from the \cache \;
	}
   	$N_1 \gets \major(N,r)$ and $N_2 \gets \minor(N,r)$\;
	Let $N_2=\beta r^{\alpha}$ where $\alpha$ and $\beta < r$ are positive integers\;
	\tcp{coreset for buckets spanning $[1, N_1]$ is not in the \cache}
	\eIf{$N_1$ does not exist in \cache}
	{$U \gets \ctcoreset()$ \;}
	{
		\tcp{$A$ is the coreset for buckets $N_1+1, N_1+2, \ldots, (N_1+N_2)=N$ and is retrieved from the coreset tree}   
   		$a \gets \cup_{B \in Q_\alpha} B$\;
   		\tcp{$b$ is the coreset for buckets spanning $[1, N_1]$, retrieved from the cache} 
   		$b \gets \cache.\lookup(N_1)$ \;
   		$U \gets a \cup b$ \;
   		
	}	
   	\tcp{Store coreset into \cache}
   	%\If{$r$ divides $N$}
   	%{
   		$C \gets \coreset(k, \epsilon, U)$\; 
    	Add coreset $C$ to $\cache$ using key $N$\; 
    	Remove each bucket from $\cache$ whose key does not appear in $\prefixsum(N) \cup \{N\}$  \;
   	%}   
   	\Return {$C$} 
}
\end{algorithm}
%------------------

\begin{fact}
  \label{fact:prefixextension}
  Let $r \geq 2$. For each $N \in \Z^+$,
  $\prefixsum(N+1, r) \subseteq \prefixsum(N,r) \cup \{ N \}$.
\end{fact}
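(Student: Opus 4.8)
The plan is to sidestep a direct case analysis on carries by first recasting \prefixsum in terms of base-$r$ \emph{truncations}. Writing $n = \sum_{i=0}^j \beta_i r^{\alpha_i}$ as in the definition, I would observe that zeroing out all base-$r$ digits of $n$ below position $q$ yields exactly $\lfloor n/r^q\rfloor\, r^q = \sum_{\alpha_i \ge q}\beta_i r^{\alpha_i}$, and that as $q$ sweeps upward from $0$ this produces precisely the values $n_0 = n, n_1, \dots, n_j$ (one smallest nonzero digit is dropped each time a threshold $\alpha_i$ is crossed), and then $0$ once $q > \alpha_j$. This would establish the clean identity
\[
\prefixsum(n,r)\cup\{n\} \;=\; \bigl\{\,\lfloor n/r^q\rfloor\, r^q : q \ge 0\,\bigr\}\setminus\{0\},
\]
which also correctly covers the degenerate single-term case where $\prefixsum(n,r)=\emptyset$. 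This reformulation is the crux; everything after it is short.

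Granting the identity, the main argument is a single observation about how truncation interacts with adding one. Let $v \in \prefixsum(N+1,r)$. By the identity, $v = \lfloor (N+1)/r^q\rfloor\, r^q$ for some $q \ge 0$, and since $v$ lies in $\prefixsum(N+1,r)$ rather than being $N+1$ itself, we have $0 < v < N+1$; in particular $r^q \nmid (N+1)$, for otherwise the truncation would equal $N+1$. I would then note that $\lfloor(N+1)/r^q\rfloor$ exceeds $\lfloor N/r^q\rfloor$ exactly when a multiple of $r^q$ lies in the interval $(N, N+1]$, i.e.\ exactly when $r^q \mid (N+1)$; since that fails here, $\lfloor(N+1)/r^q\rfloor = \lfloor N/r^q\rfloor$, so $v = \lfloor N/r^q\rfloor\, r^q$. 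As $v > 0$, this is a nonzero truncation of $N$, whence by the identity again $v \in \prefixsum(N,r)\cup\{N\}$, giving the claimed inclusion.

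The step I expect to be the main obstacle is the truncation identity itself---specifically, verifying that sweeping $q$ through the gaps between consecutive exponents $\alpha_i$ reproduces exactly the sets $\{n_\kappa\}$ and nothing spurious, and that the endpoints ($q=0$ giving $n$, and $q>\alpha_j$ giving $0$) are handled correctly. Once that bookkeeping is settled, the carry behavior that makes a naive proof fiddly---a run of $(r-1)$-digits collapsing on increment---is absorbed automatically: a carry alters $N+1$ only in the low-order positions that any \emph{surviving} truncation has already zeroed out, so surviving truncations of $N+1$ coincide with truncations of $N$. As a sanity check I would confirm the three regimes against the definition: when the last digit of $N$ is $0$ one gets $\prefixsum(N+1,r)=\prefixsum(N,r)\cup\{N\}$, when it is nonzero but below $r-1$ one gets $\prefixsum(N+1,r)=\prefixsum(N,r)$, and when a carry occurs one gets the stronger containment $\prefixsum(N+1,r)\subseteq\prefixsum(N,r)$---so the stated inclusion holds in every case.
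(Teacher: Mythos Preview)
Your proof is correct. The paper itself does not prove this fact; it is stated without justification and then immediately used. So there is no ``paper's approach'' to compare against, and your argument fills a genuine gap.

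The truncation identity $\prefixsum(n,r)\cup\{n\} = \{\lfloor n/r^q\rfloor\, r^q : q \ge 0\}\setminus\{0\}$ is the right lever: once $\prefixsum$ is expressed via floors, the inclusion reduces to the elementary observation that $\lfloor (N+1)/r^q\rfloor = \lfloor N/r^q\rfloor$ unless $r^q \mid (N+1)$, and that divisibility case is precisely the one excluded by $v < N+1$. Your handling of the endpoints ($q=0$ giving $n$, large $q$ giving $0$) and of the single-term degenerate case is accurate. The sanity-check trichotomy at the end (last digit $0$, last digit in $\{1,\dots,r-2\}$, last digit $r-1$ with carry) is also correct and a nice confirmation that the inclusion is tight in the first case and strict in the third.

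One minor remark on exposition: when you assert that any $v\in\prefixsum(N+1,r)$ equals $\lfloor(N+1)/r^q\rfloor\,r^q$ for \emph{some} $q$, you are using the identity in the direction $\prefixsum(n,r)\cup\{n\}\subseteq\{\lfloor n/r^q\rfloor r^q:q\ge 0\}\setminus\{0\}$; it may help the reader to point out explicitly that $q=\alpha_\kappa$ (the exponent of the smallest surviving term) realizes $v=(N+1)_\kappa$. But this is bookkeeping you already flagged as the main thing to verify, and it goes through exactly as you describe.
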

%--------------------------------

Since $\major(N,r) \in \prefixsum(N,r)$ for each $N$, 
if the query comes after each new base bucket received,
we can always retrieve the bucket with span $[1,\major(N,r)]$ from $\cache$.

% -----------------
\begin{lemma}
\label{lemma:cache correctness}
Suppose query comes after receiving each new base bucket. 
Immediately before base bucket $N$ arrives, each $y \in \prefixsum(N,r)$ appears in the key set of \cache.
\end{lemma}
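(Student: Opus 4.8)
The plan is to prove a slightly stronger statement by induction on the number of buckets processed, from which the lemma follows at once. Concretely, I would establish the invariant: \emph{after the query triggered by the arrival of base bucket $M$ has been answered, the key set of $\cache$ is exactly $\prefixsum(M,r) \cup \{M\}$.} Because a query is posed after every base bucket by hypothesis, the phrase ``immediately before base bucket $N$ arrives'' describes precisely the cache state after bucket $N-1$ has been processed; thus the invariant at $M=N-1$, combined with Fact~\ref{fact:prefixextension}, will close the argument. For $N=1$ the claim is vacuous since $1=1\cdot r^0$ is a single-term decomposition and $\prefixsum(1,r)=\emptyset$, so the interesting range is $N\ge 2$. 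For the base case $M=1$ of the invariant, when $\cccoreset$ runs it inserts the coreset for $[1,1]$ under key $1$ and then prunes every key outside $\prefixsum(1,r)\cup\{1\}=\{1\}$, leaving the key set equal to $\{1\}=\prefixsum(1,r)\cup\{1\}$, as needed.

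For the inductive step, assume the invariant holds after bucket $M$, so the key set is $\prefixsum(M,r)\cup\{M\}$. I would then trace the two cache operations performed by $\cccoreset$ for $N=M+1$: first it inserts key $M+1$, enlarging the key set to $\prefixsum(M,r)\cup\{M\}\cup\{M+1\}$; then it removes every key not in $\prefixsum(M+1,r)\cup\{M+1\}$. The resulting key set is therefore the intersection
\[
\bigl(\prefixsum(M,r)\cup\{M\}\cup\{M+1\}\bigr)\cap\bigl(\prefixsum(M+1,r)\cup\{M+1\}\bigr).
\]
The intersection is contained in $\prefixsum(M+1,r)\cup\{M+1\}$ trivially; for the reverse containment I would invoke Fact~\ref{fact:prefixextension} with its parameter set to $M$, giving $\prefixsum(M+1,r)\subseteq\prefixsum(M,r)\cup\{M\}$, so every element of $\prefixsum(M+1,r)\cup\{M+1\}$ already lies in the first factor. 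Hence the intersection equals $\prefixsum(M+1,r)\cup\{M+1\}$, completing the induction.

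Finally, applying the invariant at $M=N-1$ shows the key set immediately before bucket $N$ arrives is $\prefixsum(N-1,r)\cup\{N-1\}$; one further application of Fact~\ref{fact:prefixextension} yields $\prefixsum(N,r)\subseteq\prefixsum(N-1,r)\cup\{N-1\}$, so every $y\in\prefixsum(N,r)$ is indeed a key of $\cache$. I expect the main obstacle to be not any single calculation but choosing the correct strengthening: proving only the containment ``$\prefixsum(N,r)\subseteq$ keys'' does not obviously survive the pruning step, whereas the \emph{exact} characterization of the key set is preserved by both the insertion and the removal. This is also exactly where the ``query after every base bucket'' hypothesis is essential, since it guarantees $\cccoreset$ refreshes the cache at every consecutive value of $M$, so the induction never skips a step.
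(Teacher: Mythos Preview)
Your proof is correct and follows essentially the same inductive route as the paper's, using Fact~\ref{fact:prefixextension} to carry the cache contents from step $M$ to step $M+1$. The paper in fact gets by with the weaker containment invariant $\prefixsum(N,r)\subseteq\text{keys}$ directly, since the pruning step removes only keys \emph{outside} $\prefixsum(N,r)\cup\{N\}$ and therefore never deletes any element of $\prefixsum(N,r)$; your strengthening to exact equality of the key set is harmless but not actually needed.
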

%---------------
\begin{proof}
Proof is by induction on $N$. 
The base case $N=1$ is trivially true, since $\prefixsum(1,r)$ is empty set. 
For the inductive step, assume that before bucket $N$ arrives, 
each $y \in \prefixsum(N, r)$ appears in $\cache$. 
During the query after receiving bucket $N$, 
we store the coreset whose span is $[1, N]$ to the cache.
By Fact~\ref{fact:prefixextension}, we know that
$\prefixsum(N+1, r) \subseteq \prefixsum(N,r) \cup \{N\}$. 
Using this, every bucket with a right endpoint in $\prefixsum(N+1,r)$ 
is present in $\cache$ at the beginning of bucket $(N+1)$ arrives. 
Hence, the inductive step is proved.
\end{proof}
%-----------------

When the queries come less frequent, the cache is less frequently updated as well. 
Then it can not guarantee that the \major is in the cache, that is $N_1$ may not 
exist in the \cache. In this case, the $\cccoreset$ will switch back to the $\ctcoreset$ method in \ct. 
We analyze the time complexity of algorithm $\cc$ under the assumption that we can always use cache 
to accelerate the current query. In practice, we run experiments to show the result of 
algorithm performance when less frequent queries.

%-----------------
\begin{lemma}
\label{lemma:cctree-level}
When queried after inserting base bucket $N$, Algorithm~\cccoreset returns a coreset 
whose level in no more than $\left\lceil 2 \log_r N \right\rceil - 1$.
\end{lemma}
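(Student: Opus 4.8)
The plan is to track the coreset level as a function of $N$ through the recursive structure of $\cccoreset$, organized around the base-$r$ decomposition that underlies $\major$, $\minor$, and $\prefixsum$. Write $N = \sum_{i=0}^{j} \beta_i r^{\alpha_i}$ with $\alpha_0 < \alpha_1 < \cdots < \alpha_j$ and $0 < \beta_i < r$, so that $N$ has $j+1$ nonzero base-$r$ digits and its most significant one occupies position $\alpha_j$. First I would identify the level of the coreset that $\cccoreset$ stores under key $N$. Assuming queries are frequent enough that the needed cache entry is present---guaranteed by Lemma~\ref{lemma:cache correctness}---the returned coreset is $\coreset(k, \epsilon, a \cup b)$, where $b = \cache.\lookup(\major(N,r))$ has span $[1, \major(N,r)]$ and $a = \bigcup_{B \in Q_{\alpha_0}} B$ is the union of the tree buckets at level $\alpha_0$ summarizing the last $\minor(N,r)$ base buckets. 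Since a level-$\ell$ bucket of the tree is a level-$\ell$ coreset, and since building a coreset over a union of coresets raises the level by exactly one above the largest input level, the level $L(N)$ of the stored coreset obeys
\[ L(N) = 1 + \max\!\bigl(\alpha_0,\, L(\major(N,r))\bigr). \]

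Next I would unroll this recurrence. Because $\major(\cdot, r)$ deletes the smallest nonzero digit, iterating it visits exactly the values in $\prefixsum(N,r)$, peeling off the positions $\alpha_0, \alpha_1, \ldots$ in increasing order and ending at the single-term count $\beta_j r^{\alpha_j}$. A short induction on the number of nonzero digits---using $\alpha_0 < \cdots < \alpha_j$ so that the running maximum is always the accumulated top term rather than the freshly peeled digit---then yields the closed form
\[ L(N) = \alpha_j + j, \]
that is, the most significant digit position plus the number of the remaining nonzero digits.

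Finally I would translate $L(N) = \alpha_j + j$ into the stated logarithmic bound. The key difficulty is that $\alpha_j$ and $j$ can each be as large as $\Theta(\log_r N)$ simultaneously (for instance when $N$ is all ones in base $r$), so bounding them separately and summing loses a factor and misses the sharp constant $2$. Instead I would combine two complementary lower bounds on $N$: from the leading digit, $N \ge r^{\alpha_j}$; and from the fact that the $j+1$ nonzero digits occupy distinct positions (so $\alpha_i \ge i$), $N \ge \sum_{i=0}^{j} r^{\alpha_i} \ge \sum_{i=0}^{j} r^{i} \ge r^{j}$. Multiplying gives $N^2 \ge r^{\alpha_j + j} = r^{L(N)}$, and for $N \ge 2$ at least one factor is strict, so $L(N) < 2\log_r N$; as $L(N)$ is an integer, this yields $L(N) \le \lceil 2\log_r N\rceil - 1$.

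The main obstacle is precisely this last step: the naive estimates $\alpha_j \le \log_r N$ and $j \le \alpha_j$ give only $L(N) \le 2\log_r N + O(1)$ and are provably too weak on the all-ones inputs where both quantities are extremal; the product-of-two-lower-bounds argument is what recovers the sharp constant. A secondary point needing care is the terminal base case of the recurrence---the level charged to a pure coreset-tree query on a single-term count---together with the finitely many small values of $N$ for which $\lceil 2\log_r N\rceil - 1$ degenerates; these I would dispatch by direct inspection.
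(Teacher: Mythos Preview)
Your proposal is correct and follows essentially the same route as the paper: both organize the induction around the base-$r$ digit decomposition of $N$, peel off the least significant nonzero digit via $\major$, and bound the resulting coreset level by (top-digit position) $+$ (number of nonzero digits)${}-1$; the paper states this as level $\le \lceil \log_r N \rceil + \chi(N) - 1$ and finishes with $\chi(N) \le \lceil \log_r N \rceil$. One remark: the multiplicative step you flag as the ``main obstacle'' is not actually needed---your two ingredients $N \ge r^{\alpha_j}$ and $N \ge r^{j}$ are exactly the paper's $\alpha_j \le \log_r N$ and $j \le \alpha_j$, and adding them directly already gives $L(N) = \alpha_j + j \le 2\alpha_j \le 2\lfloor \log_r N \rfloor$, the same bound your product yields.
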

%-----------------
\begin{proof}
Let $\chi(N)$ denote the number of non-zero digits in the representation of $N$ as a number in base $r$. 
We show that the level of the coreset returned by Algorithm~\cccoreset is no more than 
$\left\lceil \log_r N \right\rceil + \chi(N)-1$. 
Since $\chi(N) \le \left\lceil \log_r N \right\rceil$, the lemma follows.

The proof is by induction on $\chi(N)$. If $\chi(N)=1$, then $\major(N,r)=0$,
and the coreset is retrieved directly from the coreset tree $Q$. 
By Fact~\ref{fact:cstree-fact}, each coreset in $Q$ is at a level 
no more than $\lceil \log_r N \rceil$, and the base case follows. 
Suppose the claim was true for all $N$ such that $\chi(N) = t$. Consider $N$ such that
$\chi(N)=(t+1)$. The algorithm computes $N_1 = \major (N,r)$, and retrieves the
coreset with span $[1,N_1]$ from the cache. Note that $\chi(N_1) = t$. By the
inductive hypothesis, the coreset for span $[1,N_1]$ is at a level
$\left\lceil \log_r N \right\rceil + t - 1$. The coresets for span
$[N_1+1,N]$ are retrieved from the coreset tree; note there are multiple such
coresets, but each of them is at a level no more than
$\left\lceil \log_r N \right\rceil$, using Fact~\ref{fact:cstree-fact}.  
The level of the union coreset for span $[1,N]$ is no more than
$\left\lceil \log_r N \right\rceil + t$, proving the inductive case.
\end{proof}
%-----------------

With the coreset level bounded, we can give the guarantee on the accuracy of clustering centers.
Let the accuracy parameter $\eps=\frac{c \log r}{2\log N}$, where $c < \ln{1.1}$.
%-----------------
\begin{lemma}
\label{lemma:cctree-accuracy}
After observing $N$ buckets from the stream, when using clustering data structure \cc, 
Algorithm~\clusterquery returns a set of $k$ points whose clustering cost 
is within a factor of $O(\log k)$ of the optimal $k$-means clustering cost.
\end{lemma}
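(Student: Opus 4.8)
The plan is to follow the same template as the accuracy proof for \ct{} (Lemma~\ref{lemma:cstree-accuracy}), with the one essential modification that the coreset produced by \cc{} sits at roughly twice the level of the one produced by \ct. Concretely, I would first invoke Lemma~\ref{lemma:cctree-level} to note that the coreset $C_1$ returned by $\cccoreset$ after bucket $N$ has level at most $\lceil 2\log_r N\rceil - 1 \le 2\log_r N$. The rest of the argument then reduces to showing that, with the prescribed choice $\eps = \tfrac{c\log r}{2\log N}$, this level is small enough that $C_1$ remains a sufficiently accurate (say, $0.1$-accurate) coreset of the first $N$ base buckets.

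The key calculation is to feed the level bound into Lemma~\ref{lemma:cstree-level2}: a level-$\ell$ coreset is an $\eps'$-coreset with $\eps' = (1+\eps)^{\ell} - 1$. Substituting $\ell \le 2\log_r N = \tfrac{2\log N}{\log r}$ and $\eps = \tfrac{c\log r}{2\log N}$, the exponent collapses:
\[
\eps' \;\le\; \left(1+\tfrac{c\log r}{2\log N}\right)^{\frac{2\log N}{\log r}} - 1 \;\le\; e^{\,c} - 1 \;<\; 0.1,
\]
where the last step uses $c < \ln 1.1$. The factor of $2$ in the denominator of $\eps$ is precisely what compensates for the doubled level; this is the one place where the \cc{} analysis genuinely differs from the \ct{} analysis, and it is where I expect the main (though still modest) subtlety to lie---one must be sure that Lemma~\ref{lemma:cctree-level}'s bound of $\approx 2\log_r N$, rather than $\log_r N$, is the correct exponent, and that the ceiling and the ``$-1$'' do not spoil the inequality.

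With $\eps' < 0.1$ established, the remainder is identical to the \ct{} proof. The driver \clusterquery{} combines $C_1$ with the partially filled base bucket $\object.C$; by Observation~\ref{obs:coreset1} the union $\Theta$ is still an $\eps'$-coreset of the entire stream $S$. I would then replay the same chain of inequalities as in Lemma~\ref{lemma:cstree-accuracy}: sandwich $\phi_{\Psi}(S)$ and $\phi_{\Psi_{OPT}}(S)$ between $0.9$ and $1.1$ multiples of their $\Theta$-counterparts using the coreset property, apply Theorem~\ref{theo:kmeans++} to get $\expct{\phi_{\Psi}(\Theta)} \le 8(\ln k+2)\,\phi_{\Psi_1}(\Theta)$ for the optimal centers $\Psi_1$ of $\Theta$, and use optimality $\phi_{\Psi_1}(\Theta) \le \phi_{\Psi_{OPT}}(\Theta)$. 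Composing these yields $\expct{\phi_{\Psi}(S)} \le 10(\ln k + 2)\,\phi_{\Psi_{OPT}}(S)$, i.e.\ an $O(\log k)$-approximation, as claimed. Since the accuracy threshold $0.1$ is the same as in the \ct{} proof, the same constant $10$ survives verbatim, and the only real work is the level-to-$\eps'$ conversion above.
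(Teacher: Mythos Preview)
Your proposal is correct and follows essentially the same approach as the paper: invoke Lemma~\ref{lemma:cctree-level} for the $2\log_r N$ level bound, plug into Lemma~\ref{lemma:cstree-level2} with $\eps = \tfrac{c\log r}{2\log N}$ to get $\eps' \le e^c - 1 < 0.1$, and then replay the coreset-to-approximation chain from Lemma~\ref{lemma:cstree-accuracy}. In fact you spell out the final chain of inequalities in more detail than the paper does, which simply defers to Lemma~\ref{lemma:cstree-accuracy} after the $\eps' < 0.1$ calculation.
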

%-----------------
\begin{proof}
The proof is similar as Lemma~\ref{lemma:cstree-accuracy}. From Lemma~\ref{lemma:cctree-level}, 
we know that the level of a coreset returned is no more than $\left\lceil 2 \log_r N \right\rceil - 1$. 
Using Lemma~\ref{lemma:cstree-level2}, the returned coreset, say $C$, is an $\eps'$-coreset where 
$\eps' = \left[ \left(1 + \frac{{c}\log r}{2\log N}\right)^{\frac{2\log N}{\log r}} - 1 \right] \le \left[ e^{\left(\frac{{c}\log r}{2\log N}\right) \cdot \frac{2\log N}{\log r}}-1 \right] < 0.1$.
Following an argument similar to that of Lemma~\ref{lemma:cstree-accuracy}, we arrive at the result.
\end{proof}
%----------------------

The following lemma shows the time and space complexity of the \cache. 
We show that the time on updating the \cache is at least at the same level of 
the time of answering a query in Algorithm~$\ct$, and can be better to be in linear
scale of $r$ instead of $\frac{r \log N}{\log r}$.
%-----------------------
\begin{lemma}
\label{lemma:cctree-time}
Algorithm~\ref{algo:cctree-update} processes a stream of points 
using amortized time $O(dm)$ per point, 
using memory of $O\left(dm \cdot \frac{r \log N}{\log r} \right)$. 
The amortized cost of answering a query is $O\left(\frac{kdm}{q} \cdot r \right)$.
\end{lemma}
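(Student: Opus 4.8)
The plan is to handle the three claims—update cost, memory, and query cost—separately, leaning on the \ct{} analysis of Lemma~\ref{thm:cstree-time} wherever \cc{} and \ct{} behave identically and isolating the cache's contribution where they differ. I would first dispatch the update cost quickly: \ccupdate{} merely forwards its argument to \ctupdate{} and performs no cache maintenance (every cache insertion and removal happens inside \cccoreset{}, i.e.\ at query time). Hence the per-point update cost of \cc{} is identical to that of \ct{}, which is amortized $O(dm)$ per point by Lemma~\ref{thm:cstree-time}, and nothing further is needed.

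Next I would bound the memory. The coreset-tree component is unchanged from \ct{}, so it occupies $O\!\left(dm \cdot \tfrac{r\log N}{\log r}\right)$ by Lemma~\ref{thm:cstree-time}. For the cache, the key observation is that the removal step in \cccoreset{} keeps only coresets whose keys lie in $\prefixsum(N,r)\cup\{N\}$. Since the elements of $\prefixsum(N,r)$ are in bijection with the nonzero base-$r$ digits of $N$, there are at most $O(\log_r N) = O(\tfrac{\log N}{\log r})$ cached coresets, each of size $m$ and hence $O(dm)$ words. Thus the cache contributes only $O\!\left(dm\cdot\tfrac{\log N}{\log r}\right)$, which is dominated by the tree, and the total is $O\!\left(dm \cdot \tfrac{r\log N}{\log r}\right)$ as claimed.

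The heart of the argument is the query cost, established under the stated assumption that queries are frequent enough for the cache to always be usable. The crucial structural claim is that answering a query merges only $O(r)$ coresets, versus the $O(\tfrac{r\log N}{\log r})$ merged by \ct{}. Writing $N_1 = \major(N,r)$ and $N_2 = \minor(N,r) = \beta r^\alpha$ with $\beta < r$, Lemma~\ref{lemma:cache correctness} guarantees that the coreset with span $[1,N_1]$ is in the cache, retrievable as a single $m$-point coreset. The remaining span $[N_1+1,N]$ consists of exactly the $\beta \le r-1$ buckets residing at level $\alpha$ of the coreset tree, namely $\cup_{B\in Q_\alpha}B$, contributing at most $(r-1)m$ points. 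Thus the union $U$ passed on has $O(rm)$ points, and—exactly as in the \ct{} query analysis—running \kmpp{} on $O(rm)$ points costs $O(kdrm)$ by Theorem~\ref{theo:kmeans++}. With one query per $q$ arrivals, amortizing $O(kdrm)$ over $q$ points gives $O\!\left(\tfrac{kdm}{q}\cdot r\right)$ per point.

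The main obstacle is justifying that $[N_1+1,N]$ is covered by at most $r-1$ tree buckets all sitting at a single level $\alpha$. This requires tying together the base-$r$ representation underlying $\minor$ and $\major$ with the coreset-tree invariant (level $i$ holds $s_i < r$ buckets, each spanning $r^i$ base buckets), and checking that the least-significant nonzero digit $\beta r^\alpha$ corresponds to the most-recent $N_2$ base buckets—precisely $[N_1+1,N]$—so that the buckets of one level suffice. A secondary subtlety is the extra coreset-construction step inside \cccoreset{} (absent from \ctcoreset{}): by Theorem~\ref{theo:coreset-build} it costs $O(drm^2)$ on the $O(rm)$-point union, which matches the $O(kdrm)$ query bound precisely in the regime $m = O(k)$ that the paper adopts, so it does not change the stated asymptotics; the remaining cache bookkeeping (one insertion plus a scan of the $O(\tfrac{\log N}{\log r})$ keys) is of lower order.
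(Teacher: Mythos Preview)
Your proposal is correct and mirrors the paper's own proof almost exactly: the paper likewise dispatches the update cost by noting \ccupdate{} is identical to \ctupdate{}, bounds the cache size by $|\prefixsum(N,r)\cup\{N\}| = O(\log_r N)$ so that the tree dominates memory, and bounds the query cost by observing that one cached coreset plus at most $r-1$ level-$\alpha$ tree buckets give a union of $O(mr)$ points on which the coreset construction and \kmpp{} together cost $O(kdmr)$. Your additional remarks on why the span $[N_1+1,N]$ sits entirely at level $\alpha$ and on the $O(drm^2)$ coreset-construction step are more explicit than the paper's treatment but do not deviate from its argument.
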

%-----------------

%-----------------
\begin{proof}
The runtime for Algorithm~\ccupdate~ is same as the Algorithm~\ctupdate. 
The update time for \ccupdate is $O(dm)$ per point.

From Lemma~\ref{lemma:cache correctness}, $N_1$ is always in the \cache. 
Algorithm~\cccoreset combines no more than $r$ buckets, 
out of which there is no more than one bucket from the cache, 
and no more than $(r-1)$ from the coreset tree. 
From Theorem~\ref{theo:coreset-build}, the time to compute coreset 
on $O(mr)$ points is $O(d m^2 r)$, and coreset size $m$ is $O(k)$.
The time to compute coreset $C$ is $O(kdmr)$.   
To compute the $k$ centers from $U$, 
it is necessary to run $\kmpp$ on $O(mr)$ points using time $O(kdmr)$. 
The amortized query time per point is $O\left( \frac{kdmr}{q} \right)$.

\remove {
In the worst case, when the major part is always not in the cache. 
Comparing to Algorithm~\ct, the additional time cost is updating the cache. 
The number of points in set $U$ is at most $(m \cdot \frac{r \log N}{\log r})$, 
computation time is $O(d m^2 r \cdot \log_r N)$. 
As the size of coreset is $O(k)$, 
the addition time for constructing coreset is $O(kdm \cdot \frac{r \log N}{\log r})$ 
at the same level of the query time by \ct.
}

The coreset tree $Q$ uses space $O\left( dm \cdot \frac{r \log N}{\log r} \right)$. 
After processing bucket $N$, $\cache$ only stores those buckets that are 
corresponding to $\prefixsum(N,r) \cup \{ N \}$. 
The number of such buckets possible is $O(\log_r N )$, 
so the space cost of $\cache$ is $O(dm \cdot \frac{\log N}{\log r})$. 
The space complexity follows.
\end{proof}
%-----------------

\subsection{Algorithm $\rcc$: Recursive Coreset Cache}
\newcommand{\rmax}{\rho}
\newcommand{\order}{\texttt{order}}
\newcommand{\mainds}{\mathcal{R}}

There are still a few issues with the $\cc$ data structure. First, the level of
the coreset finally generated is $O(\log_rN)$. Since theoretical guarantees on
the approximation quality of clustering worsen with the number of levels of the
coreset, it is natural to ask if the level can be further reduced to $O(1)$.
Moreover, the time taken to process a query is linearly proportional to $r$; we
wish to reduce the query time even more.  While it is natural to aim to
simultaneously reduce the level of the coreset as well as the query time, at
first glance, these two goals seem to be inversely related. It seems that if we
decreased the level of a coreset (better accuracy), then we will have to
increase the merge degree, which would in turn increase the query time. For
example, if we set $r=\sqrt{N}$, then the level of the resulting coreset is
$O(1)$, but the query time will be $O(\sqrt{N})$.

In the following, we present a solution $\rcc$ that uses the idea of coreset
caching in a recursive manner to achieve both a low level of the coreset, as
well as a small query time. In our approach, we keep the merge degree $r$ 
in a relatively high value, thus keeping the levels of coresets low. 
At the same time, we use coreset caching even within a single level of a coreset tree, 
so that there is no need to merge $r$ coresets at query time. 
Special care is required for coreset caching in this case, 
so that the level of the coreset does not increase significantly.

For instance, suppose we built another coreset tree with merge degree $2$ for
the $O(r)$ coresets within a single level of the current coreset tree, this
would lead to a inner tree with level of $\log r$. At query time, we aggregate
$O(\log r)$ coresets from this inner tree, in addition to a coreset from the \cc. So, this will lead to a level of $O\left(\max\left\{\frac{\log N}{\log r}, \log r\right \}\right)$ and a query time proportional to $O(\log r)$. This is an improvement from the coreset cache, which has a query time proportional to $r$ and a level of $O\left(\frac{\log N}{\log r}\right)$.

We can take this idea further by recursively applying the same idea to the
$O(r)$ buckets within a single level of the coreset tree. Instead of having a
coreset tree with merge degree $2$, we use a tree with a higher merge degree,
and then have a coreset cache for this tree to reduce the query time, and apply
this recursively within each tree. This way we can approach the ideal of a small
level and a small query time. We are able to achieve interesting tradeoffs, as
shown in Table~\ref{table:rcc}. To keep the level of the resulting coreset low,
along with the coreset cache for each level, we also maintain a list of coresets
at each level, like in the $\ct$ algorithm. To merge coresets to a higher level,
we use the list, rather than the recursive coreset cache.

More specifically, the $\rcc$ data structure is defined inductively as
follows. For integer $i \ge 0$, the $\rcc$ data structure of order $i$ is
denoted by $\rcc(i)$. $\rcc(0)$ is a $\cctree$ data structure with a merge
degree of $r_0 = 2$. For $i>0$, $\rcc(i)$ consists of:
%-----------------
%\begin{itemize}[topsep=-0.5em,leftmargin=*, itemsep=2pt]
\begin{itemize}
\item $\cache(i)$, a coreset cache storing previous coresets.

\item For each level $\ell = 0, 1, 2, \ldots$, there are two structures. 
One is a list of buckets $L_{\ell}$, 
similar to the structure $Q_{\ell}$ in a coreset tree. 
The maximum length of a list is $r_{i} = 2^{2^i}$. 
Another is an $\rcc_{\ell}$ structure which is a $\rcc$ 
structure of a lower order $(i-1)$, which stores the same information as list $L_\ell$, 
except in a way that can be quickly retrieved during a query.
\end{itemize}
%-----------------

The main data structure $\mainds$ is initialized as $\mainds = \rccinit(\iota)$, 
for a parameter $\iota$, to be chosen. 
Note that $\iota$ is the highest order of the recursive structure. 
This is also called the ``nesting depth" of the structure. 

%-----------------
\begin{algorithm}[t]
\DontPrintSemicolon
\caption{$\mainds.\rccinit(\iota)$
\label{algo:rcc-init}
}
$\mainds.\order \gets \iota$, 
$\mainds.\cache \gets \emptyset$, 
$\mainds.r \gets 2^{2^{\iota}}$\;

\tcp{$N$ is the number of buckets so far}
$\mainds.N \gets 0$\; 
\ForEach{$\ell=0,1,2, \ldots$}
{
   $\mainds.L_{\ell} \gets \emptyset$\;
   \uIf{$\mainds.\order > 0$}
   {$\mainds.\rcc_{\ell} \gets \mainds.\rccinit(\mainds.\order - 1)$}
}
\Return $\mainds$\;
\end{algorithm}
%-----------------

%-----------------
\begin{algorithm}[t]
\DontPrintSemicolon
\caption{$\mainds.\rccupdate(b)$
\label{algo:rcc-update}}
\tcp{$b$ is a new base bucket}
$\mainds.N \gets \mainds.N + 1$\;
\tcp{Insert $b$ into $\mainds.L_0$ and merge if needed}
Append $b$ to $\mainds.L_0$.\;
\If{$\mainds.\order > 0$} 
{
	recursively update $\mainds.\rcc_0$ by $\mainds.\rcc_0.\rccupdate(b)$ \;
}
\;
\tcp{Clear $\mainds.L$ and $\rcc$ if number of buckets reaches $r$}
$\ell \gets 0$\;
\While{$(|\mainds.L_\ell| = \mainds.r)$}
{
  %$b' \gets$ {\tt BucketMerge}($\mainds.L_\ell$)\;
  $b' \gets \coreset(k, \eps, \cup_{B \in \mainds.L_\ell} B)$ \;
  Append $b'$ to $\mainds.L_{\ell+1}$ \;
  \If{$\mainds.\order > 0$}
  {recursively update $\mainds.\rcc_{\ell+1}$ by $\mainds.\rcc_{\ell+1}.\rccupdate(b)$} \;
  \tcp{Empty the list of coresets $\mainds.L$}
  $\mainds.L_{\ell} \gets \emptyset$ \;
  \tcp{Empty the \rcc structure}
  \If{$\mainds.\order > 0$} 
  {$\mainds.\rcc_{\ell} \gets \rccinit(\mainds.\order-1)$} \;
  $\ell \gets \ell + 1$ \;
}
\end{algorithm}
%-----------------

%------------------
\begin{algorithm}[t]
\DontPrintSemicolon
\caption{
$\mainds.\rcccoreset()$
\label{algo:rcc-getbuckets}
\label{algo:rcc-coreset}
}
$U \gets \emptyset$ \;
$N_1 \gets \major(\mainds.N, \mainds.r)$ \;
\eIf{$N_1$ does not exist in $\mainds.\cache$} 
{
	$U \gets \cup_{\ell} \{ \mainds.\rcc_{\ell}.\rcccoreset() \}$ \;
}
{
	$b_1 \gets$ retrieve coreset with endpoint $N_1$ from $\mainds.\cache$\;
	Let $\ell^*$ be the lowest numbered non-empty level among $\mainds.L_i, i \ge 0$.\;
	\tcp{Apply $\rcc$ data structure to retrieve the coresets from level $\ell^*$}
	\eIf{$\mainds.\order > 0$}
	{$b_2 \gets \mainds.\rcc_{\ell^*}.\rcccoreset()$}
	{$b_2 \gets \mainds.L_{\ell^*}$}
	$U \gets b_1 \cup b_2$ 
}

\tcp{Store coreset into \cache}
%\If{$\mainds.r$ divides $\mainds.N$}
%{
	$b' \gets \coreset(k, \epsilon, U)$\;
	Add $b'$ to $\mainds.\cache$ with right endpoint $\mainds.N$\;
	From $\mainds.\cache$, remove all buckets $b''$ such that 
	$\rightep(b'') \notin \prefixsum(\mainds.N) \cup \{ N \}$ \;
%}
\Return $b'$
\end{algorithm}
%-----------------

%-----------------
\begin{table}[ht]
{
\footnotesize
\setlength{\tabcolsep}{2pt}
\begin{tabular}{ c c c c c}
\toprule
$\iota$ & coreset level & Query cost  & update cost & Memory \\
        & at query      & (per point) & per point   & \\
\midrule
$\log \log N - 3$ & $O(1)$ & $O\left( \frac{kdm}{q} \log \log N \right)$ & $O(dm \log \log N)$ & $O\left( dmN^{1/8} \right)$ \\
\midrule
$\log \log N / 2$ & $O(\sqrt{\log N})$ & $O \left( \frac{kdm}{q} \log \log N \right)$ & $O(dm \log \log N)$ & $O \left( dm 2^{\sqrt{\log N}} \right)$ \\
\bottomrule
\end{tabular}
\smallskip
\caption{Possible tradeoffs for the $\rcc(\iota)$ algorithm, based on the parameter $\iota$, the nesting depth of the structure.
\label{table:rcc}}
}
\end{table}
%-----------------

%-----------------
\begin{lemma}
\label{lemma:rcc-level}
When queried after inserting $N$ buckets, 
Algorithm~\ref{algo:rcc-coreset} using $\rcc(\iota)$ returns a coreset 
whose level is $O\left(\frac{\log N}{2^{\iota}}\right)$. 
The amortized time cost of answering a clustering query is 
$O\left(\frac{kdm}{q} \cdot \log \log N \right)$ per point. 
\end{lemma}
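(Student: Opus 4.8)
The plan is to prove the two claims---the bound on the coreset level and the amortized query time---by a nested induction that mirrors the single-tree argument of Lemma~\ref{lemma:cctree-level}, but carried out across the recursion depth of the structure. Let $\Lambda(i,n)$ denote the maximum level of the coreset returned by $\rcc(i).\rcccoreset()$ after $n$ buckets have been inserted, where the merge degree at order $i$ is $r_i = 2^{2^i}$. The outer induction is on the order $i$ (from $0$ up to $\iota$), while the inner induction is on $n$, organized by $\chi_{r_i}(n)$, the number of nonzero digits of $n$ in base $r_i$, exactly as in Lemma~\ref{lemma:cctree-level}. The base case of the outer induction, $\Lambda(0,n)=O(\log n)$, is supplied directly by Lemma~\ref{lemma:cctree-level} applied to $\cc$ with $r_0=2$.

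First I would establish the per-query level recurrence. Under the frequent-query assumption, the $\rcc$-analog of Lemma~\ref{lemma:cache correctness} (applied at every recursion level) guarantees that the major endpoint $N_1=\major(N,r_i)$ is always present in the cache, so a query executes the single-recursive-call branch: it reads the cached coreset $b_1$ for span $[1,N_1]$, obtains the minor part $b_2 = \mainds.\rcc_{\ell^*}.\rcccoreset()$ (or the raw list $\mainds.L_{\ell^*}$ when $i=0$), and returns $\coreset(k,\epsilon,b_1\cup b_2)$; hence the returned level is $1+\max(\mathrm{level}(b_1),\mathrm{level}(b_2))$. The key structural point is that the buckets held in $\mainds.L_{\ell^*}$ (and mirrored in $\rcc_{\ell^*}$) are level-$\ell^*$ coresets of the original data, so by the level-composition behind Observation~\ref{obs:coreset2} and Lemma~\ref{lemma:cstree-level2}, $\mathrm{level}(b_2)\le \ell^*+\Lambda(i-1,r_i)$ with $\ell^*\le\lceil\log_{r_i}N\rceil$ and $\rcc_{\ell^*}$ seeing fewer than $r_i$ buckets. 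Running the inner induction on $\chi_{r_i}(N)$ just as in Lemma~\ref{lemma:cctree-level}---each descent along the major chain adds one level, while the minor offset $\Lambda(i-1,r_i)$ is absorbed inside the $\max$ rather than accumulating along the chain---yields the recurrence $\Lambda(i,N)\le 2\lceil\log_{r_i}N\rceil+\Lambda(i-1,r_i)$.

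To finish the level bound I would unfold this outer recurrence. Because the merge degrees grow doubly exponentially, $\log_{r_{i-1}}r_i = 2^i/2^{i-1}=2$, so each nesting level contributes only an $O(1)$ additive offset, leaving the top-level term $2\lceil\log_{r_\iota}N\rceil=O(\log N/2^\iota)$ as the governing contribution and giving $\Lambda(\iota,N)=O(\log N/2^\iota)$. I expect this unfolding to be the main obstacle: the delicate part is verifying that the recursively retrieved minor coreset composes correctly with the level of its base buckets, and that the offset $\Lambda(i-1,r_i)$ enters \emph{additively}---once per nesting level, not multiplicatively down the major chain. The doubly exponential choice of $r_i$ is exactly what keeps each such per-level contribution bounded; pinning down that the offsets do not compound is the crux of the argument.

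For the query time, under the same frequent-query assumption each query descends a single chain of recursive $\rcccoreset$ calls, one per order from $\iota$ down to $0$, so the chain has length $\iota+1$, which is $O(\log\log N)$ at the tradeoff points of Table~\ref{table:rcc}. At each order the work is a constant number of coreset unions together with one $\coreset$ construction on $O(m)$ points (the cached major coreset and the recursively retrieved minor coreset), which by Theorem~\ref{theo:coreset-build} costs $O(dm^2)=O(kdm)$ using $m=O(k)$. Summing along the chain gives $O(kdm\log\log N)$ per query, and the final $\kmpp$ call inside $\clusterquery$ on the returned coreset plus one base bucket is $O(kdm)$ by Theorem~\ref{theo:kmeans++} and is absorbed. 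Since consecutive queries are separated by $\Omega(q)$ points on average, charging this per-query cost over the stream yields the amortized per-point query cost $O\!\left(\frac{kdm}{q}\cdot\log\log N\right)$, completing the proof.
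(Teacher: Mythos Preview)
Your proposal is correct and takes essentially the same approach as the paper: both argue that the top-level cached coreset has level at most $2\log_{r_\iota} N = O(\log N/2^\iota)$ (via the $\cc$-style analysis of Lemma~\ref{lemma:cctree-level}), while each inner $\rcc(i)$ structure, seeing at most $r_{i+1}=r_i^2$ buckets with merge degree $r_i$, contributes only an $O(1)$ additional level; and both bound the query cost by counting a constant number of coreset merges per nesting level for $O(\iota)=O(\log\log N)$ total work. Your explicit recurrence $\Lambda(i,N)\le 2\lceil\log_{r_i}N\rceil+\Lambda(i-1,r_i)$ and the observation that $b_2$ inherits the $\ell^*$ base-level offset are a more careful formalization of what the paper sketches informally as ``one more than the maximum of the levels of all the coresets returned.''
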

%-----------------
\begin{proof}
Algorithm~\ref{algo:rcc-coreset} retrieves a few coresets from $\rcc$ of different orders. 
From the outermost structure $\rcc(\iota)$, 
it retrieves one coreset $c$ from $\cache(\iota)$. 
Using an analysis similar to Lemma~\ref{lemma:cctree-level}, 
the level of $b_{1}$ is no more than $\frac{2 \log N}{\log r_{\iota}}$. 

Note that for $i < \iota$, the maximum number of coresets that will be inserted
into $\rcc(i)$ is $r_{i+1} = r_i^2$. The reason is that inserting $r_{i+1}$
buckets into $\rcc(i)$ will lead to the corresponding list structure for
$\rcc(i)$ to become full. At this point, the list and the $\rcc(i)$ structure
will be emptied out in Algorithm~\ref{algo:rcc-update}. From each recursive call
to $\rcc(i)$, it can be similarly seen that the level of a coreset retrieved
from the cache is at level $\frac{2 \log{r_i}}{\log r_{i-1}}$, which is
$O(1)$. The algorithm returns a coreset formed by the union of all the coresets,
followed by a further merge step. Thus, the coreset level is one more than the
maximum of the levels of all the coresets returned, which is
$O\left( \frac{\log N}{\log r_{\iota}} \right)$.

For the query cost, similar to our analysis in $\cc$, we assume that 
for each order of $\rcc(i)$, we can always use the cache in coreset queries.
Comparing to Algorithm~\cccoreset, the minor part of coreset is retrieved from the inner 
\rcc data structure with lower order. Thus, for each order of $\rcc$, the number 
of coresets merged is $2$. The number of coresets merged at query time 
is equal to two times the nesting depth of the structure, that is $2 \cdot \iota$. 
The query time equals the cost of running $\kmpp$ on the union of all these coresets, 
for a total time of $O(kd m \log \log N)$. The amortized per-point cost of a query follows.

\remove{
Consider the case that the coreset of major does not exist in the cache. 
Suppose all the caches at different order are not available, our algorithm 
uses \ctcoreset to retrieve coresets at each order of \rcc. 
Note that the number of levels for each $\rcc(i)$ is at most $2$ except the $\rcc(\iota)$. 
The reason is as follows. 
It is necessary to insert $r_{i+1} = r_i^2$ buckets into $\rcc(i)$. 
Since this is the maximum number of buckets that will be inserted into $\rcc(i)$, 
there are at most two levels of lists within each $\rcc(i)$ for $i < \iota$.
The number of coresets be merged for $\rcc(\iota-1)$ is $2^{\iota-1}$, which is $O(\log r_{\iota})$.
The total number of coresets be merged is 
$O\left( \frac{\log N}{\log r_{\iota}} \cdot \log r_{\iota} \right) = O(\log N)$, 
and query time is $O(kd m \log N)$.
}
\end{proof}
%-----------------

%-----------------
\begin{lemma}
\label{lemma:rcc-performance}
The memory consumed by $\rcc(\iota)$ is $O(dm r_{\iota})$. 
The amortized processing time is $O(dm \log \log N)$ per point.
\end{lemma}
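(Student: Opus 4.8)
The plan is to prove the two bounds separately, in each case exploiting the doubly-exponential growth of the merge degrees $r_i = 2^{2^i}$ and the fact that every inner structure $\rcc(i)$ with $i < \iota$ only ever holds $O(1)$ levels of buckets.

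For the memory bound, I would first observe, as already noted in the proof of Lemma~\ref{lemma:rcc-level}, that an order-$i$ structure with $i<\iota$ receives at most $r_{i+1}=r_i^2$ buckets before its parent empties it; with merge degree $r_i$ this populates only $\log_{r_i}(r_i^2)=2$ levels. Writing $M(i)$ for the memory of $\rcc(i)$, each level contributes a list $L_\ell$ of at most $r_i$ buckets of $m$ points, plus one nested $\rcc_\ell$ of order $i-1$, while the cache holds only $O(1)$ coresets; hence $M(i)\le O(dm\,r_i)+2\,M(i-1)$ with base case $M(0)=O(dm)$. Unrolling gives $M(i)=O(dm)\sum_{j=0}^{i}2^{\,i-j}r_j$, and since $r_j=2^{2^j}$ grows doubly exponentially the $j=i$ term dominates, so $M(i)=O(dm\,r_i)$. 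The outermost structure $\rcc(\iota)$ holds $\log_{r_\iota}N$ levels rather than $2$, so summing the same per-level contribution yields $O(dm\,r_\iota\log_{r_\iota}N)$; for the intended nesting depth this factor is $\log_{r_\iota}N=\frac{\log N}{2^{\iota}}=O(1)$, giving the stated $O(dm\,r_\iota)$.

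For the update time, the key quantity is $I(i)$, the total number of $\rccupdate$ calls made to order-$i$ structures over the whole stream. I would show $I(i)=O(N)$ for every $i$ via the recurrence $I(i)=I(i+1)\bigl(1+O(1/r_{i+1})\bigr)$: a call on an order-$(i+1)$ structure always forwards one update to its $\rcc_0$, and in addition forwards one update to some $\rcc_{\ell+1}$ for each carry in the while-loop, and over the lifetime of that structure the number of carries is only an $O(1/r_{i+1})$ fraction of its insertions. Since $\sum_i 1/r_i$ converges (again by the doubly-exponential growth), the telescoping product $\prod_i(1+O(1/r_i))$ is $O(1)$, so $I(i)=O(N)$ starting from $I(\iota)=N$. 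I would then charge the running time to coreset constructions: at order $i$ a construction fires on each carry, merging $r_i$ buckets ($r_i m$ points) into $m$ points at cost $O(dm^2 r_i)$ by Theorem~\ref{theo:coreset-build}; the number of carries at order $i$ is $O(I(i)/r_i)=O(N/r_i)$, so the per-order construction cost telescopes to $O(N/r_i)\cdot O(dm^2 r_i)=O(dm^2N)=O(dm\,n)$, with the factor $r_i$ cancelling exactly as in the $\ct$ analysis of Lemma~\ref{lemma:cctree-time}.

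Summing over the $\iota+1=O(\log\log N)$ orders gives total work $O(dm\,n\log\log N)$, i.e. amortized $O(dm\log\log N)$ per point; the remaining bookkeeping (appends and the re-initializations $\rccinit(\order-1)$ performed on each carry) is $O(\sum_i I(i))=O(N\log\log N)$ and is a lower-order term. The main obstacle I anticipate is precisely this bookkeeping in the time bound: verifying that the carries forwarded between successive orders really constitute an $O(1/r_{i+1})$ fraction so that $I(i)$ does not blow up across the $\Theta(\log\log N)$ levels of recursion, and confirming (e.g.\ via lazy initialization) that the recursive re-initializations do not dominate the construction cost.
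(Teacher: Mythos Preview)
Your proposal is correct and follows essentially the same strategy as the paper: for memory, recurse on the order using the ``at most two levels'' observation and the doubly-exponential growth of $r_i$; for time, charge each of the $\iota=O(\log\log N)$ orders an amortized $O(dm)$ per point via the $\ct$ analysis. The paper's memory argument is a concrete induction with explicit constants (it shows $\rcc(i)$ holds at most $6r_i$ buckets, using $2r_i + 12\sqrt{r_i} + 2 \le 6r_i$ for $i\ge 2$), whereas you unroll the recurrence $M(i)\le O(dm\,r_i)+2M(i-1)$ and let the top term dominate the geometric-like sum. The paper's time argument is a one-line appeal to the per-order $\ct$ amortization; your explicit count $I(i)=I(i+1)(1+O(1/r_{i+1}))$ with the convergent product $\prod_i(1+O(1/r_i))=O(1)$ is more careful on exactly the point you flagged as an obstacle---the paper simply asserts that each order costs $O(dm)$ per point without tracking how carries propagate extra $\rccupdate$ calls downward. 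You are also more explicit than the paper about the outermost structure $\rcc(\iota)$ having $\log_{r_\iota}N$ levels rather than two, and about needing this to be $O(1)$ for the chosen $\iota$; the paper's induction silently reuses the two-level bound at the top order.
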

%-----------------
\begin{proof}
First, as stated in the proof of Lemma~\ref{lemma:rcc-level}, 
in $\rcc(i)$ for $i < \iota$, there are at most two level of lists $L_{\ell}$. 

%First, we note in $\rcc(i)$ for $i < \iota$, there are $O(1)$ lists $L_{\ell}$.
%The reason is as follows. 
%It can be seen that in order to get a single bucket in list $L_2$ within $\rcc(i)$, 
%it is necessary to insert $r_i^2=r_{i+1}$ buckets into $\rcc(i)$. 
%Since this is the maximum number of buckets that will be inserted into $\rcc(i)$, 
%there are no more than three levels of lists within each $\rcc(i)$ for $i < \iota$. 

We prove by induction on $i$ that $\rcc(i)$ has no more than $6r_i$ buckets.  
For the base case, $i=0$, and we have $r_0=2$. 
In this case, $\rcc(0)$ has two levels, each with no more than $2$ buckets. 
So that the total memory is no more than $6$ buckets, due to the lists in two levels, 
and no more than $2$ buckets in the cache, for a total of $6=3r_0 < 6r_0$ buckets. 
For $i=1$,  $r_1=4$, the two lists have at most $8$ buckets and cache has no more than $2$ buckets. 
The recursive structures $\rcc(0)$ has $6$ buckets and there are two recursive structures, one for each level. 
Thus in total $\rcc(1)$ has no more than $22$ buckets, which is less than $24 = 6r_1$ buckets. 

For the inductive case, consider that $\rcc(i)$, the list at each level has no more than $r_i$ buckets. 
The recursive structures $\rcc_{\ell}$ within $\rcc(i)$ themselves have no more than $6r_{i-1}$ buckets. 
Adding the constant number of buckets within the cache, 
we get the total number of buckets within $\rcc(i)$ to be 
$2r_i + 2 \cdot 6r_{i-1} + 2 = 2r_i + 12 \sqrt{r_i} + 2 \le 6 r_i$, 
for $r_i \ge 16$, i.e. $i \ge 2$. 
Thus if $\iota$ is the nesting depth of the structure, 
the total memory consumed is $O(dm r_{\iota})$, since each bucket requires $O(dm)$ space.

For the updating process time cost, when a bucket is inserted into $\mainds = \rcc(\iota)$, 
it is added to list $L_0$ within $\mainds$. 
The cost of maintaining these lists, 
that is the cost of merging into higher level lists, 
is amortized $O(dm)$ per point, 
similar to the analysis in Lemma~\ref{lemma:cctree-time}. 
The bucket is also recursively inserted into a $\rcc(\iota-1)$ structure, 
and a further structure within, 
and the amortized time for each such structure is $O(dm)$ per point. 
The total time cost is $O(dm\iota)$ per point which is equal to $O(dm \log \log N)$.
\end{proof}
%-----------------

Different tradeoffs are possible by setting $\iota$ to specific values. Some examples are shown in the Table~\ref{table:rcc}.

\newcommand{\fallbackcost}{\texttt{$\phi_{prev}$}\xspace}
\newcommand{\estcost}{\texttt{$\phi_{now}$}\xspace}

%---------------------------
\subsection{Online Coreset Cache: a Hybrid of \cc and \seqkm}
\label{sec:hybrid}
%---------------------------

%--------------------
\begin{figure}[t]
  	\includegraphics[width=0.49\textwidth]{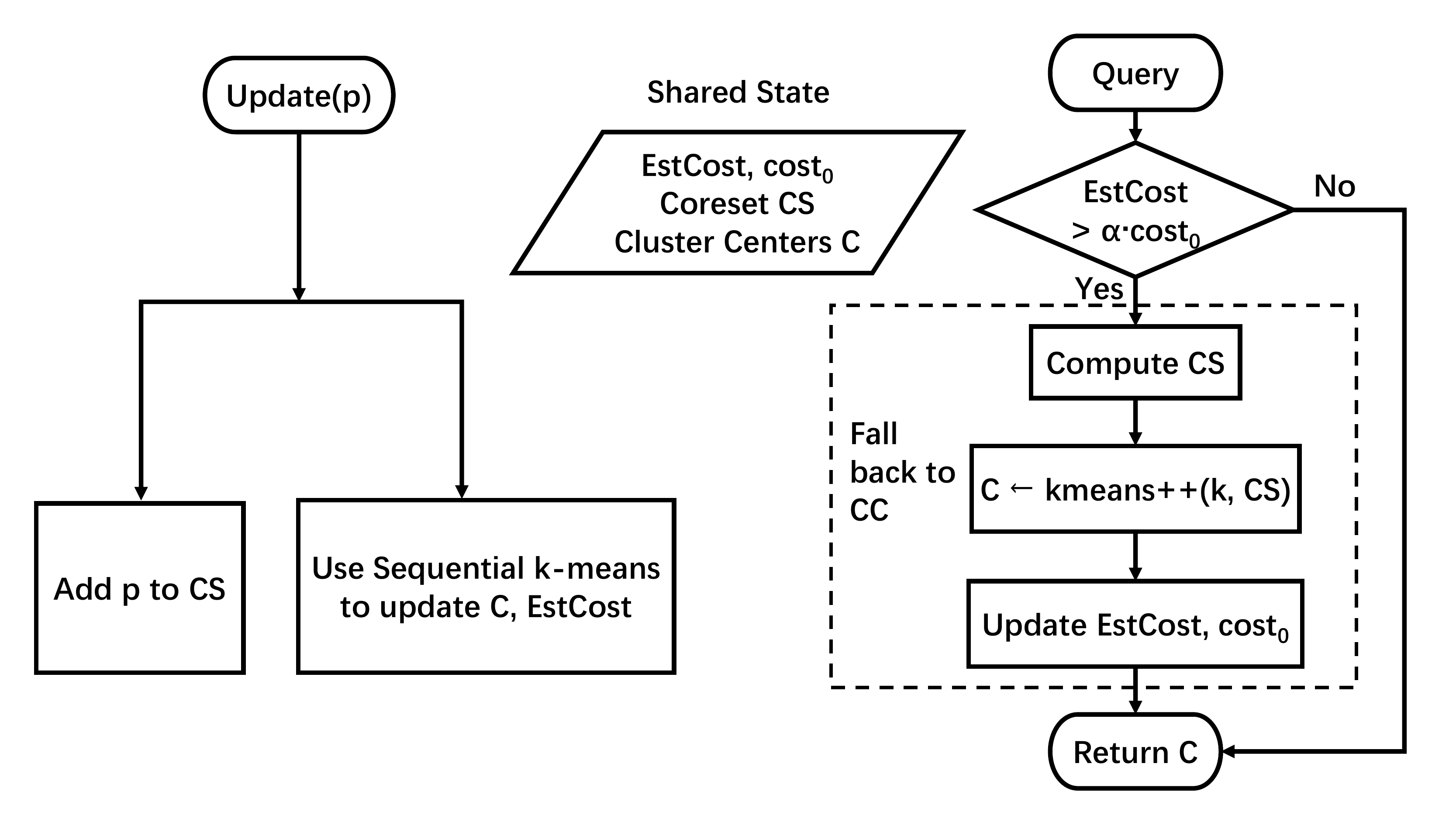}
  	\caption{Illustration of Algorithm \hybrid.}
\label{fig:algo-onlinecc}
\end{figure}
%--------------------

If we break down the query runtime of the algorithms considered so far, we
observe two major components: (1)~the construction of the coreset of all points
seen so far, through merging stored coresets; and (2)~the \kmpp algorithm
applied on the resulting coreset. The focus of the algorithms discussed so far (
\cc and \rcc) is on decreasing the runtime of the first component, coreset
construction, by reducing the number of coresets to be merged at query time. But
they still have to pay the cost of the second component \kmpp, which is
substantial in itself, since the runtime of \kmpp is $O(kdm)$, where $m$ is the
size of the coreset. To make further progress, we have to reduce this
component. However, the difficulty in eliminating \kmpp at query time is that
without an approximation algorithm such as \kmpp, we have no way to guarantee
that the returned clustering is an approximation to the optimal.

This section presents an algorithm, \hybrid, which only occasionally runs \kmpp
at query time, and most of the time, uses a much cheaper method that costs
$O(1)$ to compute the clustering centers. \hybrid uses a combination of \cc and
the \seqkm algorithm~\cite{MacQueen67} (aka. Online Lloyd's algorithm) to
maintain the cluster centers quickly while also providing a guarantee on the
quality of clustering. Like \seqkm, it incrementall updates the current set of
cluster centers for each arriving point. However, while \seqkm can process
incoming points (and answer queries) extremely quickly, it cannot provide any
guarantees on the quality of answers, and in some cases, the clustering quality
can be very poor when compared with say, \kmpp. To prevent such deterioration in
clustering quality, our algorithm (1)~occasionally falls back to \cc, which is
provably accurate, and (2)~runs \seqkm so long as the clustering cost does not
get much larger than the previous time \cc was used. This ensures that our
clusters always have a provable quality with respect to the optimal.

To accomplish this, \hybrid also processes incoming points using \cc, thus
maintaining coresets of substreams of data seen so far. When a query arrives, it
typically answers them in $O(1)$ time using the centers maintained using
\seqkm. If, however, the clustering cost is significantly higher (by more than a
factor of $\alpha$ for a parameter $\alpha > 1$) than the previous time that the
algorithm fell back to \cc, then the query processing again returns to \cc to
regenerate a coreset. One difficulty in implementing this idea is that
(efficiently) maintaining an estimate of the current clustering cost is not
easy, since each change in cluster centers can affect the contribution of a
number of points to the clustering cost. To reduce the cost of maintenance, our
algorithm keeps an upper bound on the clustering cost; as we show further, this
is sufficient to give a provable guarantee on the quality of clustering. Further
details on how the upper bound on the clustering cost is maintained, and how
algorithms \seqkm and \cc interact are shown in Algorithm~\ref{algo:hybrid}, 
with a schematic illustration in Figure~\ref{fig:algo-onlinecc}.

%----------------
\begin{algorithm}[t]
\label{algo:hybrid}
\caption{The Online Coreset Cache: A hybrid of \cc and \seqkm algorithms}

\fnDef{$\hybridinit(k, \eps, \alpha)$}{
%Remember coreset approximation factor $\eps$, merge-degree $r$, and parameter $\alpha > 1$ the threshold to switch the query processing to \cc \;

\tcp{$C$ is the current set of cluster centers}
Initialize $C$ by running $\kmpp$ on set $\stream_0$ consisting of the first $O(k)$ points of the stream\;

\tcp{\fallbackcost is the clustering cost during the previous ``fallback'' to \cc; 
\estcost is an estimate of the clustering cost of $C$ on the stream so far}
\fallbackcost , \estcost $\gets$ clustering cost of $C$ on $\stream_0$\;
$Q \gets \ccinit(r,k,\eps)$\;
}

\tcp{On receiving a new point $p$ from the stream}
\fnDef{$\hybridupdate(p)$}{
	Assign $p$ to the nearest center $c_p$ in $C$  \;

    $\estcost \gets \estcost + \edist{p}{c_p}^2$ \;
	
	\tcp{Compute new centroid of $c_p$ and $p$ where $w$ is the weight of $c_p$}
	$c_p'  \gets (w \cdot c_p + p) / (w + 1)$   \; 
	Assign the position of $c_p'$ to $c_p$  \;
    Add $p$ to the current bucket $b$. If $|b|=m$, then $Q.\ccupdate(b)$ \;
}

\fnDef{$\hybridquery()$}{
	\If{$\estcost > \alpha \cdot \fallbackcost$}
	{
	     $CS \gets Q.\cccoreset() \cup b$, where $b$ is the current bucket that not yet inserted into $Q$ \;

             $C \gets \kmpp(k, CS)$   \;
             $\fallbackcost \gets \phi_C(CS)$, the \km cost of coreset $CS$ on centers $C$  \;
             $\estcost \gets \fallbackcost / (1-\eps)$   \;
	}
  \Return $C$  \;
}
\end{algorithm}
%-----------------

%We state the properties of Algorithm~\hybrid in Lemma~\ref{lemma:ecost} and ~\ref{lemma:online}.
%----------------
\begin{lemma}
\label{lemma:ecost}
In Algorithm~\ref{algo:hybrid}, after observing point set $P$ from the stream, if $C$ is the current set of cluster centers, then $\estcost$ is an upper bound on $\phi_C(P)$.
\end{lemma}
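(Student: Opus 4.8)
The plan is to prove the statement by induction over the events processed by Algorithm~\ref{algo:hybrid}, carrying the invariant that $\estcost \ge \phi_C(P)$ holds after initialization and after every call to $\hybridupdate$ and $\hybridquery$, where $P$ is the (multi)set of points observed so far and $C$ the current center set. Three kinds of events must be checked: initialization, a query that triggers a fallback, and a point update. Initialization is immediate: $\hybridinit$ sets $C=\kmpp(\stream_0)$ and $\estcost$ to the exact cost $\phi_C(\stream_0)$, so the invariant holds with equality.

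For the fallback case, I would rely entirely on the coreset property and not on the online updates at all. In $\hybridquery$, a fallback recomputes $C = \kmpp(k, CS)$ with $CS = Q.\cccoreset() \cup b$, and resets $\estcost \gets \phi_C(CS)/(1-\eps)$. By Observation~\ref{obs:coreset1} (together with the level/accuracy bound of Lemma~\ref{lemma:cctree-accuracy}), $CS$ is an $\eps$-coreset of $P$, so Definition~\ref{defn:coreset} gives the lower bound $\phi_C(CS) \ge (1-\eps)\phi_C(P)$, whence $\estcost = \phi_C(CS)/(1-\eps) \ge \phi_C(P)$. A query that does \emph{not} fall back leaves $C$, $P$, and $\estcost$ unchanged, so the invariant is trivially preserved.

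The crux is the update case. When $p$ arrives and is assigned to its nearest center $c_p$ of current weight $w$, $\estcost$ grows by exactly $\edist{p}{c_p}^2 = D^2(p,C)$, while $c_p$ is moved to the weighted centroid $c_p' = (w\,c_p + p)/(w+1)$. I would show that the true cost grows by no more than this charged amount, i.e.\ $\phi_{C'}(P\cup\{p\}) \le \phi_C(P) + \edist{p}{c_p}^2$. The key tool is the bias–variance identity: for a finite multiset $X$ with centroid $\mu$ and any point $y$,
\[
\sum_{x \in X}\edist{x}{y}^2 \;=\; \sum_{x \in X}\edist{x}{\mu}^2 \;+\; |X|\,\edist{\mu}{y}^2 .
\]
Since MacQueen's update keeps each center equal to the exact centroid of the multiset $X_j$ of points assigned to it, absorbing $p$ into cluster $j$ raises the within-cluster cost of $X_j$ by precisely $\tfrac{w}{w+1}\edist{p}{c_p}^2 \le \edist{p}{c_p}^2$ and leaves all other clusters untouched. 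Introducing the fixed-assignment potential $\Phi = \sum_j \sum_{x \in X_j}\edist{x}{c_j}^2$, one has $\phi_C(P) \le \Phi$ because nearest-center cost never exceeds the cost of any fixed assignment, and $\Phi$ increases by at most $\edist{p}{c_p}^2$ per point; combined with the fallback base value this preserves the invariant.

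The main obstacle is the interaction between this historical (fixed) assignment, through which the centroid identity is invoked, and the nearest-center assignment that actually defines $\phi_C$. Moving a single center can \emph{increase} the distances of previously assigned points, so one cannot argue na\"ively that ``cost only goes down''; the bound must be routed through $\Phi$, and $\Phi$'s increment bound depends essentially on $c_p$ being a genuine centroid. Reconciling this with the recomputation of centers at a fallback is the delicate bookkeeping: I would take each post-fallback center as the centroid of its assigned (coreset) cluster with weight equal to that cluster's size, so the centroid invariant needed in the update step is re-established, and the coreset lower bound of the fallback case upper-bounds the reset value of $\Phi$. Verifying that this reset is consistent (so that no cross terms from center drift escape the potential) is where the real care is required; the rest reduces to the two routine calculations sketched above.
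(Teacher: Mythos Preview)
Your route differs from the paper's in the update step. The paper runs a direct induction on $\phi_C(P)$ itself: from $\edist{p}{c_p'} < \edist{p}{c_p}$ (since $c_p'$ lies strictly between $c_p$ and $p$) together with the claimed inequality $\phi_{C_i}(P_i) + \edist{p}{c_p'}^2 \ge \phi_{C_{i+1}}(P_{i+1})$, it concludes $\estcost(i+1) \ge \phi_{C_{i+1}}(P_{i+1})$. No auxiliary fixed-assignment potential, no bias--variance identity, and no centroid hypothesis on $c_p$ enter the paper's argument. Your treatment of initialization and of the fallback case, on the other hand, matches the paper's essentially verbatim.

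The gap in your proposal is exactly the one you flag but do not close. The increment bound $\Delta\Phi = \tfrac{w}{w+1}\edist{p}{c_p}^2$ requires that $c_p$ be the true centroid of the multiset $X_j$ currently assigned to it. After $\hybridinit$, and after every fallback in $\hybridquery$, the centers are $\kmpp$ outputs, not centroids of any subset of $P$. Your suggested repair---declaring each post-fallback center to be ``the centroid of its assigned coreset cluster''---does not work: the potential $\Phi$ must be defined over the actual point set $P$ for the domination $\Phi \ge \phi_C(P)$ to hold, and the algorithm's subsequent MacQueen step moves the real center $c_p$, not some surrogate centroid. Concretely, suppose just after a fallback one cluster has $X_j=\{0,2\}$ with (non-centroid) center $c_j=3$ of weight $2$, and $p=10$ arrives; then $\edist{p}{c_j}^2=49$ while $\Phi$ increases by $462/9\approx 51.3$, so your per-step bound is violated. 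To make the potential route go through you would need an additional argument absorbing the centroid defect at each reset; the paper's direct induction simply avoids $\Phi$ and hence this obstacle.
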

%---------------
\begin{proof}
Consider the value of $\estcost$ between every two consecutive switches to \cc. Without loss of generality, suppose there is one switch happens at time $0$, let $P_0$ denote the points observed until time $0$ (including the points received at $0$). We will do induction on the number of points received after time $0$, we denote this number as $i$. Then $P_i$ is $P_0$ union the $i$ points received after time $0$. Let $\estcost (i)$ denote the cost $\estcost$ at time $i$. 

When $i$ is $0$, we compute $C$ from the coreset $CS$, from the coreset definition
\[
\fallbackcost = \phi_{C}(CS) \geq (1-\epsilon) \cdot \phi_{C}(P_0)
\] 
where $\epsilon$ is the approximation factor of coreset $CS$. So for dataset $P_0$, 
the estimation cost $\estcost(0) = \fallbackcost / (1-\eps) \geq \phi_{C}(P_0)$.

At time $i$, denote $C_i$ as the cluster centers maintained and $\estcost_i$ as the estimation of \km cost. Assume the statement is true such that $\estcost (i) > \phi_{C_i}(P_i)$. 

Consider when a new point $p$ comes, $c_p$ is the nearest center in $C_i$ to $p$. 
We compute $c_p'$, the new position of $c_p$, let $C_{i+1}$ denote the new center set 
where $C_{i+1}=C_i \setminus \{ c_p \} \cup \{c_p' \}$. 

Based on the $\hybridupdate(p)$ in Algorithm\ref{algo:hybrid}, 
\[
\estcost (i+1) = \estcost (i) + \edist{p}{c_p}^2
\]

From the assumption of inductive step, 
\[
\estcost (i) \geq \phi_{C_i}(P_i)
\]

As $c_p'$ is the centroid of $c_p$ and $p$, we have
\[
\edist{p}{c_p} > \edist{p}{c_p'} 
\]

Because $\phi_{C_{i+1}}(P_{i+1})$ is the true cost of point set $P_{i+1}$ on centers $C_{i+1}$,
\[
\phi_{C_i}(P_i) + \edist{p}{c_p'}^2 \geq \phi_{C_{i+1}}(P_{i+1})
\] 

Adding up together, we get:
\[
\estcost (i+1) \geq \phi_{C_{i+1}}(P_{i+1})
\]
Thus the inductive step is proved.
\end{proof}
%---------------

\begin{lemma}
\label{lemma:online}
When queried after observing point set $P$, the \hybrid algorithm returns a set of $k$ points $C$ whose clustering cost is within $O(\log k)$ of the optimal $k$-means clustering cost of $P$, in expectation.
\end{lemma}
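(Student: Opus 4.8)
The plan is to split the analysis by whether the query triggers a \emph{fallback} to \cc, i.e., whether the branch condition $\estcost > \alpha\cdot\fallbackcost$ holds, since the returned centers $C$ are generated differently in the two cases.

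First I would handle a query that triggers a fallback. There $C$ is obtained by running \kmpp on $CS$, the union of the coreset returned by \cccoreset and the current partial base bucket; this $CS$ is precisely the coreset that \cc assembles for the entire observed stream $P$. Consequently this case is identical to the situation already analyzed in Lemma~\ref{lemma:cctree-accuracy}: $CS$ is an $\eps'$-coreset of $P$ with $\eps' < 0.1$, and running \kmpp on it yields $\expct{\phi_C(P)} \le O(\log k)\cdot\phi_{OPT}(P)$. No new argument is required.

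The substantive case is a query that does not trigger a fallback, so $\estcost \le \alpha\cdot\fallbackcost$ and $C$ is the set of centers currently maintained by \seqkm. Here I would chain several bounds. By Lemma~\ref{lemma:ecost}, $\phi_C(P) \le \estcost$, and the branch condition gives $\estcost \le \alpha\cdot\fallbackcost$. Writing $P_{prev}$ for the stream prefix observed at the most recent fallback, $C_{prev}$ for the centers produced there, and $CS_{prev}$ for the coreset used, we have $\fallbackcost = \phi_{C_{prev}}(CS_{prev})$. Since $CS_{prev}$ is an $\eps'$-coreset of $P_{prev}$, the coreset definition gives $\fallbackcost \le (1+\eps')\,\phi_{C_{prev}}(P_{prev})$, and applying Lemma~\ref{lemma:cctree-accuracy} to that earlier fallback bounds $\expct{\phi_{C_{prev}}(P_{prev})}$ by $O(\log k)\cdot\phi_{OPT}(P_{prev})$. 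Finally, because $P_{prev}\subseteq P$ the optimal cost is monotone, $\phi_{OPT}(P_{prev}) \le \phi_{OPT}(P)$; assembling the chain and absorbing the constant $\alpha$ and the factor $(1+\eps')<2$ into the big-$O$ yields $\expct{\phi_C(P)} \le O(\log k)\cdot\phi_{OPT}(P)$.

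The main obstacle is precisely this no-fallback case: the centers $C$ returned there are never directly measured against $P$ by an approximation algorithm, so the guarantee must be routed through three independent facts --- the upper bound \estcost of Lemma~\ref{lemma:ecost}, the coreset quality of \cc at the previous fallback, and the monotonicity $\phi_{OPT}(P_{prev})\le\phi_{OPT}(P)$. The monotonicity step is what keeps the bound relative to the \emph{current} optimum rather than a stale one, and the slack factor $\alpha$ introduced by the branch condition is why $\alpha$ must be a fixed constant $> 1$. I would also note that all expectations are taken over the randomness of the \kmpp run at the relevant fallback, so a non-fallback query introduces no fresh randomness and the bound holds conditionally on that run as well.
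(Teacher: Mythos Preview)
Your proposal is correct and follows essentially the same two-case split and chain of inequalities as the paper's proof. You are actually slightly more careful than the paper in the no-fallback case, explicitly inserting the coreset bound $\fallbackcost \le (1+\eps')\,\phi_{C_{prev}}(P_{prev})$ before invoking Lemma~\ref{lemma:cctree-accuracy}, whereas the paper applies that lemma directly to $\fallbackcost$.
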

%----------------

\begin{proof}
  Let $\phi^*(P)$ denote the optimal \km cost for $P$. We will show that
  $\phi_C(P) = O(\log k) \cdot \phi^*(P)$. There are two cases:

  \noindent{}\underline{Case I}: When $C$ is directly retrieved from \cc, 
  Lemma~\ref{lemma:cctree-accuracy} implies that
  $\expct{\phi_{C}(P)} \leq O(\log k) \cdot \phi^* (P)$. This case is handled
  through the correctness of \cc.

  \noindent{}\underline{Case II}:
  The query algorithm does not fall back to \cc. We first note from
  Lemma~\ref{lemma:ecost} that $\phi_C(P) \le \estcost$. Since the algorithm did
  not fall back to \cc, we have $\estcost \leq \alpha \cdot \fallbackcost$. Since
  \fallbackcost was the result of applying $\cc$ to the $P_0$ which is the point set received when last recent fall back, 
  we have from Lemma~\ref{lemma:cctree-accuracy} that
  $\fallbackcost \leq O(\log k) \cdot \phi^*(P_0)$. Since $P_0 \subseteq P$, we know
  that $\phi^*(P_0) \leq \phi^*(P)$. Putting together the above four
  inequalities, we have $\phi_C(P) = O(\log k) \cdot \phi^*(P)$.  
\end{proof}
%----------------

\remove{
%---------------------
\begin{algorithm}
\label{algo:cluster}
\DontPrintSemicolon
\caption{{\tt StreamCluster}($\mathcal{D}, m$)}
\tcc{Framework for stream clustering}
\tcc{$\mathcal{D}$ is the clustering data structure, and $m$ is the desired size of a coreset.}

{\bf Init.}
$n \gets 0$\;
$\mathcal{D}.Init()$\;

$C \gets \emptyset$\;

\tcp{Insert points into $\mathcal{D}$ in batches of size $m$}
\If{point $p$ arrives}
{
  $n \gets (n+1)$\;
  Add $p$ to $C$\;
  
  \If{$(|C| = m)$}
  {
     $\mathcal{D}$.Update($C,n/m$)\;
     $C \gets \emptyset$\;
  }
}

\If{a clustering query arrives}{Return $\mathcal{D}$.Query()\;}
\end{algorithm}
%---------------------
}

%%% Local Variables:
%%% mode: latex
%%% TeX-master: "kmeans"
%%% End:

\section{Experimental Evaluation}
\label{sec:expts}
\newcommand{\census}{\texttt{USCensus1990}\xspace}
\newcommand{\uscensus}{\texttt{USCensus1990}\xspace}
\newcommand{\tower}{\texttt{Tower}\xspace}
\newcommand{\covtype}{\texttt{Covtype}\xspace}
\newcommand{\power}{\texttt{Power}\xspace}
\newcommand{\intrusion}{\texttt{Intrusion}\xspace}
\newcommand{\drift}{\texttt{Drift}\xspace}

This section describes an empirical study of the proposed algorithms, in
comparison to the state-of-the-art clustering algorithms.  Our goals are
twofold: to understand the clustering accuracy and the running time of
different algorithms in the context of continuous queries, and to investigate
how they behave under different settings of algorithm parameters.

%---------------------------------------------
\subsection{Datasets}
%---------------------------------------------
Our experiments use a number of real-world or semi-synthetic datasets, based on
data from the UCI Machine Learning Repositories~\cite{Lichman13}.  These are
commonly used datasets in benchmarking clustering
algorithms. Table~\ref{table:dataset} provides an overview of the datasets used.

The \textit{\covtype} dataset models the forest cover type prediction problem
from cartographic variables. The dataset contains $581,012$ instances and $54$
integer attributes. The \textit{\power} dataset measures electric power
consumption in one household with a one-minute sampling rate over a period of
almost four years. We remove entries with missing values, resulting in a dataset
with $2,049,280$ instances and $7$ floating-point attributes. The
\textit{\intrusion} dataset is a $10\%$-subset of the \emph{KDD Cup 1999}
data. The task was to build a predictive model capable of distinguishing between
normal network connections and intrusions. After ignoring symbolic attributes,
we have a dataset with $494,021$ instances and $34$ floating-point
attributes. To erase any potential special ordering within data, we randomly
shuffle each dataset before using it as a data stream.

However, each of the above datasets, as well as most datasets used in previous
works on streaming clustering, is not originally a streaming dataset; the
entries are only read in some order and consumed as a stream.  
To better model the evolving nature of data streams and the drifting of center locations, 
we generate a semi-synthetic dataset, called \drift, which we derive from the \census
dataset~\cite{Lichman13} as follows: The method is inspired by
Barddal~\cite{BGE+16}.  The first step is to cluster the \census dataset to
compute $20$ cluster centers and for each cluster, the standard deviation of the
distances to the cluster center.  Following that, the synthetic dataset is
generated using the Radial Basis Function (RBF) data generator from the MOA
stream mining framework~\cite{BHK+10}. The RBF generator moves the drifting
centers with a user-given direction and speed. For each time step, the RBF
generator creates $100$ random points around each center using a Gaussian
distribution with the cluster standard deviation. In total, the synthetic
dataset contains $200,000$ and $68$ floating-point attributes.

%------------
\begin{table}
{ 
\centering
\setlength{\tabcolsep}{3pt}
\begin{tabular}
{l  c  c  l}
\toprule
Dataset & Number of Points & Dimension & Description\\ 
\midrule
\covtype & $581,012$ & $54$ & Forest cover type\\
\power & $2,049,280$ & $7$ & Household  power consumption \\
\intrusion & $494,021$ & $34$ & KDD Cup 1999\\
\drift & $200,000$ & $68$ & Derived from US Census 1990\\
\bottomrule
\end{tabular}\par
\smallskip
\caption{An overview of the datasets used in the experiments.}
\label{table:dataset}
}
\end{table}
%----------

%-------------------------------
\subsection{Experimental Setup and Implementation Details}
%-------------------------------
% Describe the machine used, compiler, etc.
We implemented all the clustering algorithms in Java, and ran experiments on a
desktop with Intel Core i5-4460 $3.2$GHz processor and $8$GB main memory.

\noindent\textbf{Algorithms Implementation:} Our baseline algorithms are two prominent streaming clustering algorithms. (1)~the \seqkm algorithm due to
MacQueen~\cite{MacQueen67}, which is frequently implemented in clustering
packages today. For \seqkm clustering, we use the implementation in Apache Spark
MLLib~\cite{MBY+15}, though ours is modified to run sequentially. Furthermore,
the initial centers are set by the first $k$ points in the stream instead of
setting by random Gaussians, to ensure no clusters are empty. (2)~We also
implemented \skmpp~\cite{AMR+12}, a current state-of-the-art algorithm that has
good practical performance.  \skmpp can be viewed as a special case of \ct where
the merge degree $r$ is $2$. The bucket size is $20 \cdot k$, where $k$ is the number
of centers. \footnote{A larger bucket size such as $200k$ can yield slightly 
better clustering quality. But this led to a high runtime for \skmpp, 
especially when queries are frequent, so we use a smaller bucket size.}

For \cc, we set the merge degree to $2$, in line with \skmpp. For \rcc, we use a
maximum nesting depth of $3$, so the merge degrees for different structures are
$N^{\frac{1}{2}}, N^{\frac{1}{4}}$ and $N^{\frac{1}{8}}$, respectively. For
\hybrid, the threshold $\alpha$ is set to $1.2$ by default. 

We use the batch \kmpp algorithm as the baseline for clustering accuracy. This
is expected to outperform any streaming algorithm, as with the scope of all the points. The \kmpp algorithm, similar to~\cite{AMR+12,AJM09}, is used to derive coresets. We also use \kmpp as the final step to construct $k$ centers from the coreset, and take the best clustering out of five independent runs of \kmpp; 
each run of \kmpp is followed by up to $20$ iterations of Lloyd's algorithm to 
further improve clustering quality. 
Finally for each statistic, we report the median from nine independent runs of each algorithm to improve robustness. The queries on cluster centers present with interval of $q$ points. 
Hence, from the beginning of the stream, there is one query per $q$ input points received. By default, the number of clusters is set to $30$, the query interval is set to $100$ points. To present queries in a practical way, we also generate the queries in a poisson process. Let $\lambda$ be the arrival rate of the poisson process, the inter arrival time between query events should be an exponential distribution variable with mean of $1 / \lambda$. We set the inter arrival of queries in range of $\{50, 100, 200, 400, 800, 1600, 3200 \}$ points.

\noindent\textbf{Metrics:} 
We use three metrics: clustering accuracy, runtime and memory cost. 
The clustering accuracy is measured using the \km cost, also known as the
within cluster sum of squares (SSQ).  We measure the average runtime of the
algorithm per point, as well as the total runtime over the entire stream. 
The runtime contains two parts, (1)~update time, the time required to update internal
data structures upon receiving new point, and (2)~query time, the time required
to answer clustering queries.  Finally, we consider the memory consumption
through measuring the number of points stored by the internal data structure, 
including both the coreset tree and coreset cache. From the number of points, 
we estimate the number of bytes used, assuming that each dimension of 
a data point consumes $8$ bytes (size of a double-precision floating-point number).

%---------------------------
%\subsection{Results}
%---------------------------
%Plots showing performance of different methods.

% metrics vs. stream
%\input{exp_stream}

% metrics vs. k
%----------------
\begin{figure*}
  \centering
  \subfloat[\covtype]{
  		\includegraphics[width=0.24\textwidth]{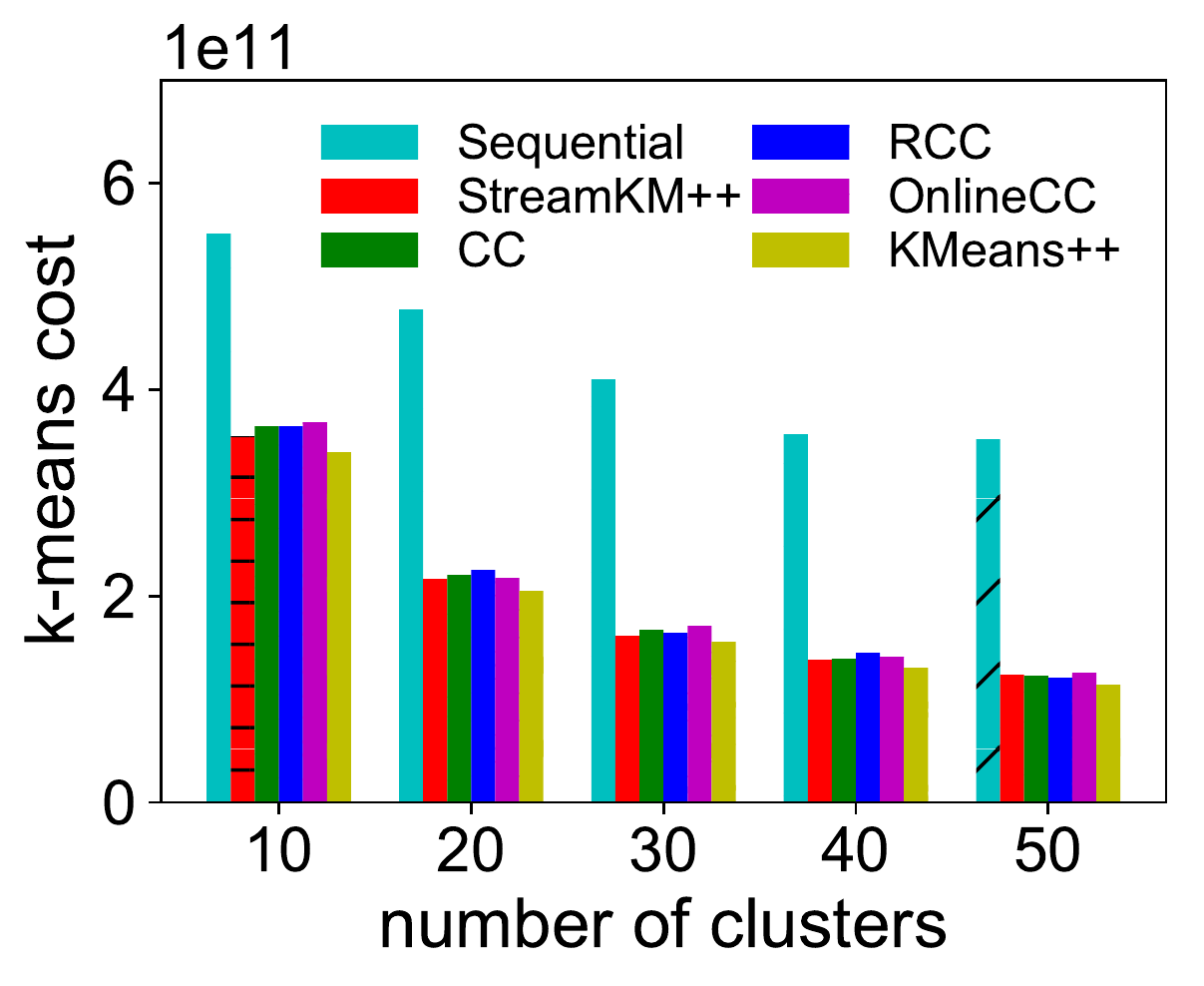}
  }
  \subfloat[\power]{
  		\includegraphics[width=0.24\textwidth]{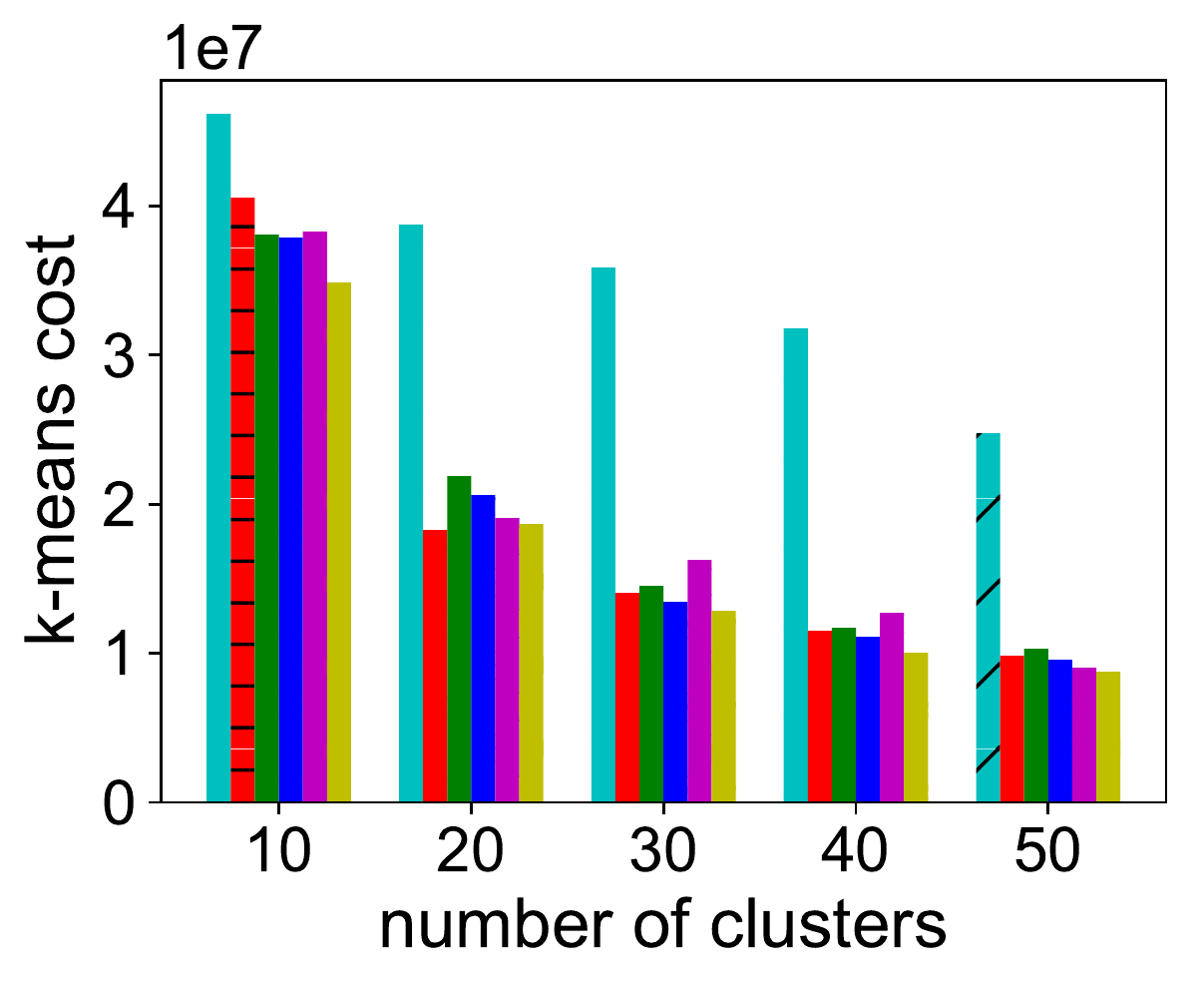}
  }
  \subfloat[\intrusion (\km cost of \seqkm is not shown)]{
		\includegraphics[width=0.24\textwidth]{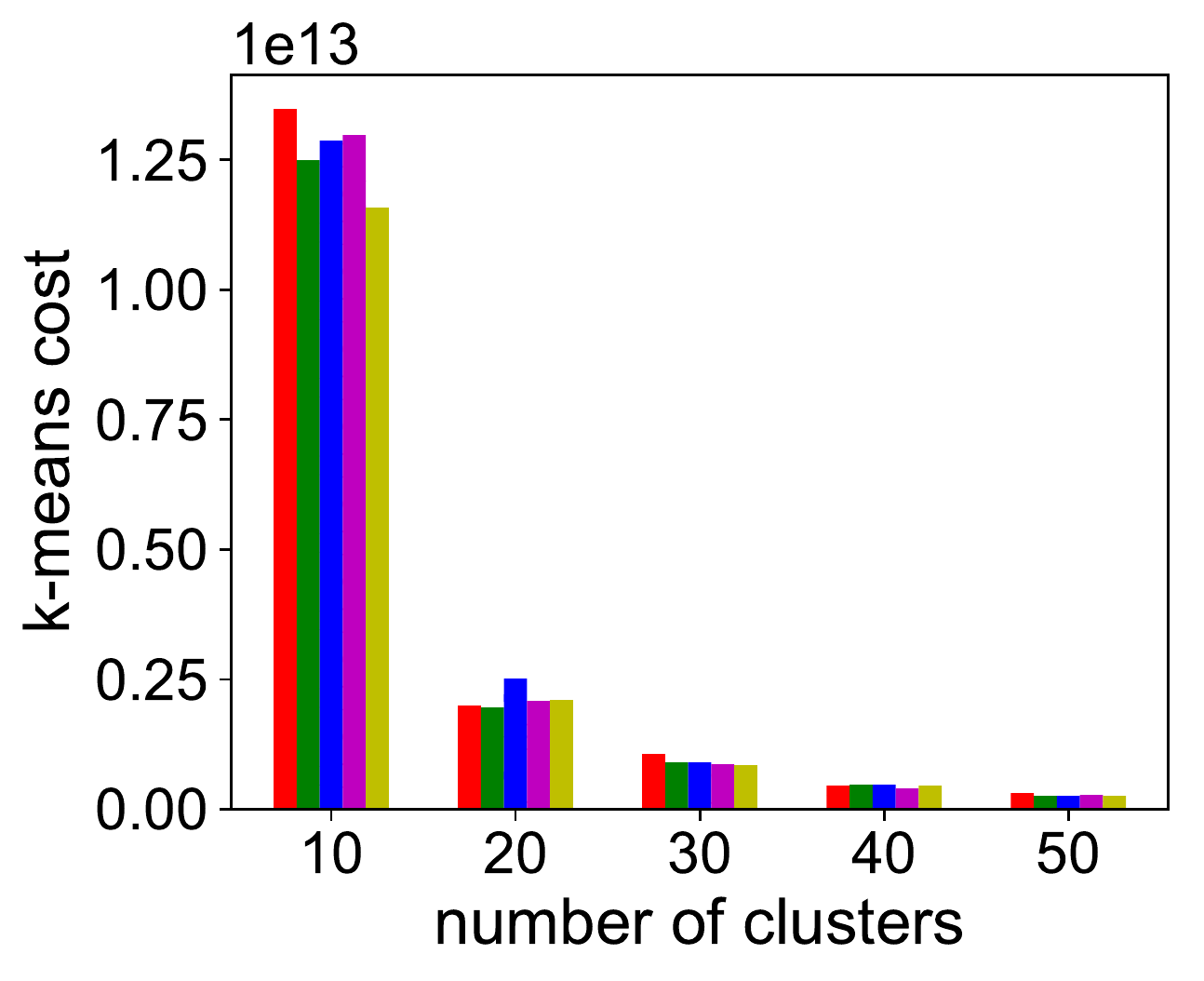}
  }
  \subfloat[\drift]{
  		\includegraphics[width=0.24\textwidth]{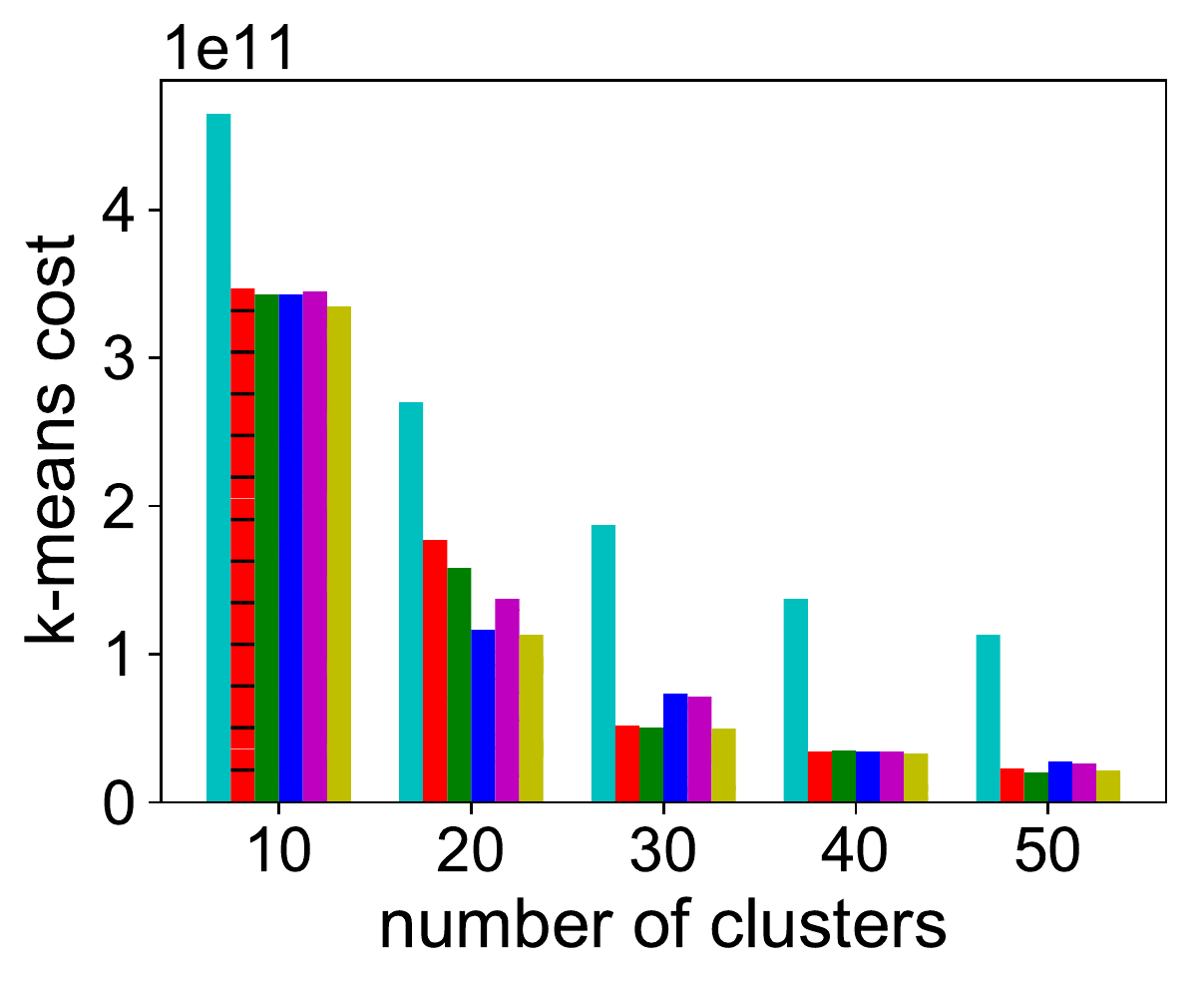}
  }
  \caption{\km cost vs. number of clusters $k$. The cost is computed at the end of observing all the points.  
\km cost of \seqkm on \intrusion dataset is not shown in Figure (c), since it was orders of magnitude larger ($10^4$) than the other algorithms.}
\label{fig:cost-versus-k}
\end{figure*}
\begin{figure*}
  \centering
  \subfloat[\covtype]{
  \centering \includegraphics[width=0.23\textwidth]{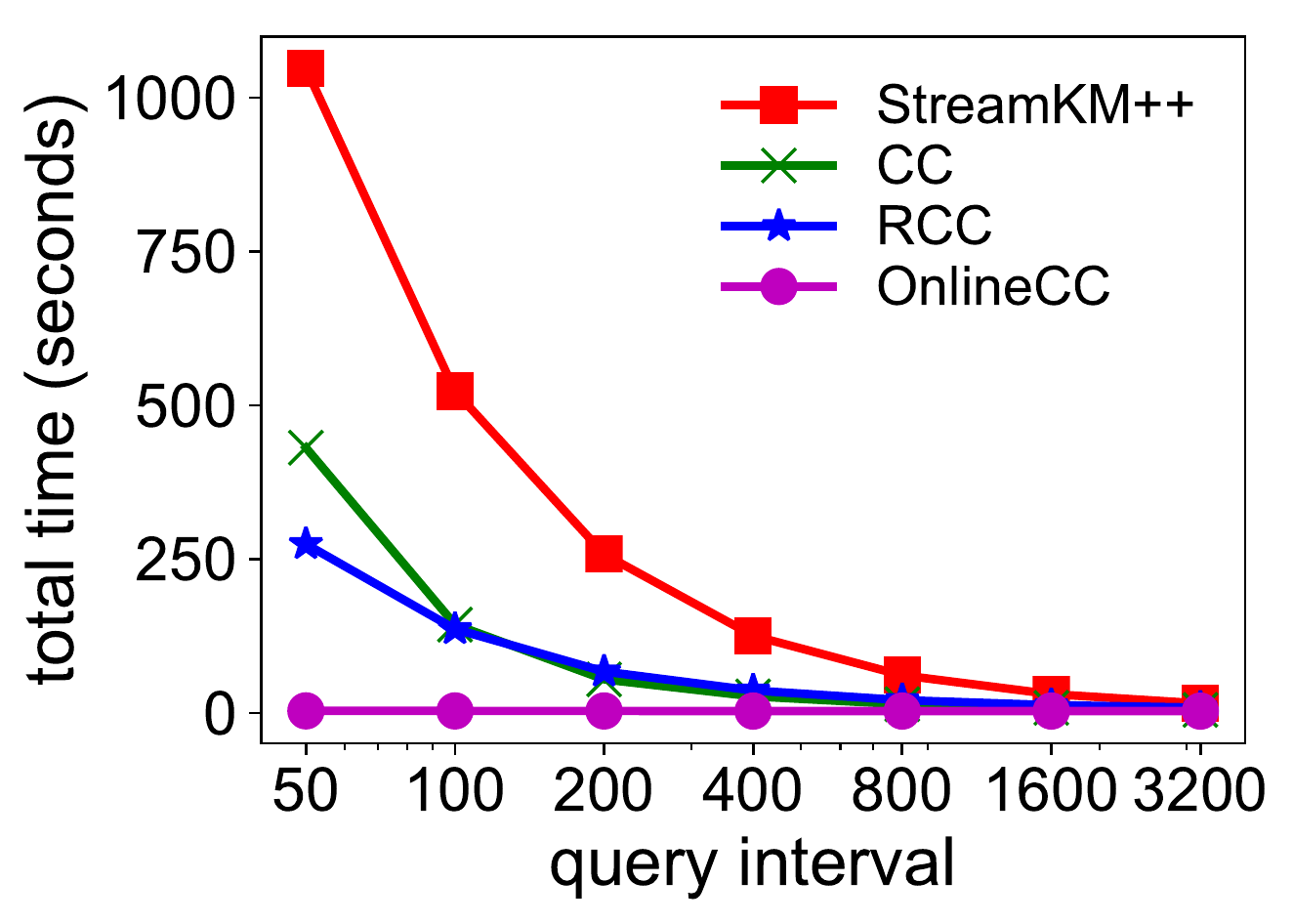}
  }
  \subfloat[\power]{
  \centering \includegraphics[width=0.23\textwidth]{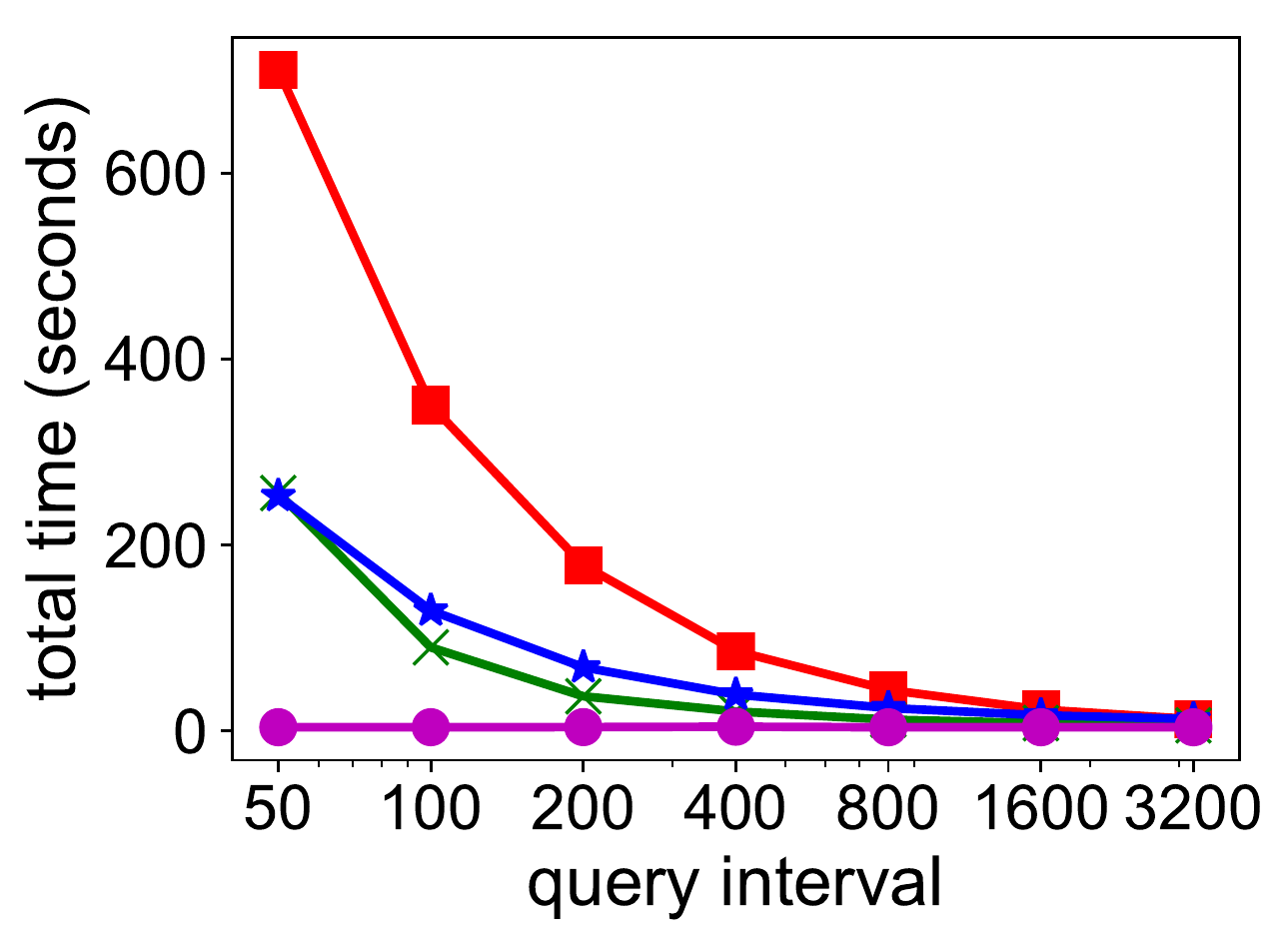}
  }
  \subfloat[\intrusion]{
  \centering \includegraphics[width=0.23\textwidth]{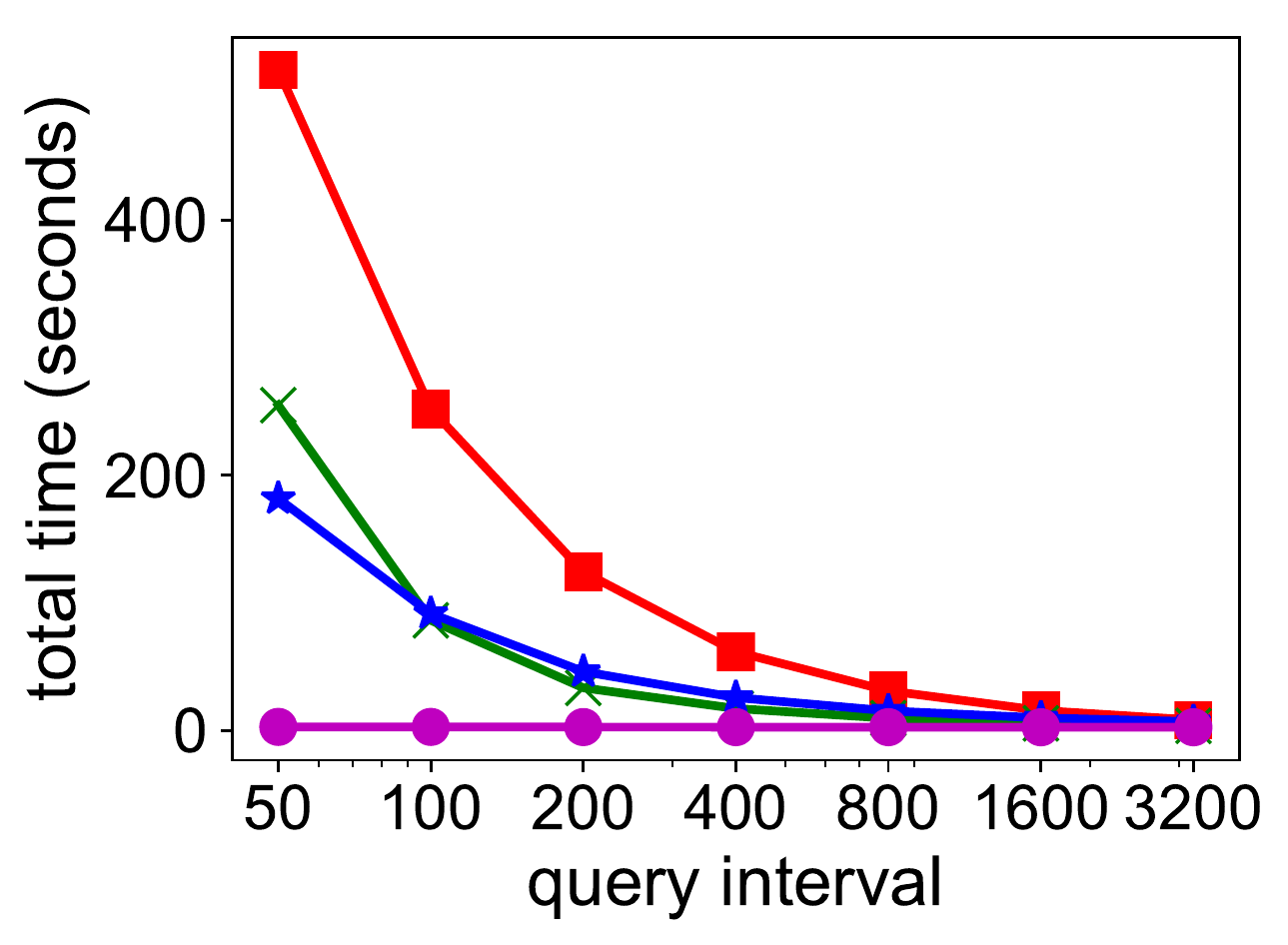}
  }
  \subfloat[\drift]{
  \centering \includegraphics[width=0.23\textwidth]{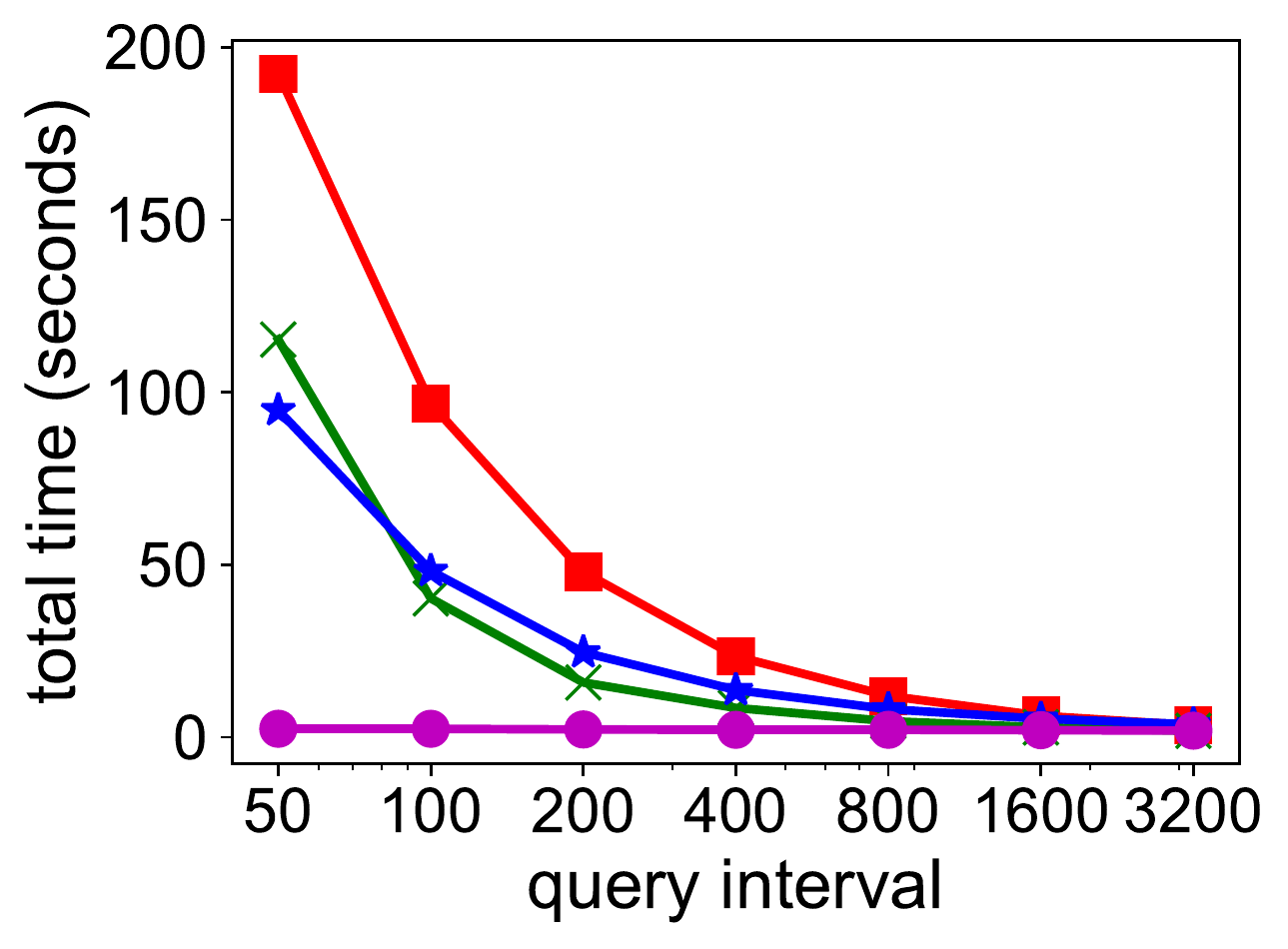}
  }
  \caption{Total time (seconds) vs. query interval $q$. The total time is for the entire dataset overall stream. For every $q$ points, there is a query for the cluster centers. The number of centers $k$ is $30$.}
 \label{fig:time-versus-q}
\end{figure*}
%----------------

% metrics vs. bucketsize
%----------------cost vs. bucketsize--------------------
\begin{figure*}
  \centering
  \subfloat[\covtype]{
  \centering \includegraphics[width=0.23\textwidth]{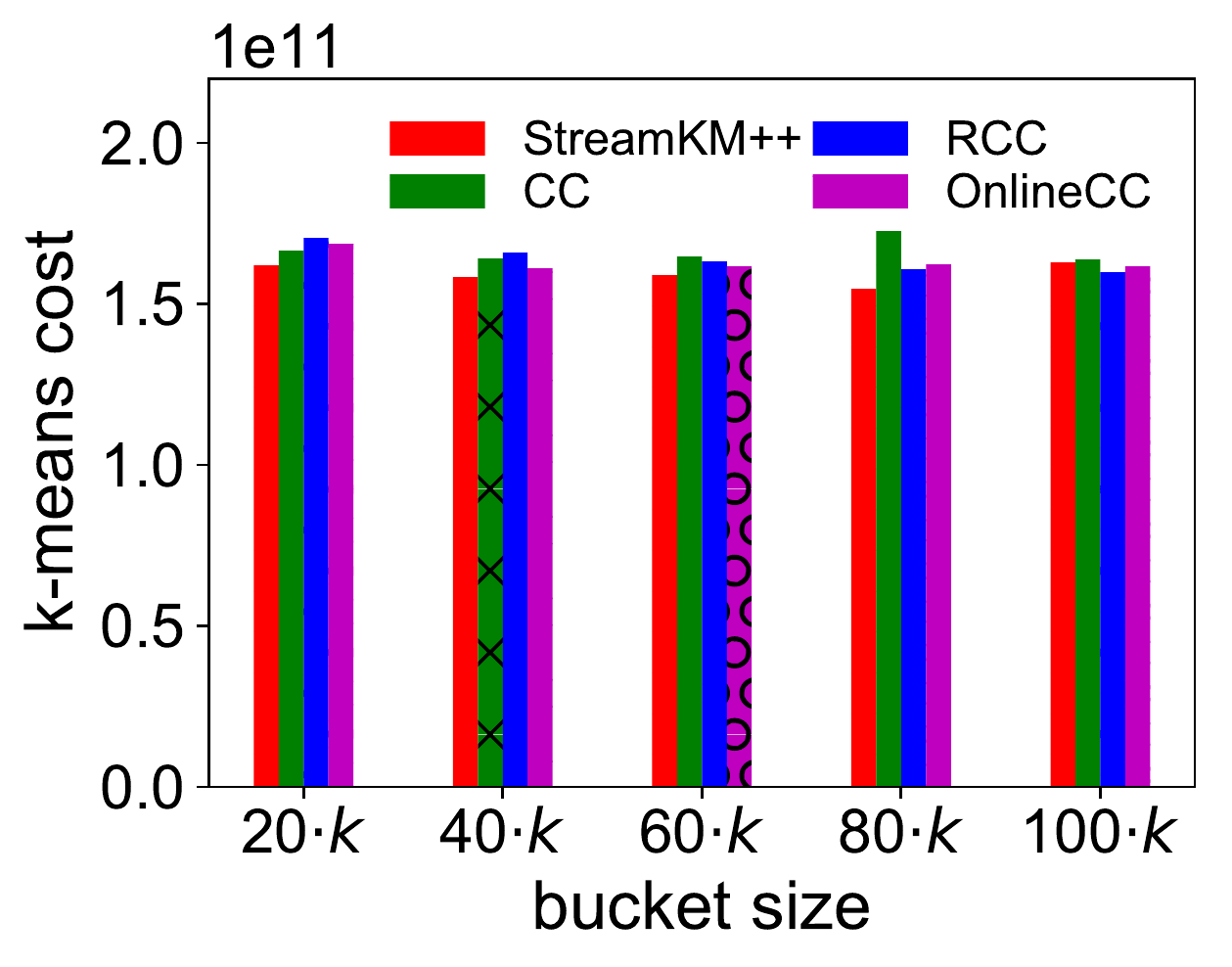}
  }
  \subfloat[\power]{
  \centering \includegraphics[width=0.23\textwidth]{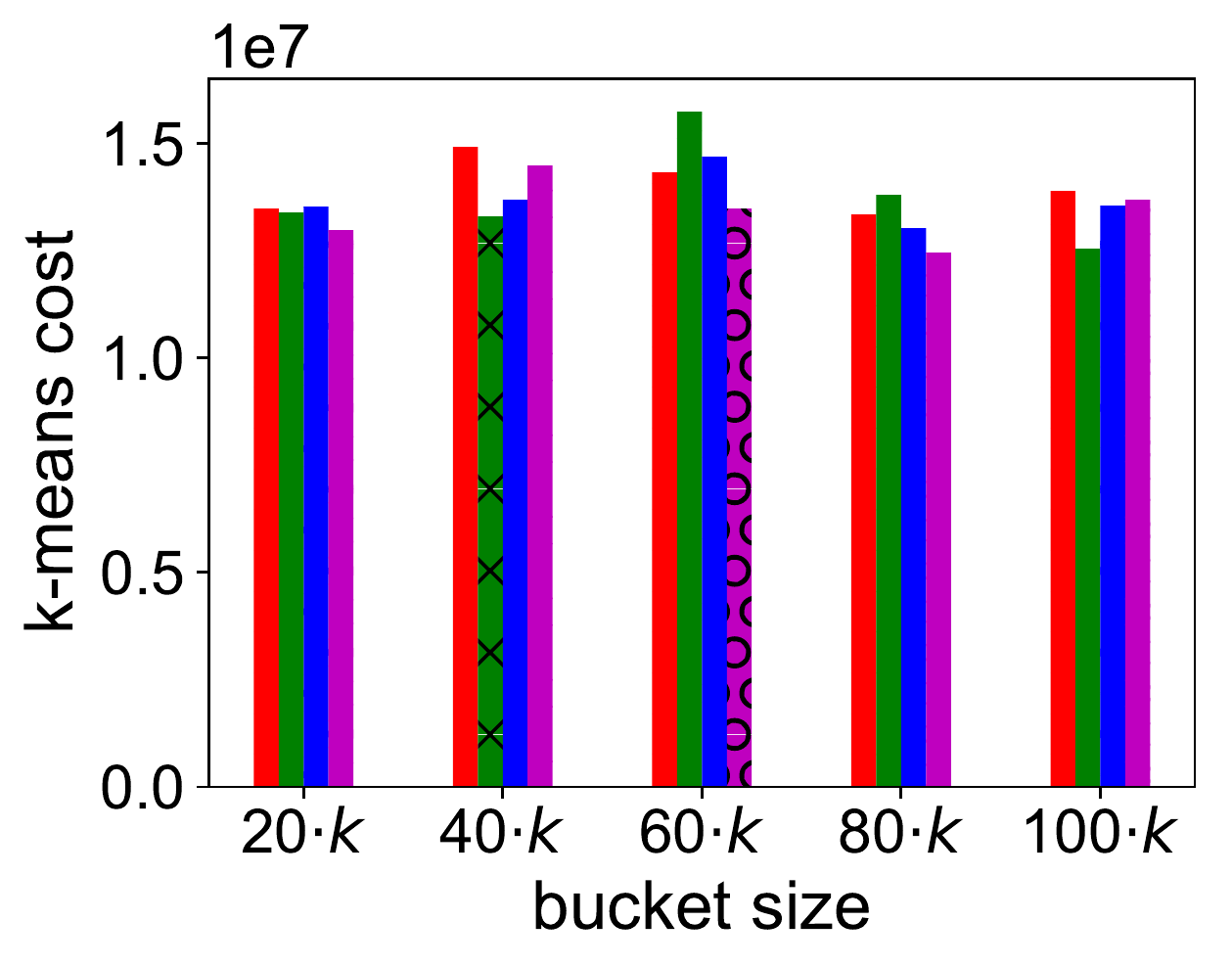}
  }
  \subfloat[\intrusion]{
  \centering \includegraphics[width=0.23\textwidth]{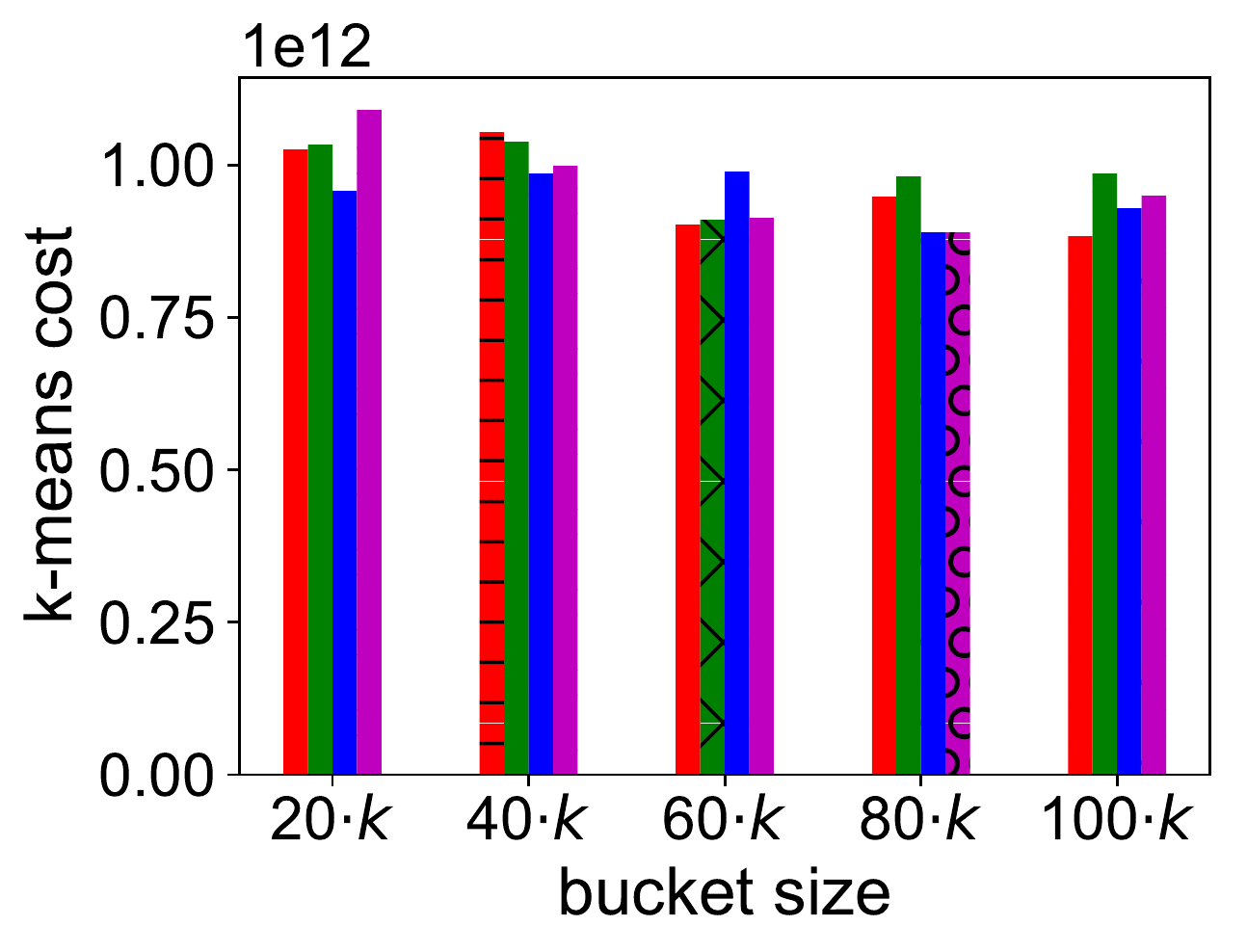}
  }
  \subfloat[\drift]{
  \centering \includegraphics[width=0.23\textwidth]{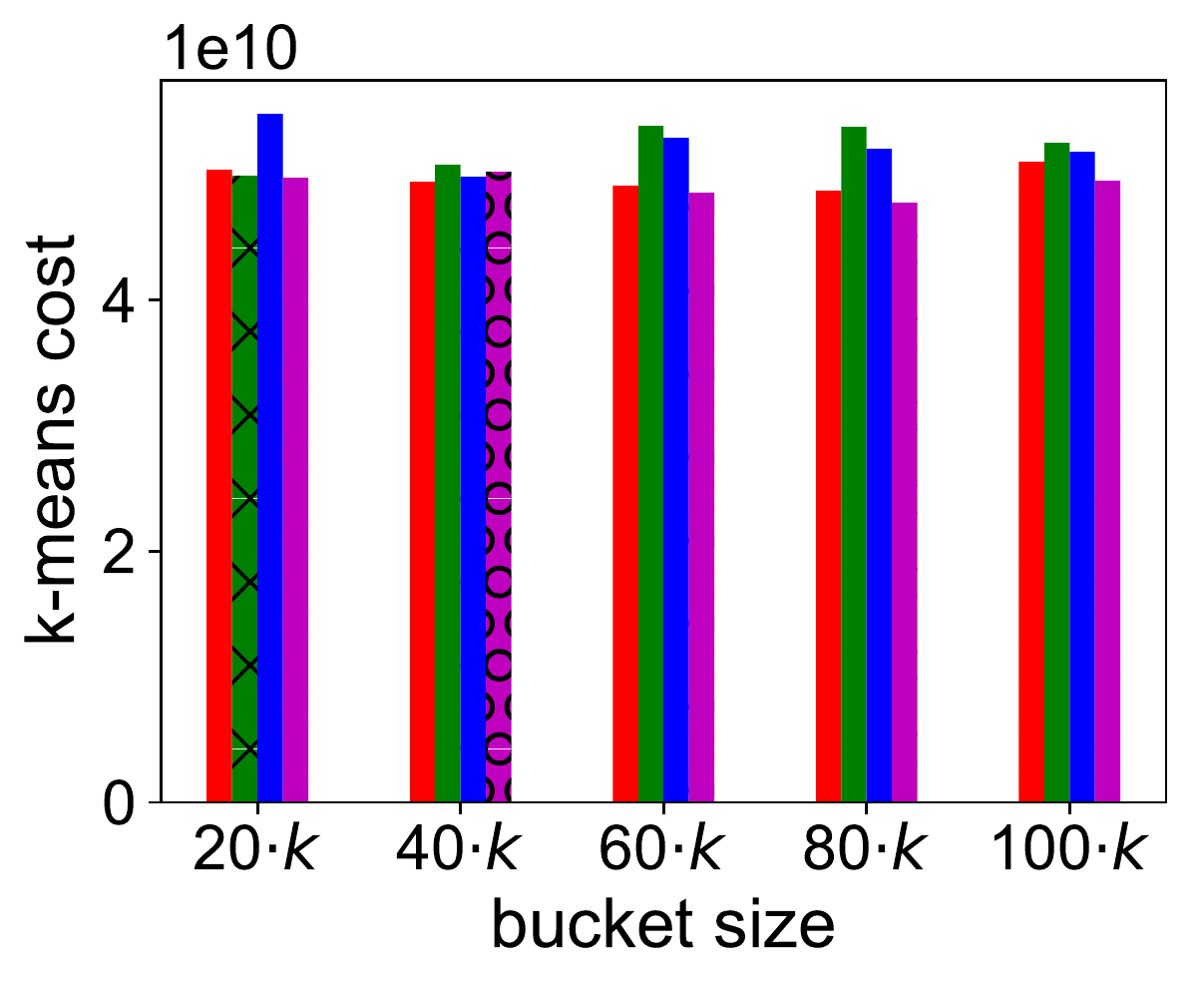}
  }
  \caption{\km cost vs. bucket size $m$. The cost is computed at the end of observing all the points. The number of clusters $k=30$, query interval $q=100$.  }
 \label{fig:cost-versus-m}
\end{figure*}
\begin{figure*}
  \centering
  \subfloat[\covtype]{
  \centering \includegraphics[width=0.23\textwidth]{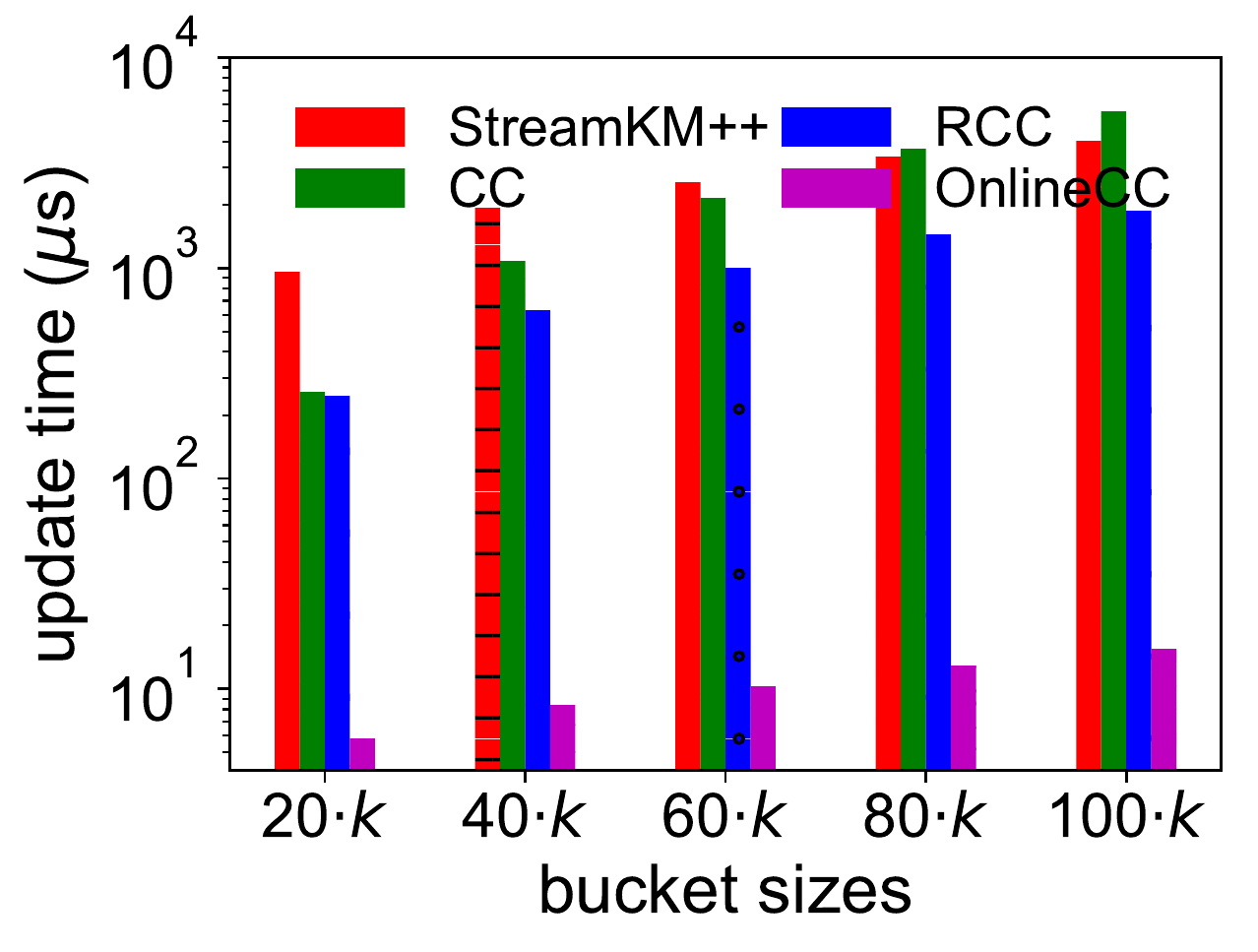}
  }
  \subfloat[\power]{
  \centering \includegraphics[width=0.23\textwidth]{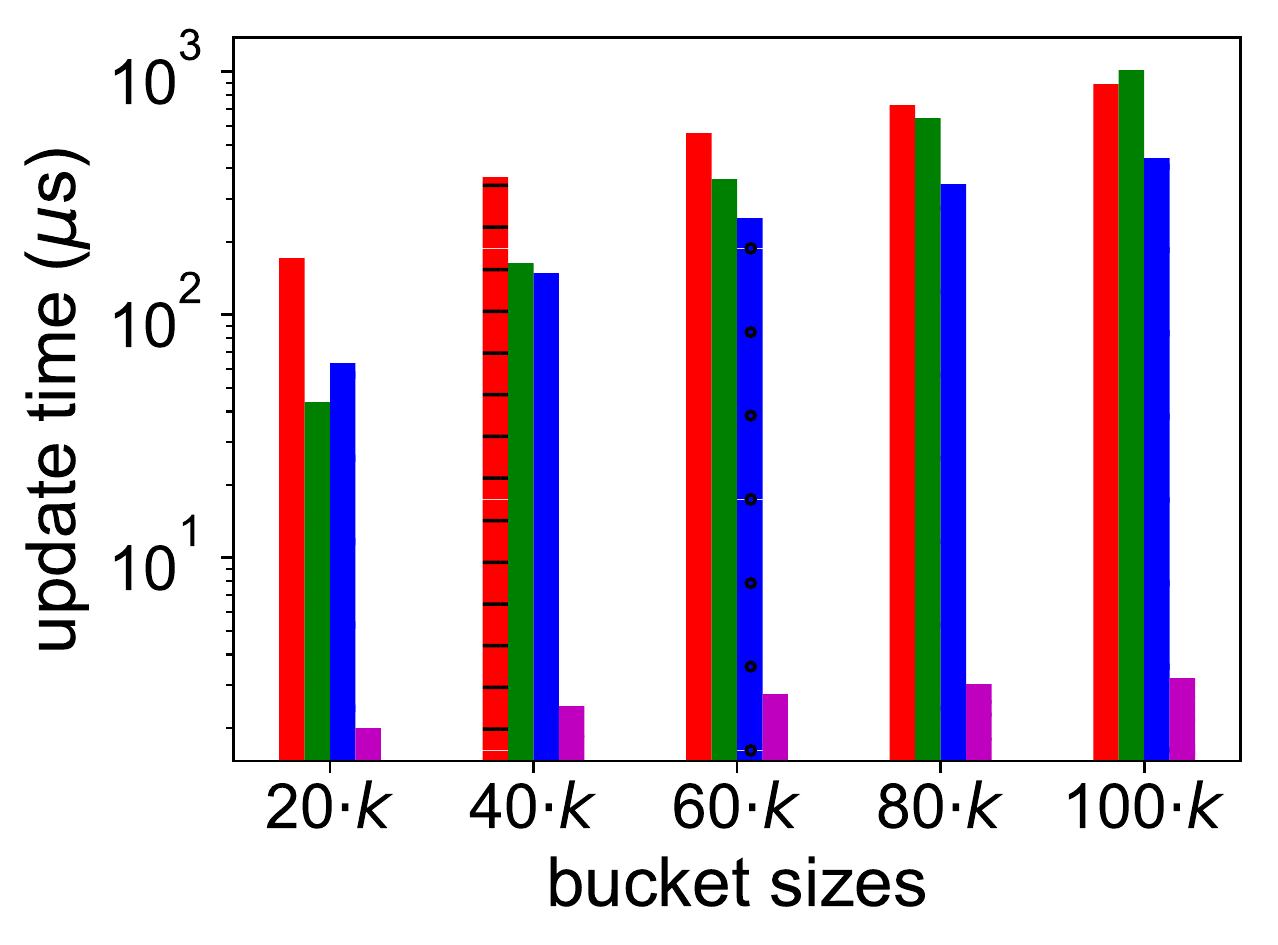}
  }
  \subfloat[\intrusion]{
  \centering \includegraphics[width=0.23\textwidth]{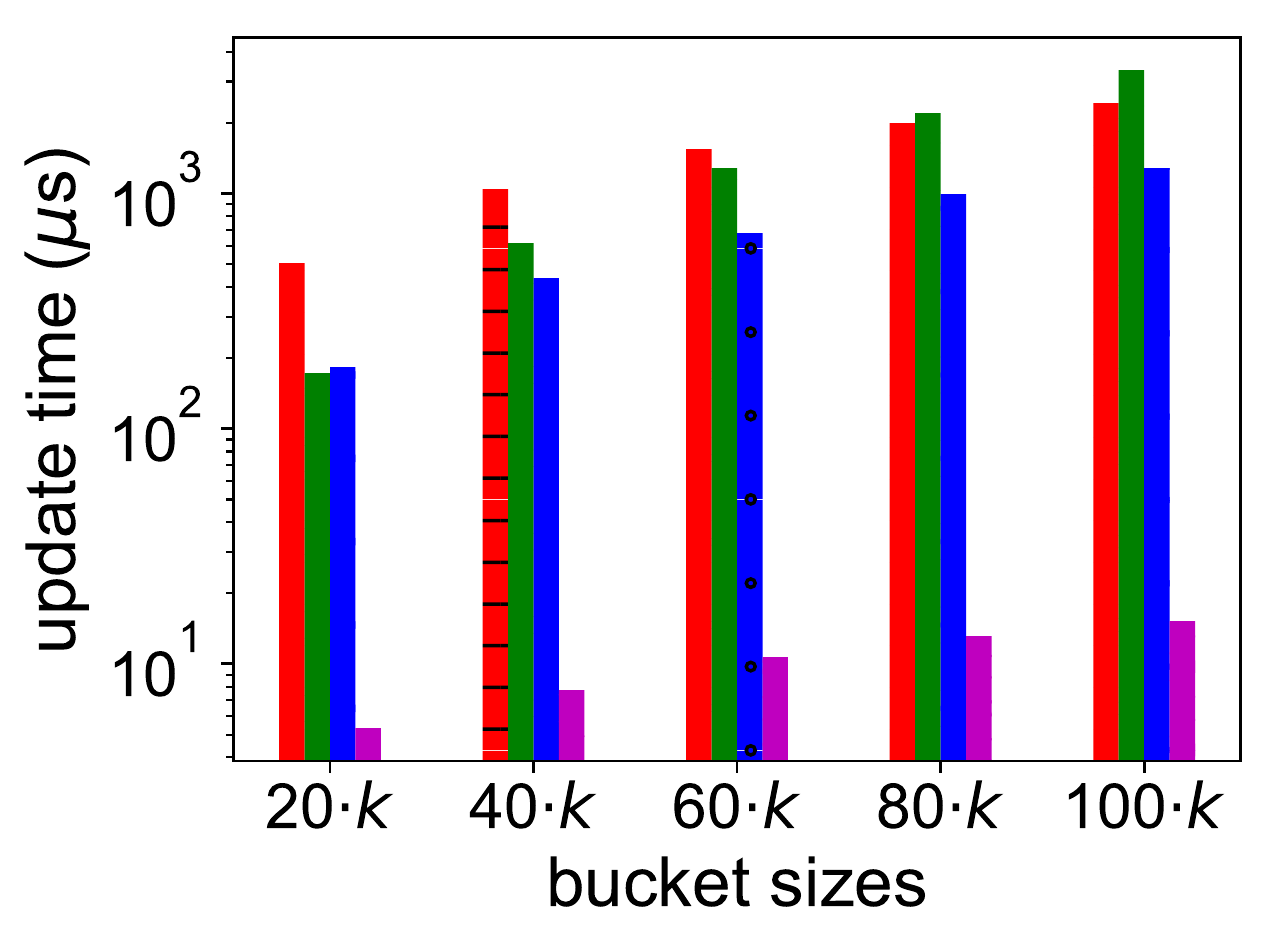}
  }
  \subfloat[\drift]{
  \centering \includegraphics[width=0.23\textwidth]{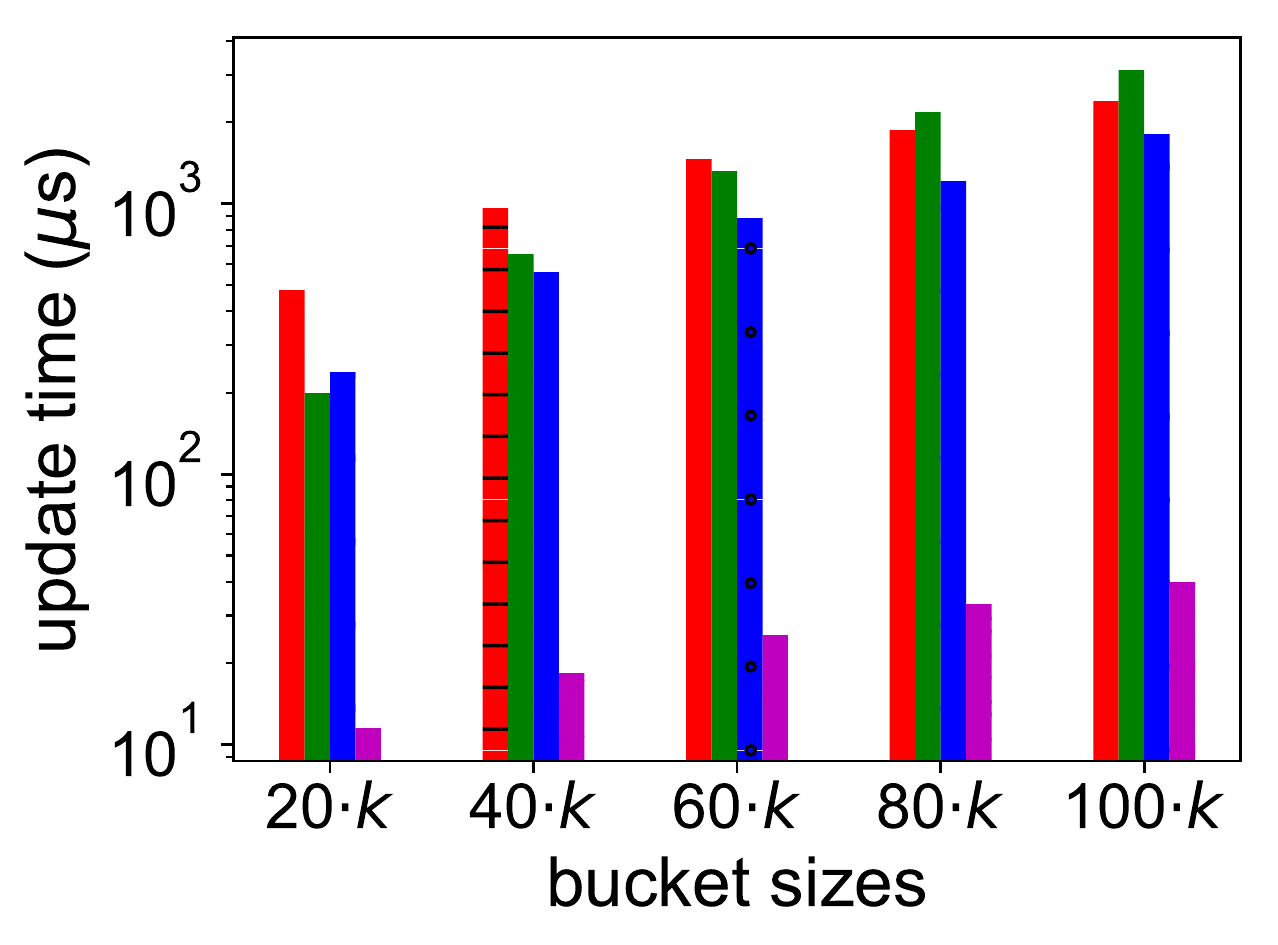}
  }
  \caption{Average runtime per point (microseconds) vs. bucket size $m$. The runtime is sum of the update time and query time, both counted for average per point. The number of clusters $k=30$, query interval $q=100$.  }
 \label{fig:total-versus-m}
\end{figure*}
%-------------------------

% poisson process
%----------------poisson: update time--------------------
\begin{figure*}
  \centering
  \subfloat[\covtype]{
  \centering \includegraphics[width=0.23\textwidth]{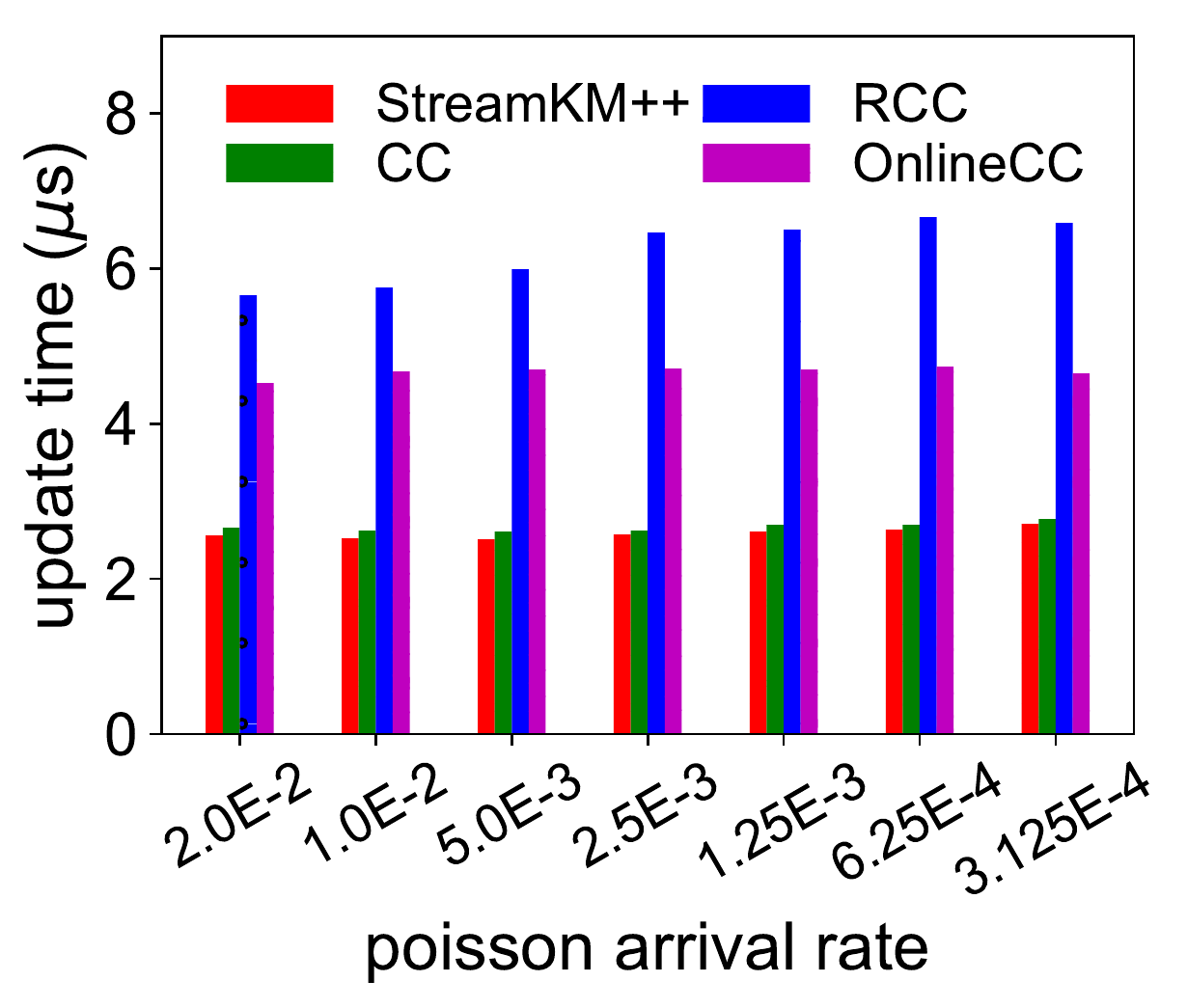}
  }
  \subfloat[\power]{
  \centering \includegraphics[width=0.23\textwidth]{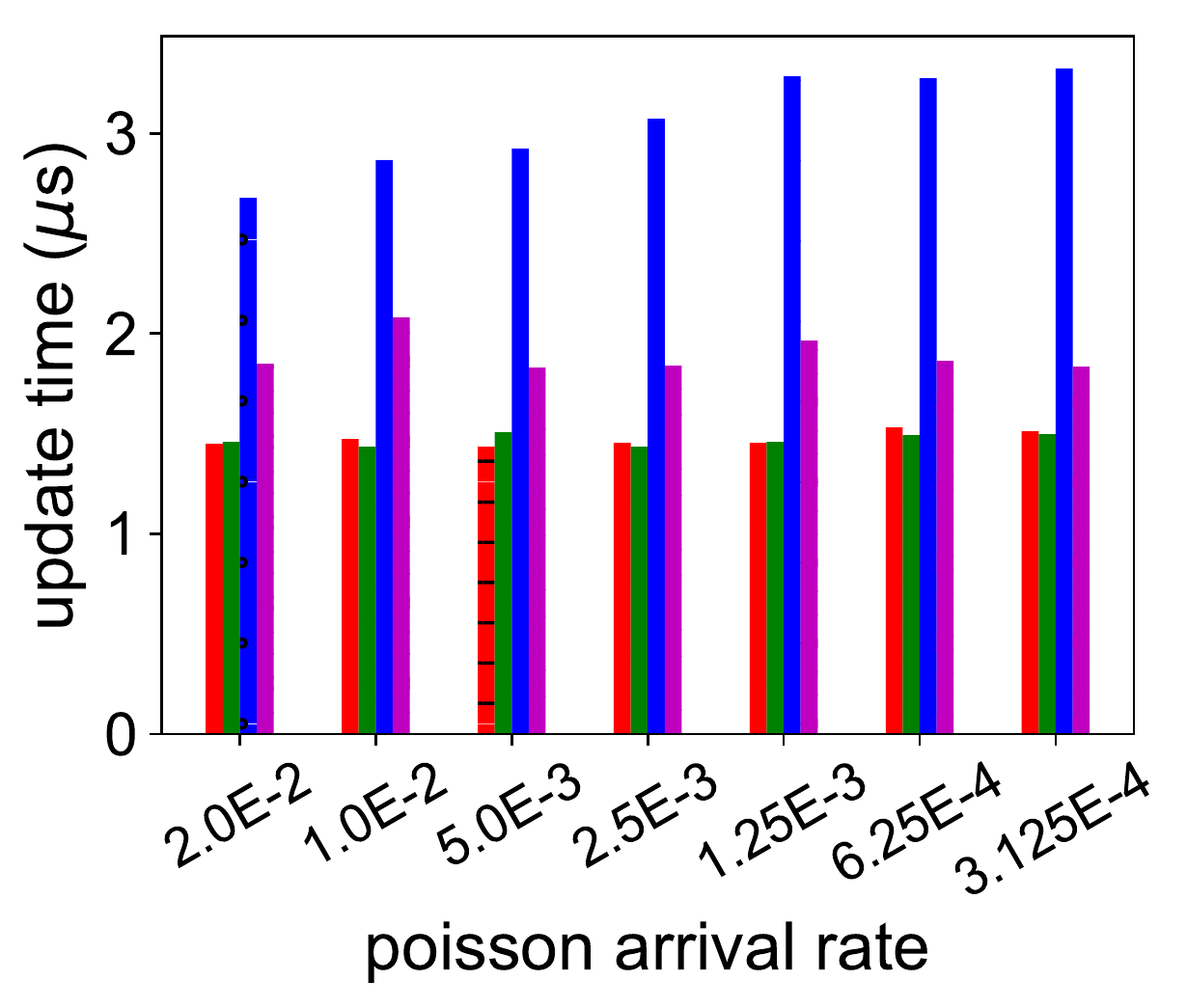}
  }
  \subfloat[\intrusion]{
  \centering \includegraphics[width=0.23\textwidth]{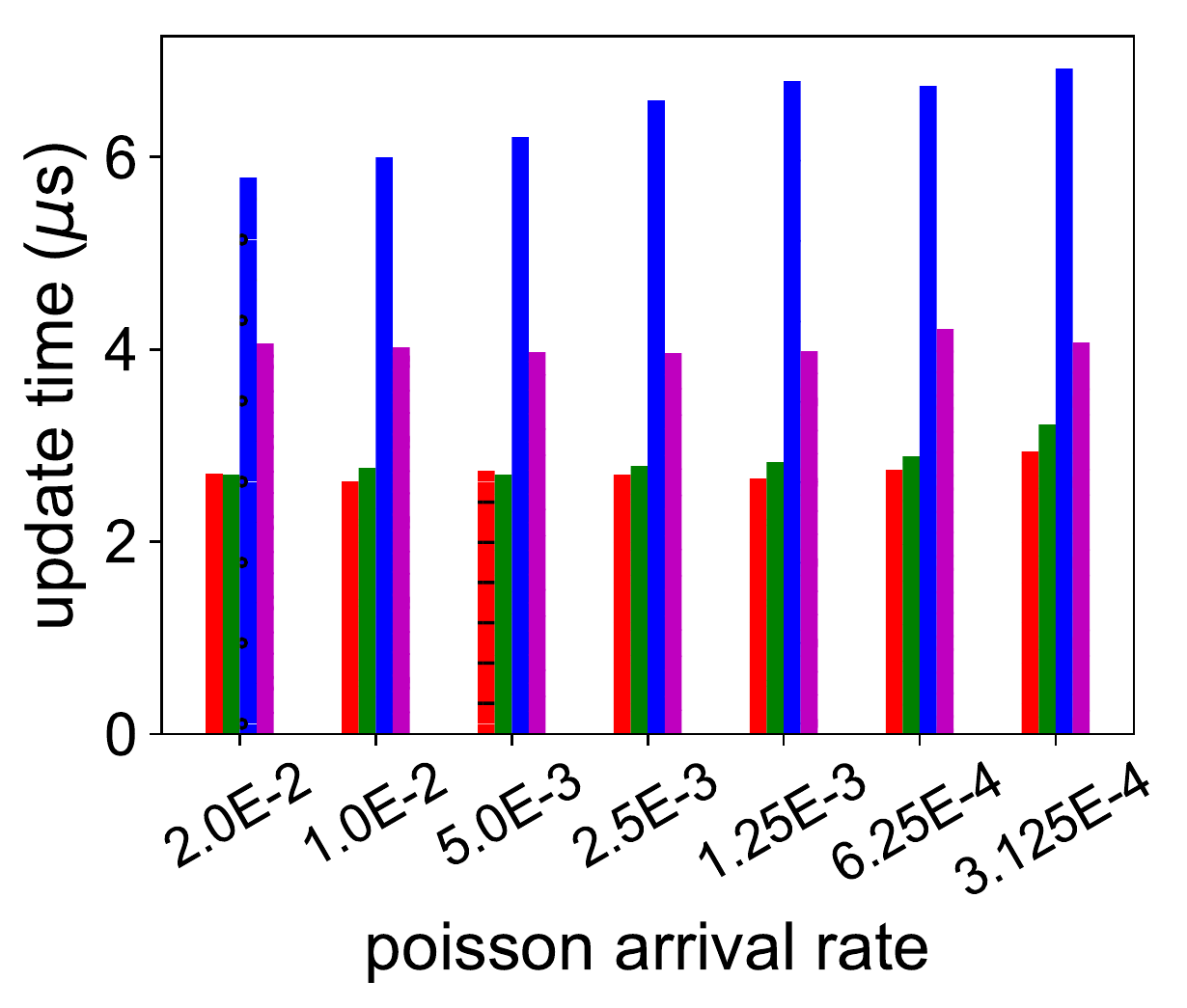}
  }
  \subfloat[\drift]{
  \centering \includegraphics[width=0.23\textwidth]{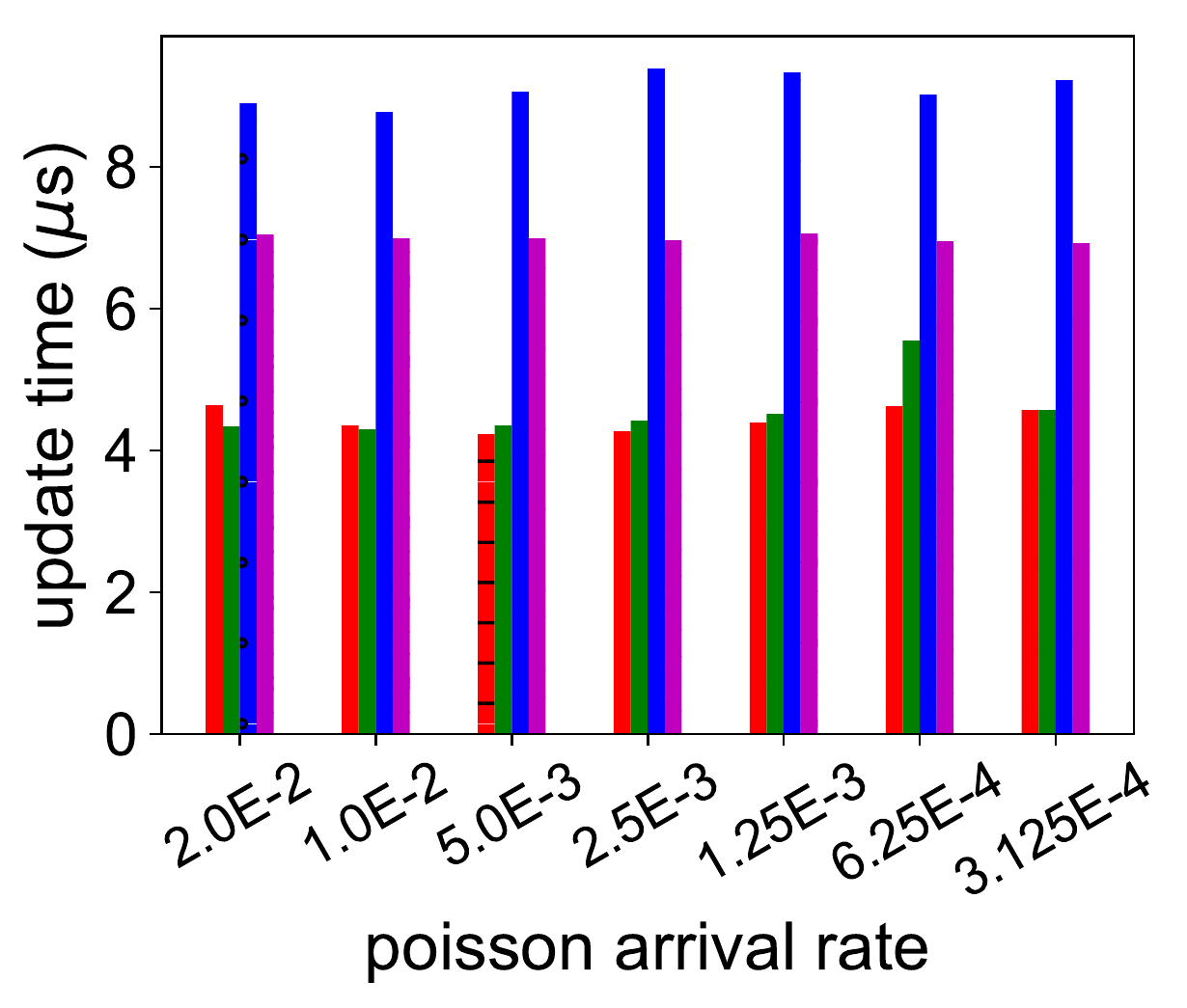}
  }
  \caption{Update time per point (microseconds)  vs. poisson arrival rate $\lambda$. }
 \label{fig:poisson-update}
\end{figure*}
%------------------

%----------------poisson: query time--------------------
\begin{figure*}
  \centering
  \subfloat[\covtype]{
  \centering \includegraphics[width=0.23\textwidth]{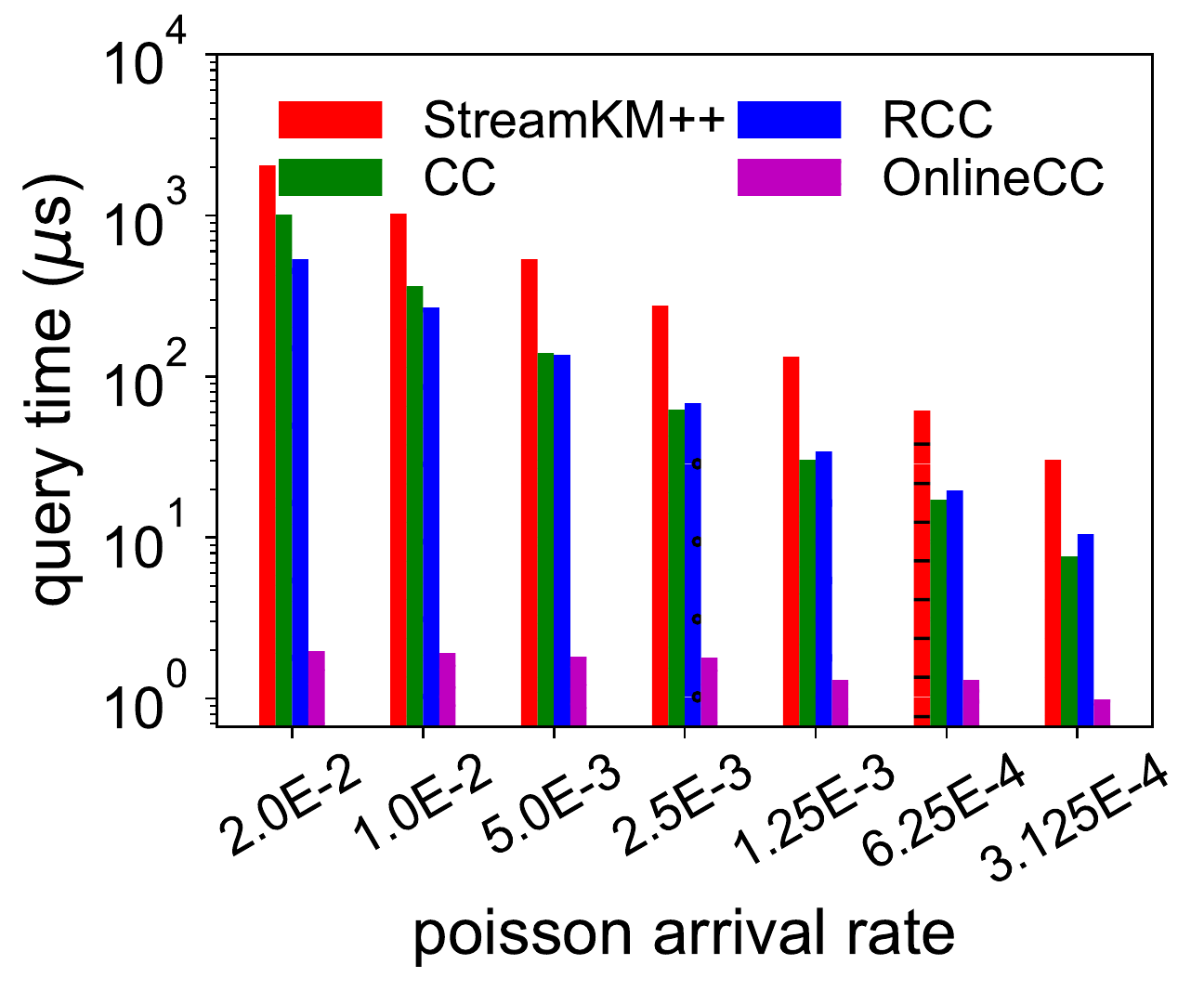}
  }
  \subfloat[\power]{
  \centering \includegraphics[width=0.23\textwidth]{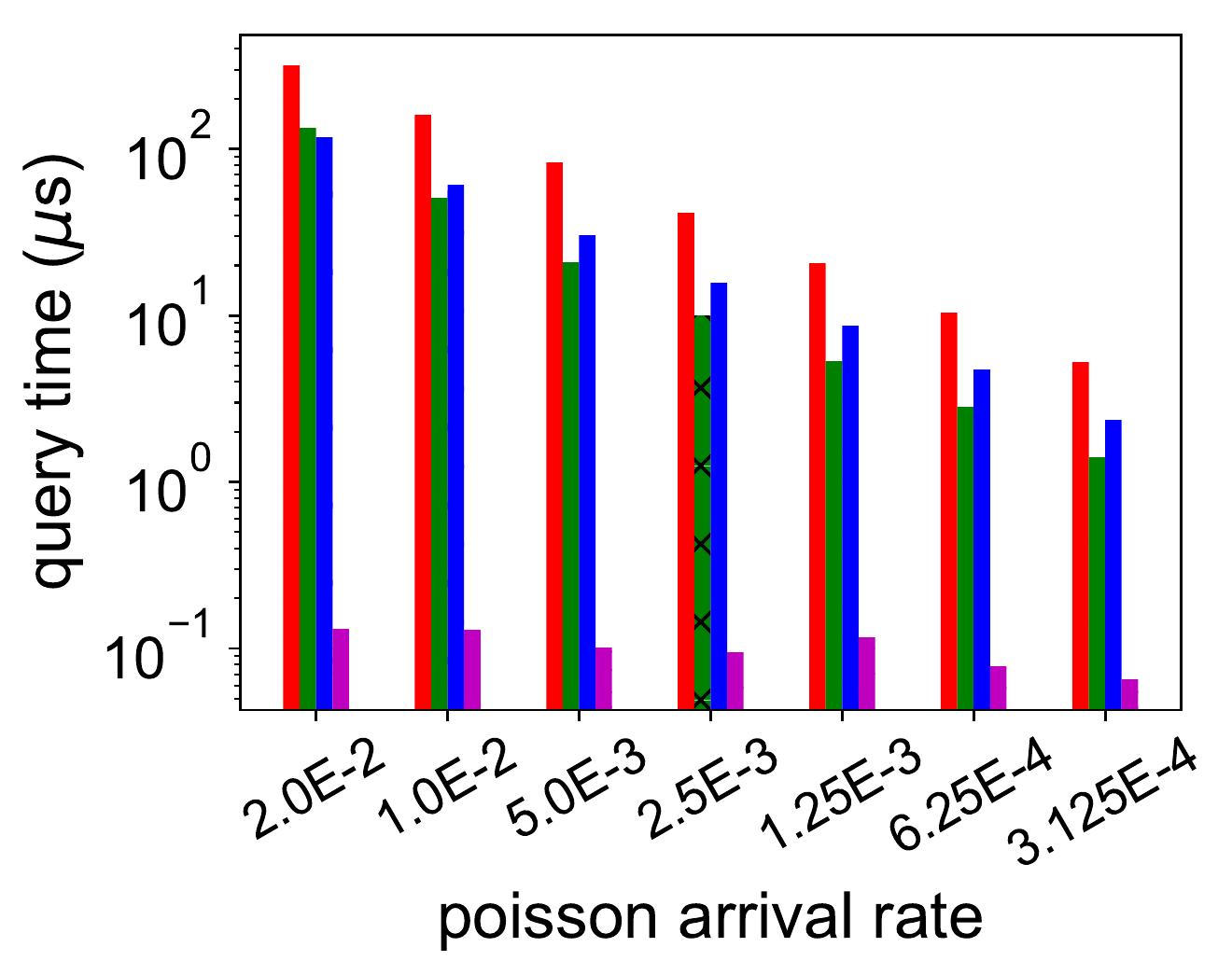}
  }
  \subfloat[\intrusion]{
  \centering \includegraphics[width=0.23\textwidth]{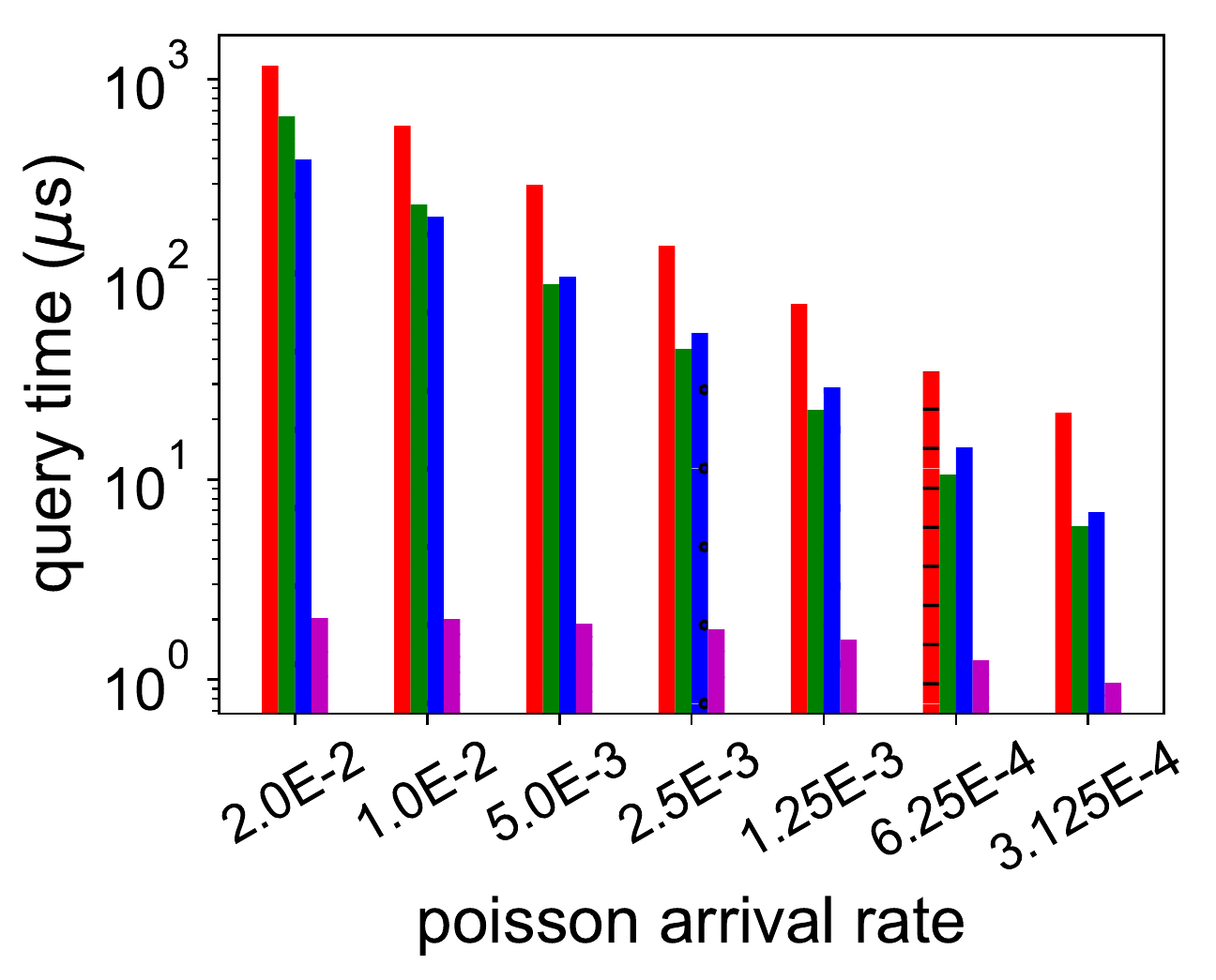}
  }
  \subfloat[\drift]{
  \centering \includegraphics[width=0.23\textwidth]{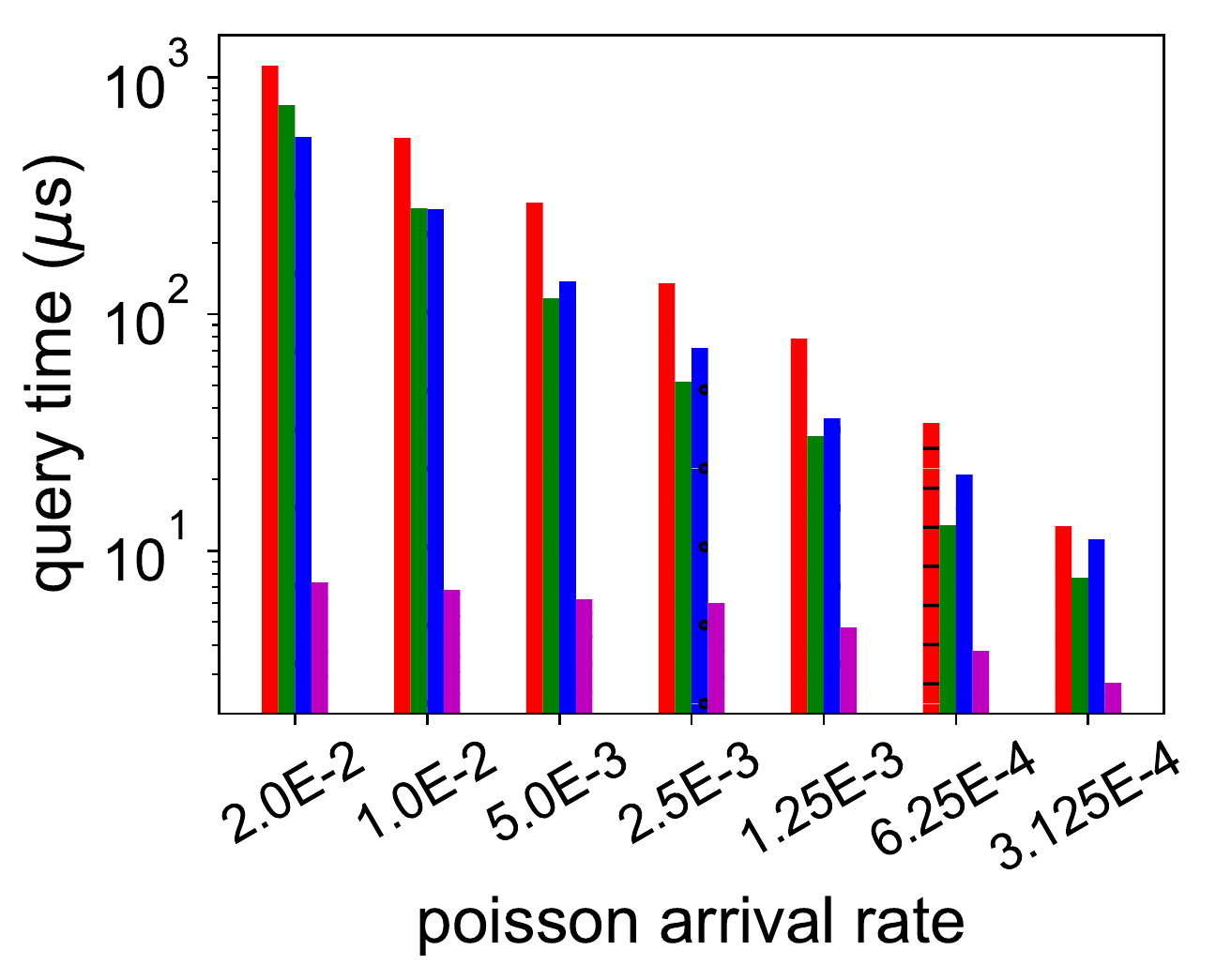}
  }
  \caption{Query time per point (microseconds)  vs. poisson arrival rate $\lambda$.}
 \label{fig:poisson-query}
\end{figure*}
%------------------

%----------------poisson: total time--------------------
\begin{figure*}
  \centering
  \subfloat[\covtype]{
  \centering \includegraphics[width=0.23\textwidth]{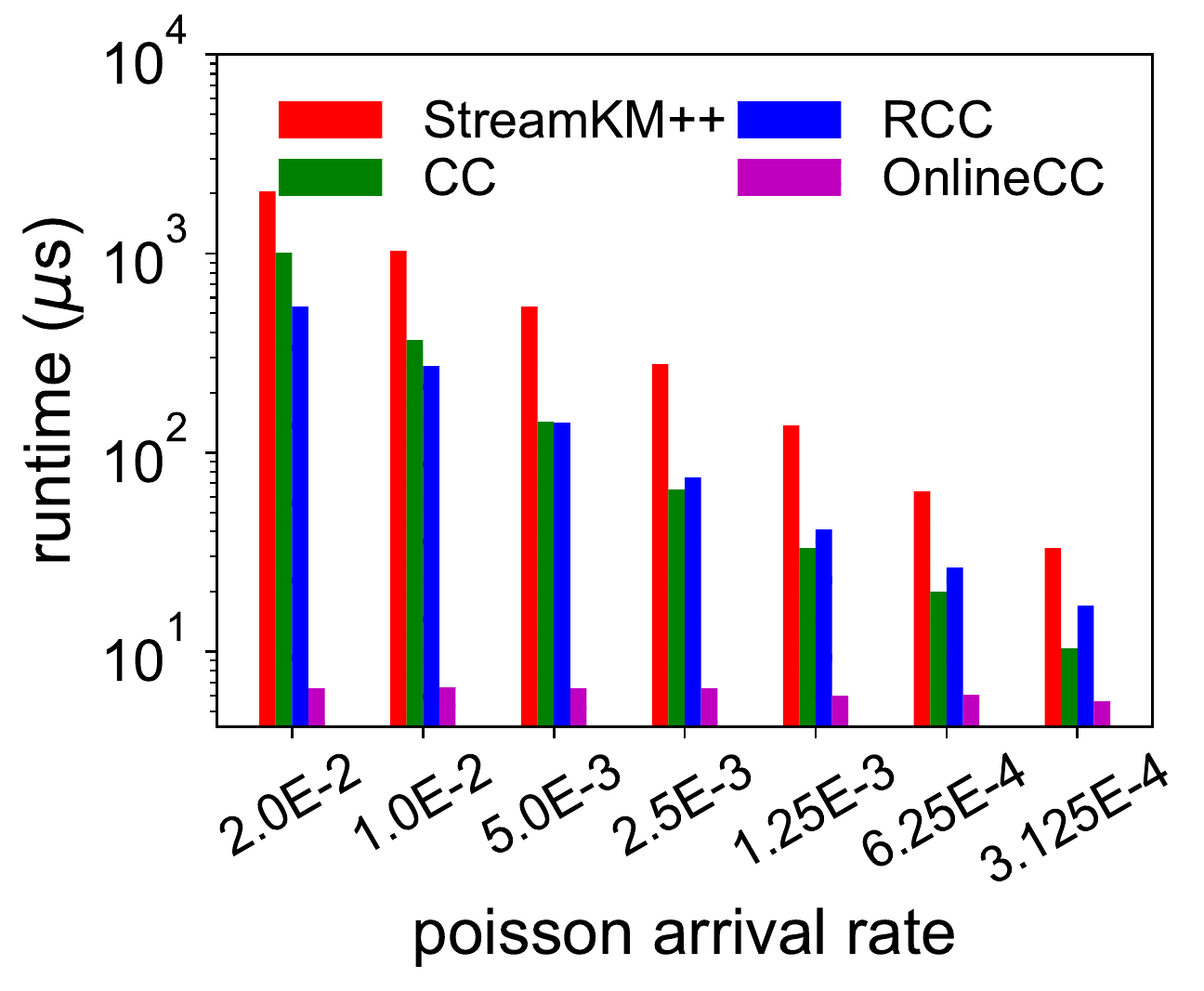}
  }
  \subfloat[\power]{
  \centering \includegraphics[width=0.23\textwidth]{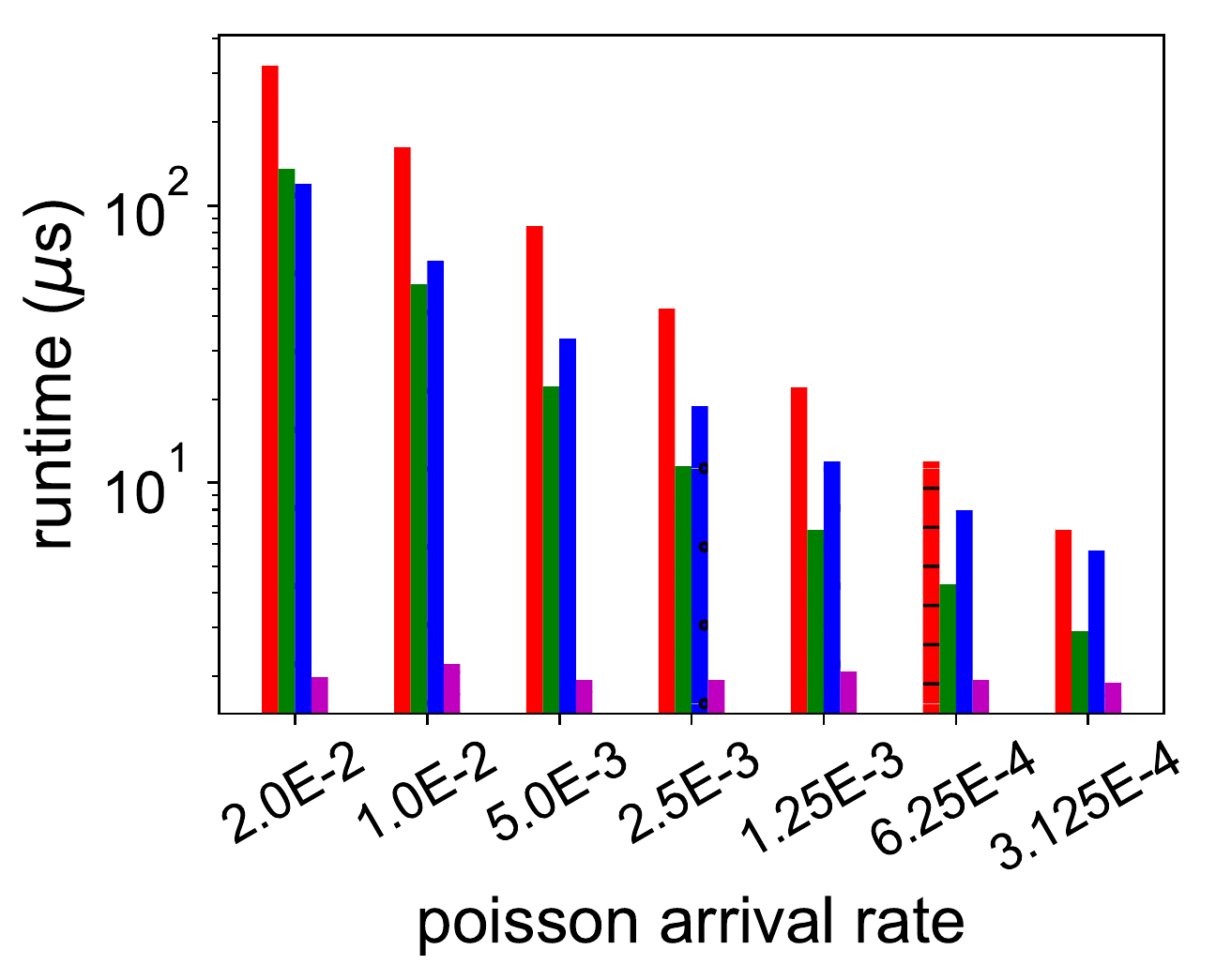}
  }
  \subfloat[\intrusion]{
  \centering \includegraphics[width=0.23\textwidth]{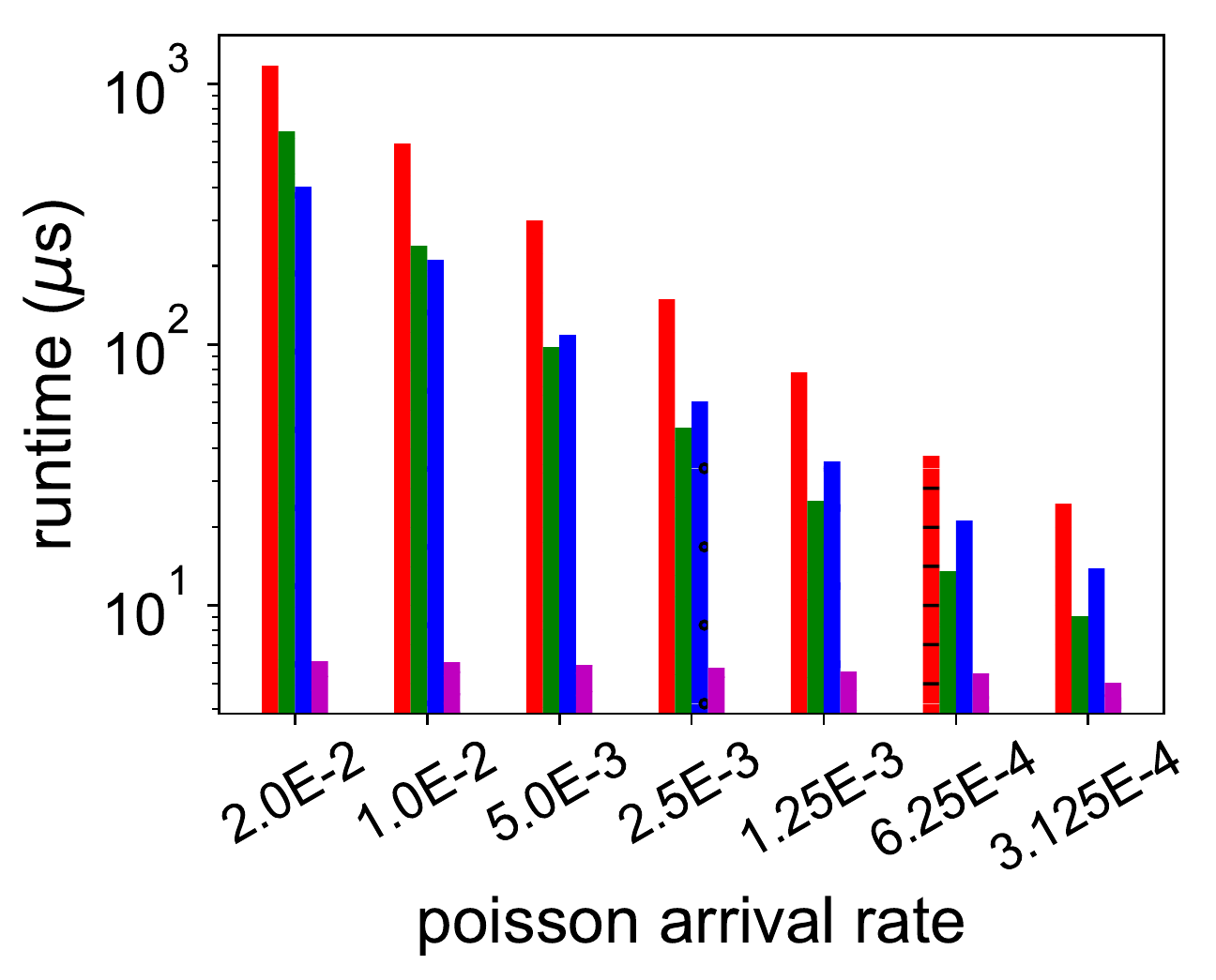}
  }
  \subfloat[\drift]{
  \centering \includegraphics[width=0.23\textwidth]{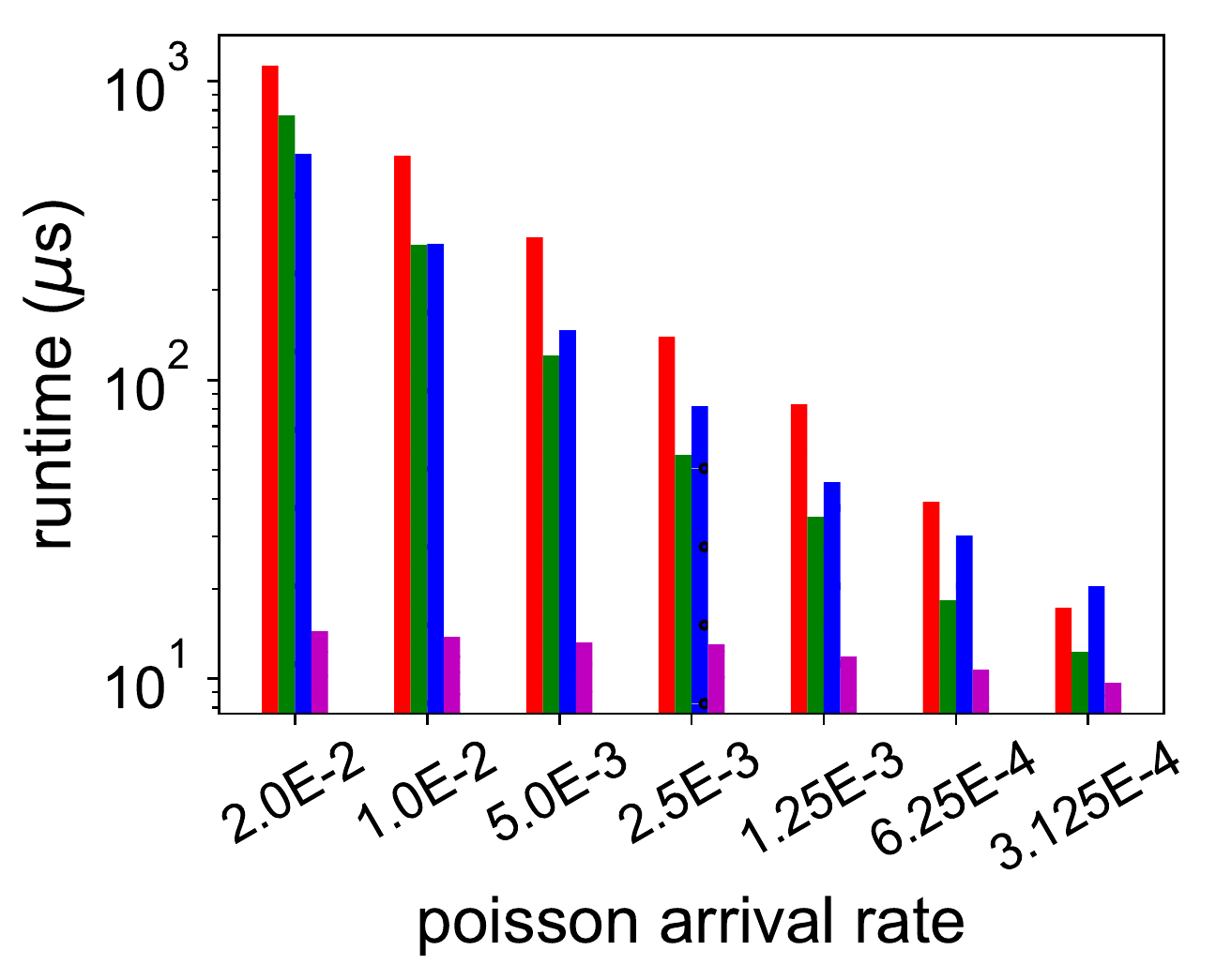}
  }
  \caption{Total time per point (microseconds)  vs. poisson arrival rate $\lambda$.}
 \label{fig:poisson-total}
\end{figure*}
%------------------

% onlinecc vs. threshold (switch threshold)
%%----------------cost vs. threshold--------------------
%\begin{figure*}
%  \centering
%  \subfloat[\covtype]{
%  \centering \includegraphics[width=0.23\textwidth]{expfigs/accuracy_bucketsize/covtype_cost_vs_m.pdf}
%  }
%  \subfloat[\power]{
%  \centering \includegraphics[width=0.23\textwidth]{expfigs/accuracy_bucketsize/power_cost_vs_m.pdf}
%  }
%  \subfloat[\intrusion]{
%  \centering \includegraphics[width=0.23\textwidth]{expfigs/accuracy_bucketsize/intrusion_cost_vs_m.pdf}
%  }
%  \subfloat[\drift]{
%  \centering \includegraphics[width=0.23\textwidth]{expfigs/accuracy_bucketsize/synthetic_cost_vs_m.pdf}
%  }
%  \caption{\km cost vs. bucket size $m$. The cost is computed at the end of observing all the points. The number of clusters $k=30$, query interval $q=100$.  }
% \label{fig:cost-versus-m}
%\end{figure*}
%%----------------

%----------------time vs. threshold--------------------
\begin{figure*}
  \centering
  \subfloat[\covtype]{
  \centering \includegraphics[width=0.23\textwidth]{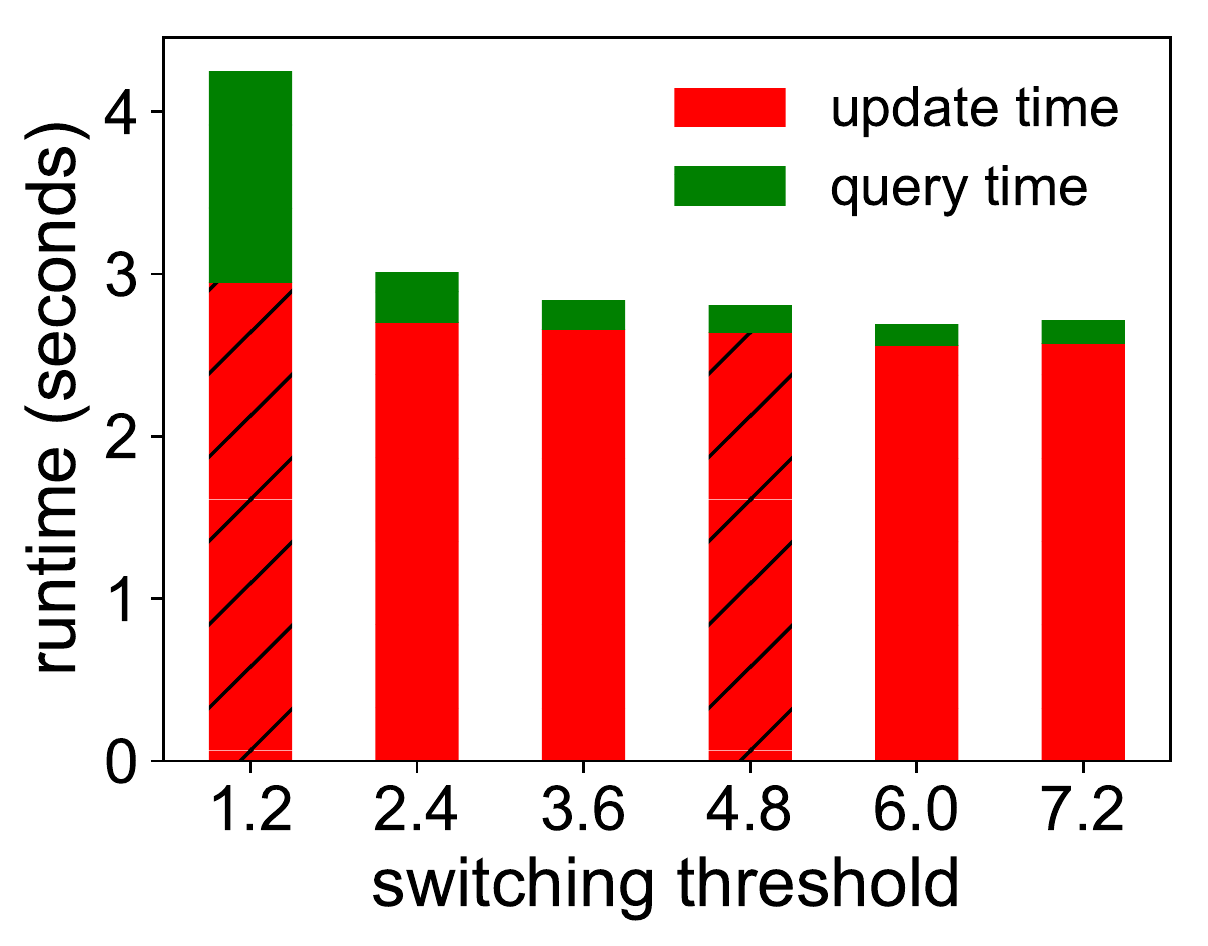}
  }
  \subfloat[\power]{
  \centering \includegraphics[width=0.23\textwidth]{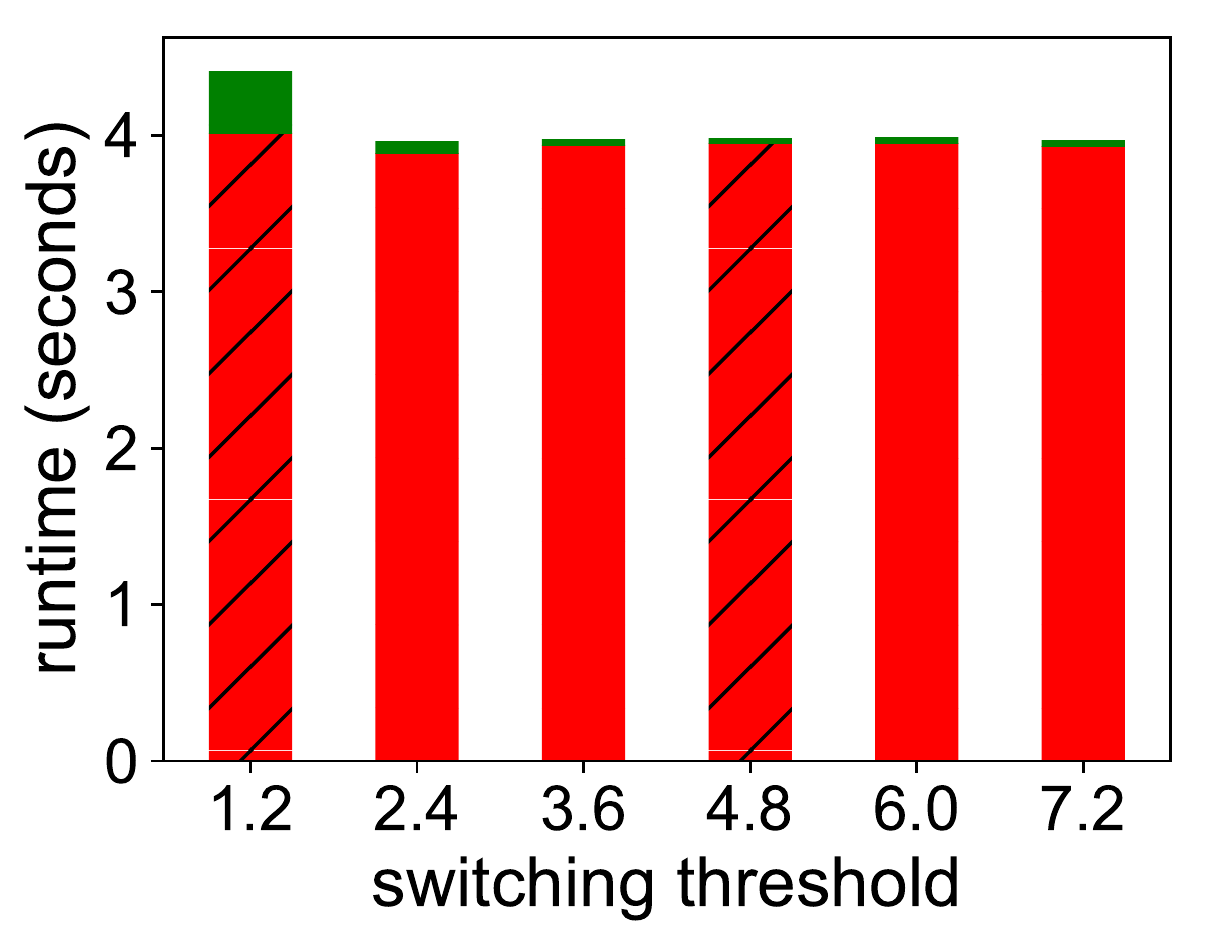}
  }
  \subfloat[\intrusion]{
  \centering \includegraphics[width=0.23\textwidth]{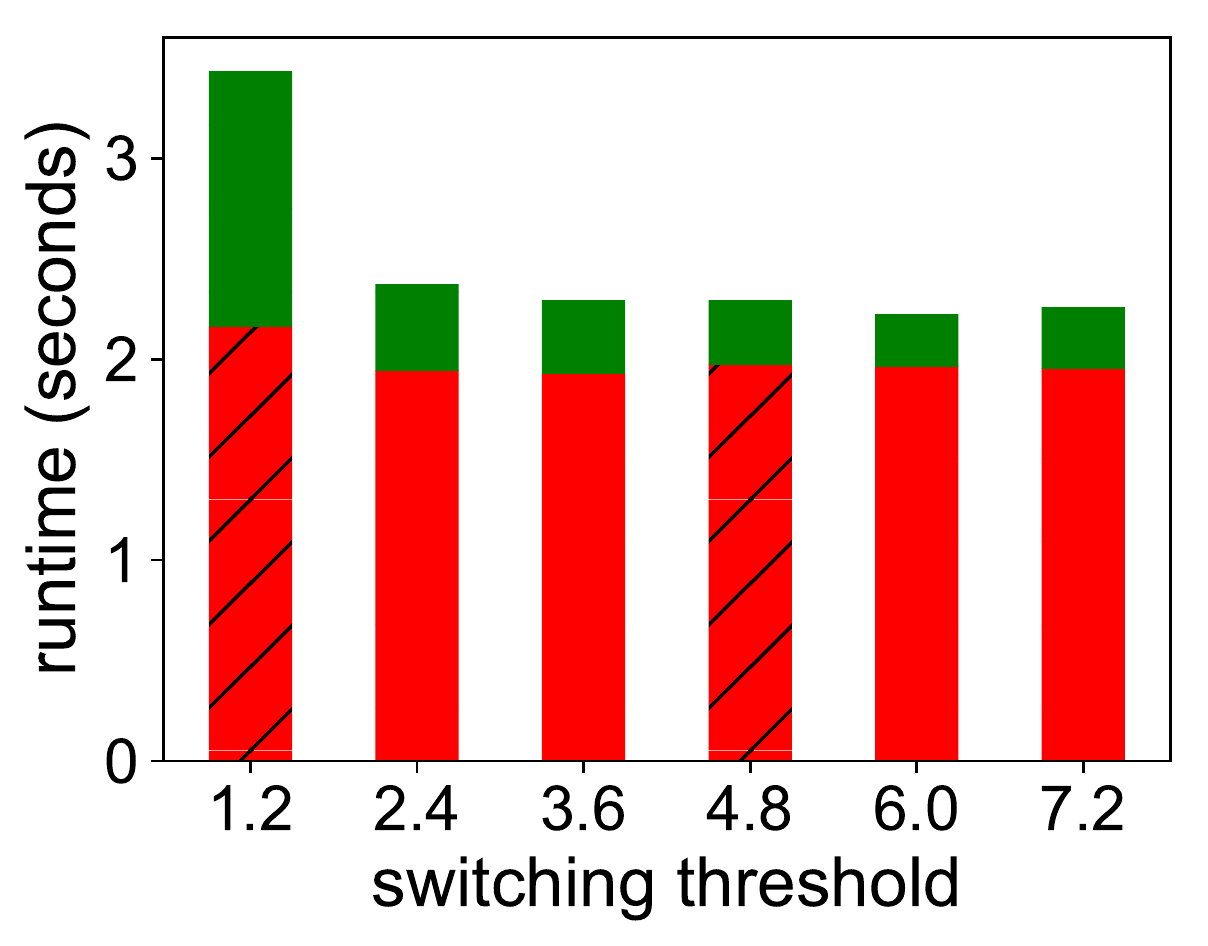}
  }
  \subfloat[\drift]{
  \centering \includegraphics[width=0.23\textwidth]{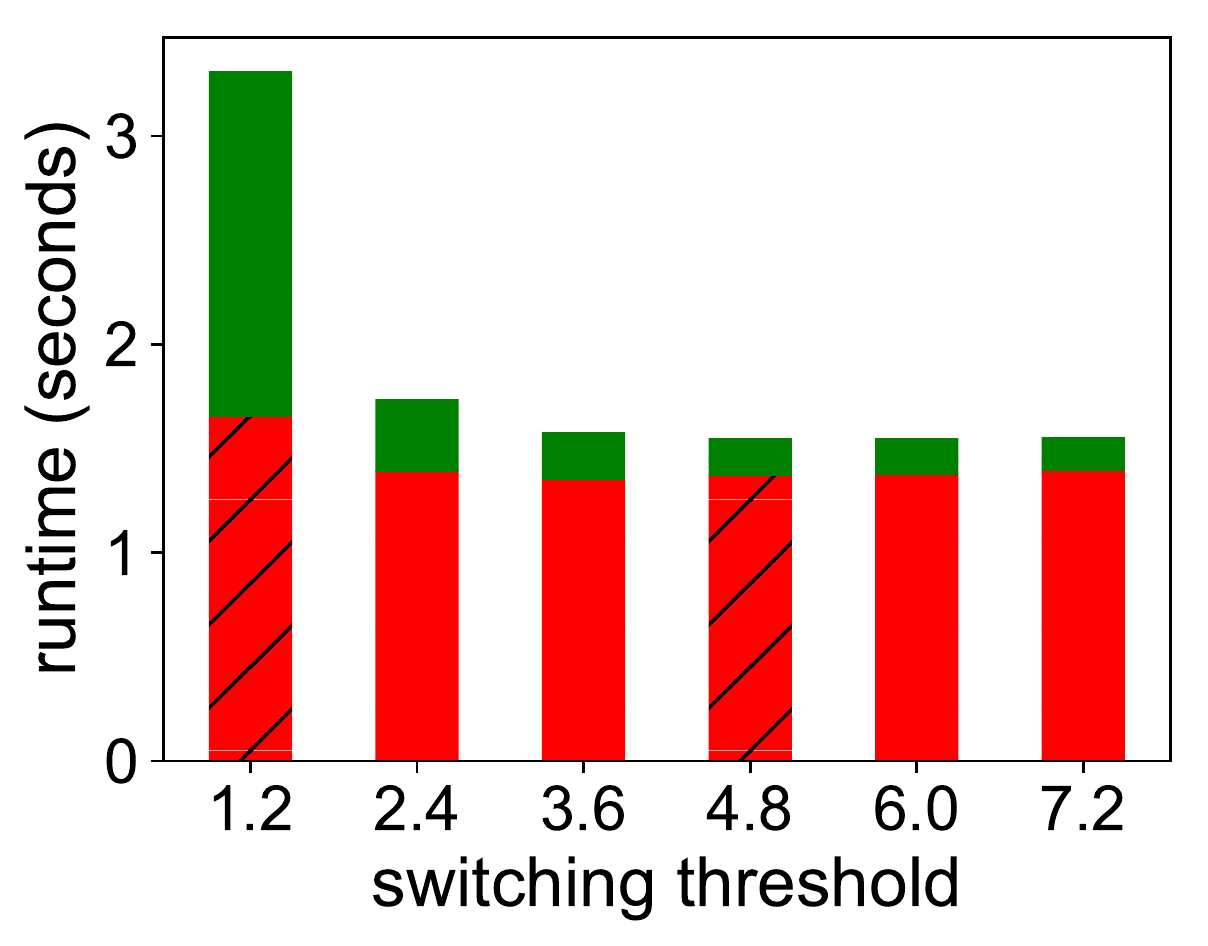}
  }
  \caption{Total runtime (seconds)  vs. switch threshold $\alpha$ in \hybrid algorithm. 
  The number of clusters $k=30$, query interval $q=100$.  
  The update and query time are both counted for the whole stream instead of per point.}
 \label{fig:time-versus-threshold}
\end{figure*}
%----------------

% memory cost
%-------------
\begin{table*}
{ 
\small
\centering
\begin{tabular}
{|c|c c c c| c c c c|}
\hline
\multirow{2}{*}{Dataset} & \multicolumn{4}{ c| }{Memory cost in points}  & \multicolumn{4}{ c| }{Memory cost in Megabytes (MB)} \\  \cline{2-9}
& \skmpp & \cc & \rcc & \hybrid  & \skmpp & \cc & \rcc & \hybrid \\  \hline
\covtype  & 5950  & 11350  & 36550  & 11380  & 2.57 & 4.90  & 15.78  & 4.92  \\   \hline
\power  & 7150  & 13750  & 68950  & 13780  & 0.40  & 0.77  & 3.86  & 0.77  \\   \hline
\intrusion  & 5950  & 11350  & 32950  & 11380  & 1.62 & 3.09  & 8.96  & 3.10  \\   \hline
\drift  & 5350  & 10150  & 20950  & 10180 & 2.91 & 5.52  & 11.40  & 5.54   \\   \hline
\end{tabular}
\smallskip
\caption{Memory cost of algorithms, number of clusters $k$ is set to $30$ .}
\label{table:memory-cost}
}
\end{table*}
%------------

%-----------------------------------------------------
\subsection{Discussion of Experimental Results}
%----------------------------------------------------- 
% cost vs. k
{\bf Accuracy (\km cost) vs. $k$:} Figures~\ref{fig:cost-versus-k} 
shows the result of \km cost under different number of clusters $k$. 
Note that for the \intrusion data, the result of \seqkm is not shown 
since the cost is much larger (by a factor of about ${10}^4$) than 
the other algorithms. Not surprisingly, for all the algorithms studied, 
the clustering cost decreases with $k$. 
For all the datasets, \seqkm always achieves distinct higher \km cost than other algorithms. 
This shows that \seqkm is consistently worse than the other algorithms, 
when it comes to clustering accuracy---this is as expected, since
unlike the other algorithms, \seqkm does not have a theoretical guarantee on
clustering quality.

The other algorithms, \skmpp, \cc, \rcc, and \hybrid all achieve very similar
clustering cost, on all datasets. In Figure~\ref{fig:cost-versus-k}, we also
show the cost of running a batch algorithm \kmpp (followed by Lloyd's iterations). 
We found that the clustering costs of the streaming algorithms are nearly 
the same as that of running the batch algorithm, which can see the input all at once! 
Indeed, we cannot expect the streaming clustering algorithms to perform any better than this.

Theoretically, it was shown that clustering accuracy improves with the merge
degree.  Hence, \rcc should have the best clustering accuracy (lowest
clustering cost). But we do not observe such behaviors experimentally; 
\rcc and \skmpp show similar accuracy. However their accuracy matches that of batch \kmpp. 
A possible reason for this may be that our theoretical analysis of streaming
clustering methods is too conservative. 

\noindent{\bf Total Runtime vs. Query Interval:} 
We next consider the effect of different query intervals on the runtime. 
Figure~\ref{fig:time-versus-q} shows the total runtime throughout the 
whole stream as a function of the query interval $q$. 
We note that the total time for \hybrid is consistently the smallest, and does not
change with an increase in $q$. This is because \hybrid essentially maintains
the cluster centers on a continuous basis, while occasionally falling back to
\cc to recompute coresets, to improve its accuracy. For the other algorithms,
including \cc, \rcc, and \skmpp, the query time and the total time decrease as
$q$ increases (and queries become less frequent).
\cc and \rcc have similar total runtime and achieve half of the runtime of \skmpp, 
by using the cache. 
All the algorithms converge their total runtime when the queries are very less frequent,
that $q$ is more than $1600$ points.

% metrics vs. bucketsize
\noindent\textbf{Metrics vs. Bucket Size:} 
We measure the performance of algorithms under different bucket sizes. 
The bucket size ranges from $20 \cdot k$ to $100 \cdot k$, where $k$ is 
the number of clusters and set to $30$.
Figure~\ref{fig:cost-versus-m} compares the \km cost of different algorithms.
The accuracy is similar with different bucket sizes, even though in theory, it should have better accuracy with larger bucket sizes. This observation matches the results in \cite{AMR+12}, that for most cases in practice, bucket size of $20 \cdot k$ is a good number for \skmpp on clustering accuracy. For our algorithms with coreset caching, the same parameter setting on bucket size applies. 
%Figure~\ref{fig:update-versus-m}, ~\ref{fig:query-versus-m} and ~\ref{fig:total-versus-m} show the average update time, query time and total time per point respectively. 

Figure~\ref{fig:total-versus-m} shows the result of average run time per point, which is the sum of both update and query time. 
Our first observation is that all the timing results are increasing with the bucket size, as both the update and query time are proportional to the bucket size. We also note when bucket size increases over $80 \cdot k$, the query time of \cc exceeds the query time of \skmpp. The reason is when bucket size increases, the number of buckets received in total becomes smaller and in turn the depth of the coreset tree becomes shorter. Thus for \skmpp, the number of buckets to merge during the query is trivially different than using the cache. Comparing to \skmpp, as \cc uses additional time on inserting new coreset to the cache (line $17$ in Algorithm~\ref{algo:cctree-coreset}), the query time of \cc exceeds \skmpp when bucket size is large. 

% poisson process
\noindent\textbf{Queries in Poisson Process:}
We consider the queries arrive in a poisson process instead of the query interval is in the 
fixed number of points. The average update time per point, query time per point and total 
are shown in Figure~\ref{fig:poisson-update},~\ref{fig:poisson-query} 
and~\ref{fig:poisson-total} respectively, with different value of arrival rate. 
Note that the higher value of arrival rate, means the less frequent queries. 
The update time does not have a changing trend with the increasing value of arrival rate, 
as changing query arrival rate only affects the query process. For all the algorithms,
the query time per point drops down with lower arrival rate, as the less frequent queries. 
Comparing different algorithms, \skmpp uses most query time without caching. 
Under high arrival rate $0.02$ such that the average query interval is $50$ points, 
the query time of \rcc is less than \cc. When the arrival rate decreases, \cc has lower 
query time.  The reason is as follows: generally \rcc needs to merge multiple levels of 
coresets comparing to \cc, which only needs to merge one coreset from coreset tree and 
the other coreset in the cache. However, as \rcc applies multiple levels of caches, the chance that 
successfully finding the target coreset in the cache is much higher than \cc. Thus, when queries 
becomes very frequent, with the help of multiple level of caching, the query time of \rcc is faster than \cc.  
Like what we observed in previous experiments, \hybrid achieves the furthest time in query 
due to the nature of online cluster centers maintenance.
As the query time dominates than the update time, the total runtime per point shown in 
Figure~\ref{fig:poisson-total}, which is summation of the two, has similar trend as query time per point.

% onlinecc metrics vs. threshold
\noindent\textbf{Switching Threshold of \hybrid:}
We consider the impact of switching threshold parameter to the \hybrid algorithm.
The runtime throughout the whole stream is shown in Figure~\ref{fig:time-versus-threshold}. 
From the plot we first observe that runtime decreases with higher value of switching threshold, 
which indicates the looser requirement on the clustering accuracy. We also notice that the runtime 
drops dramatically when changing from $1.2$ to $2.4$, approximately $3$ to $5$ times. 
But much less decrease when the threshold increases further. Thus, the ideal switching threshold value 
for \hybrid algorithm is $2$ to $4$ if it has already fulfilled the requirement on accuracy.

% memory space
\noindent\textbf{Memory Usage:} Finally, we report the memory cost in
Table~\ref{table:memory-cost} using $k=30$; the trends are identical for other
values of $k$. Evidently, \skmpp uses the least memory since it only maintains
the coreset tree. Because it also maintains a coreset cache, \cc requires
additional memory. Even then, \cc's memory cost is less than 2x that of
\skmpp. The memory cost of \hybrid is similar to \cc while \rcc has the highest
memory cost.  This shows that the marked improvements in speed requires only a
modest increase in the memory requirement, making the proposed algorithms
practical and appealing.

\section{Conclusion}
\label{sec:concl}
We presented new streaming algorithms for $k$-means clustering.  Compared to
prior methods, our algorithms significantly speedup query processing, while
offering provable guarantees on accuracy and memory cost.  The general framework
of ``coreset caching'' may be applicable to other streaming algorithms built
around the Bentley-Saxe decomposition.  For instance, applying it to streaming
$k$-median seems natural. Many other open questions remain, including
(1)~improved handling of concept drift, through the use of time-decaying
weights, and (2)~clustering on distributed and parallel streams.

%%% Local Variables:
%%% mode: latex
%%% TeX-master: "kmeans"
%%% End:

% The Computer Society usually uses the plural form
\section*{Acknowledgments}
This work is supported in part by the US National Science Foundation through awards 1527541 and 1632116.

% **************** Bibliography *******************
%\begin{small}
\bibliographystyle{IEEEtran}
  %\balance
\bibliography{IEEEtran/IEEEabrv,stream-clustering}
%\end{small}

%\appendices
%\section{Proof of the First Zonklar Equation}
%Appendix one text goes here.
%
%% you can choose not to have a title for an appendix
%% if you want by leaving the argument blank
%\section{}
%Appendix two text goes here.

% ***************** Biography *****************
\begin{IEEEbiography}[{\includegraphics[width=1in,height=1.25in,clip,keepaspectratio]{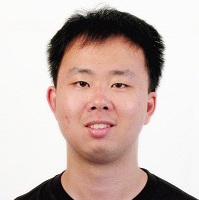}}]{Yu Zhang}
Yu Zhang is a Ph.D. student in the Department of Electrical and Computer Engineering at Iowa State University. He received his M.S. in Computer Engineering from University of Central Florida in 2012, and his B.S. in Electrical Engineering from University of Science and Technology of China in 2011. He is interested in data stream mining and algorithm design for machine learning.
\end{IEEEbiography}

\begin{IEEEbiography}[{\includegraphics[width=1in,height=1.25in,clip,keepaspectratio]{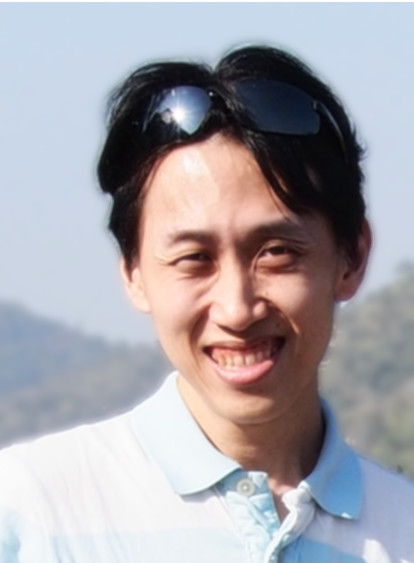}}]{Kanat Tangwongsan}
Kanat Tangwongsan is a computer scientist on the faculty of Mahidol University International College. He received his Ph.D. and B.S. in Computer Science from Carnegie Mellon University in 2010 and 2006. He worked at IBM T.J. Watson Research Center as a research staff member from 2010 to 2015. His current research interests are parallel algorithms design for massive data, both in theory and practice.
\end{IEEEbiography}

\begin{IEEEbiography}[{\includegraphics[width=1in,height=1.25in,clip,keepaspectratio]{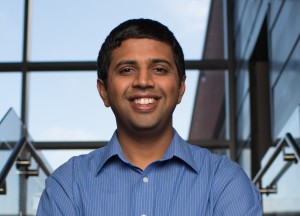}}]{Srikanta Tirthapura}
Srikanta Tirthapura is a Professor in the department of Electrical and Computer Engineering at Iowa State University. He received his Ph.D. in Computer Science from Brown University in 2002, and his B.Tech. in Computer Science and Engineering from IIT Madras in 1996. His research interests are algorithm design for big data, including data stream algorithms and parallel and distributed algorithms. He is a recipient of the IBM Faculty Award, and the Warren Boast Award for excellence in Undergraduate Teaching.
\end{IEEEbiography}

\end{document}